\def\alphspnewtheorem{\@ifstar{\@sthm}{\@alphSthm}}
\def\@alphthmcountersep{}
\def\@alphthmcounter#1{\noexpand\alph{#1}}
\def\@alphspnthm#1#2{%
  \@ifnextchar[{\@alphspxnthm{#1}{#2}}{\@alphspynthm{#1}{#2}}}
\def\@alphSthm#1{\@ifnextchar[{\@spothm{#1}}{\@alphspnthm{#1}}}
\def\@alphspxnthm#1#2[#3]#4#5{\expandafter\@ifdefinable\csname #1\endcsname
   {\@definecounter{#1}\@addtoreset{#1}{#3}%
   \expandafter\xdef\csname the#1\endcsname{\expandafter\noexpand
     \csname the#3\endcsname \noexpand\@alphthmcountersep \@alphthmcounter{#1}}%
   \expandafter\xdef\csname #1name\endcsname{#2}%
   \global\@namedef{#1}{\@spthm{#1}{\csname #1name\endcsname}{#4}{#5}}%
                              \global\@namedef{end#1}{\@endtheorem}}}
\def\@alphspynthm#1#2#3#4{\expandafter\@ifdefinable\csname #1\endcsname
   {\@definecounter{#1}%
   \expandafter\xdef\csname the#1\endcsname{\@alphthmcounter{#1}}%
   \expandafter\xdef\csname #1name\endcsname{#2}%
   \global\@namedef{#1}{\@spthm{#1}{\csname #1name\endcsname}{#3}{#4}}%
                               \global\@namedef{end#1}{\@endtheorem}}}
\newcommand{\textcent}{\rlap/{\textrm{c}}}
\newcommand{\COMMENT}[3]{
{\sf
\begin{description}
\item[#1 #2] #3 
\end{description}\noindent
}}
\renewcommand{\COMMENT}[3]{\ignorespaces}
\newcommand{\IR}[1]{\text{\normalfont\sffamily\small#1}} %% "I"nference "R"ule
\def\void{}
\newcommand{\infrule}[3][\void]{%
  {\renewcommand\arraystretch{1.25}
    \ifx\void#1\else#1\hspace{0.5em}\fi
    \begin{array}[c]{@{\hspace*{1em}}c@{\hspace*{1em}}}#2\\\hline #3
    \end{array}}}
\newcommand{\op}[1]{\mathsf{{#1}}}
\newcommand{\dom}{\mathrm{dom}}
\newcommand{\vars}{\mathrm{vars}}
\newcommand{\sgi}{\mathrm{sgi}}
\newcommand{\smgi}{\mathrm{smgi}}
\newcommand{\nojunk}{\mathrm{nojunk}}
\newcommand{\calD}{\mathcal{D}}
\newcommand{\pre}{\mathrm{pre}}
\newcommand{\back}{\mathrm{B}}
\newcommand{\fore}{\mathrm{F}}
\newcommand{\Base}{{\ensuremath{\mathrm{Base}}}}
\newcommand{\beagle}{\textit{Beagle}\xspace}
\let\oldvdash\vdash
\def\vdash{\:\oldvdash\:}
\let\oldcup\cup
\def\cup{\,\oldcup\,}
\let\oldcap\cap
\def\cap{\:\oldcap\:}
\newcommand{\cX}{\mathcal{X}}
\newcommand{\GndTh}{\mathrm{GndTh}}
\newcommand{\abstr}{\mathrm{abstr}}
\newcommand{\unabstr}{\mathrm{unabstr}}
\newcommand{\SP}{\mathit{SP}}
\newcommand{\SSP}{\mathrm{SSP}}
\newcommand{\HSP}{\mathrm{HSP}}
\newcommand{\TSigma}{\mathrm{T}_\Sigma}
\newcommand{\BAlgs}{\mathcal{B}}
\newcommand{\JAlgs}{\mathcal{J}}
\newcommand{\rstr}[1]{{|_{#1}}}
\newcommand{\gtc}{\succ^\mathrm{c}}
\newcommand{\Inf}{\mathcal{I}}
\newcommand{\Red}{\mathcal{R}}
\newcommand{\RInf}{\Red_\mathrm{Inf}}
\newcommand{\RCl}{\Red_\mathrm{Cl}}
\newcommand{\RedFlt}{\Red^\mathcal{S}}
\newcommand{\RInfFlt}{\Red_\mathrm{Inf}^\mathcal{S}}
\newcommand{\RClFlt}{\Red_\mathrm{Cl}^\mathcal{S}}
\newcommand{\RedSgi}{\Red^\mathcal{H}}
\newcommand{\RInfSgi}{\Red_\mathrm{Inf}^\mathcal{H}}
\newcommand{\RClSgi}{\Red_\mathrm{Cl}^\mathcal{H}}
\newcommand{\concl}{\mathrm{concl}}
\newcommand{\EA}{\mathrm{E}_I}
\newcommand{\DA}{\mathrm{D}_I}
\newcommand{\EADA}{\EA \cup \DA}
\newcommand{\qbar}{\bar{q}}
\newcommand{\ie}{i.\,e.}
\newcommand{\eg}{e.\,g.}
\newcommand{\wrt}{w.\,r.\,t.}
\begin{document}

  \title{Hierarchic Superposition Revisited\thanks{The final authenticated version is
      available online at \email{https://doi.org/TBD}.}}
\author{Peter Baumgartner\inst{1}
  %% \orcidID{0000-0002-6559-9654}
  \\
  \and
  Uwe Waldmann\inst{2}
  %% \orcidID{0000-0002-0676-7195}
}
\authorrunning{P. Baumgartner and U. Waldmann}

\institute{Data61/CSIRO, ANU Computer Science and Information Technology (CSIT),\\ Acton, Australia,
\email{Peter.Baumgartner@data61.csiro.au}
  \and
Max-Planck-Institut f\"ur Informatik, Saarbr\"ucken, Germany,\\
\email{uwe@mpi-inf.mpg.de}}

\maketitle

\begin{abstract}
Many applications of automated deduction require reasoning in
first-order logic modulo background theories, in particular some form of
integer arithmetic. A major unsolved research challenge is to design
theorem provers that are ``reasonably complete'' even in the presence of
free function symbols ranging into a background theory sort. The
hierarchic superposition calculus of Bachmair, Ganzinger, and Waldmann
already supports such symbols, but, as we demonstrate, not optimally.
This paper aims to rectify the situation by introducing a novel form of
clause abstraction, a core component in the hierarchic superposition
calculus for transforming clauses into a form needed for internal
operation. We argue for the benefits of the resulting calculus and
provide two new completeness results: one for the fragment where all
background-sorted terms are ground and another one for a special
case of linear (integer or rational) arithmetic as a background theory.
\end{abstract}

\section{Introduction}
\label{sec:introduction}
Many applications of automated deduction require reasoning
with respect to a combination of a background theory,
say integer arithmetic, and a foreground theory that extends
the background theory by new sorts such as $\mathit{list}$,
new operators, such as
$\mathit{cons} : \mathit{int} \times \mathit{list} \to \mathit{list}$
and $\mathit{length} : \mathit{list} \to \mathit{int}$,
and first-order axioms.
Developing corresponding
automated reasoning systems that are also able to deal with quantified formulas has
recently been an active area of research. One major line of research is concerned
with extending (SMT-based) solvers~\cite{NieOT-JACM-06} for the quantifier-free case
by instantiation heuristics for
quantifiers~\cite[\eg]{GeBT-CADE-07,Ge:DeMoura:CompleteInstantiation:CAV:2009}.
Another line of research is concerned with adding black-box reasoners for specific
background theories to first-order automated reasoning methods
(resolution~\cite{Bachmair:Ganzinger:Waldmann:TheoremProvingHierarchical:AAECC:94,Korovin:Voronkov:IntegratingLIASuperposition:CSL:2007,Althaus:EtAt:SuperpositionLA:FroCos:2009},
sequent calculi~\cite{Ruemmer:SequentCalculusLIA:LPAR:2008}, instantiation
methods~\cite{Ganzinger:Korovin:ThInst:LPAR:2006,Baumgartner:Fuchs:Tinelli:MELIA:LPAR:2008,Baumgartner:Tinelli:MEET:CADE:2011},
etc). In both cases, a major unsolved research challenge is to provide reasoning
support that is ``reasonably complete'' in practice, so that the systems can be used
more reliably for both proving theorems and finding counterexamples.

In~\cite{Bachmair:Ganzinger:Waldmann:TheoremProvingHierarchical:AAECC:94},
Bachmair, Ganzinger, and Waldmann introduced the hierarchical
superposition calculus as a generalization of the superposition
calculus for black-box style theory reasoning.
Their calculus works in a framework of hierarchic specifications.
It tries to prove the
unsatisfiability of a set of clauses with respect to interpretations
that extend a background model such as the integers with linear arithmetic
conservatively, that is, without
identifying distinct elements of old sorts (``confusion'') and without
adding new elements to old sorts (``junk'').
While confusion can be detected by first-order theorem proving
techniques, junk can not -- in fact, the set of logical consequences of
a hierarchic specifications is usually not recursively enumerable.
Refutational completeness can therefore only be guaranteed if one
restricts oneself to sets of formulas where junk can be excluded a priori.
The property introduced
by Bachmair, Ganzinger, and Waldmann
for this purpose
is called ``sufficient completeness with respect to simple instances''.
Given this property, their calculus is refutationally
complete for clause sets that are fully abstracted (\ie, where
no literal contains both foreground and background symbols).
Unfortunately their full abstraction rule may 
%%in fact 
destroy
sufficient completeness with respect to simple instances.
We show that this problem can be avoided by using a new form of clause
abstraction and a suitably modified hierarchical superposition calculus.
Since the new hierarchical superposition calculus
is still
refutationally complete and the new abstraction rule
is guaranteed to preserve
sufficient completeness with respect to simple instances,
the new combination is strictly more powerful than the~old~one.

In practice, sufficient completeness is a rather restrictive property.
While there are application areas where
one knows in advance that every input is sufficiently complete,
in most cases this does not hold.
As a user of an automated theorem prover,
one would like to see a best effort behavior:
The prover might for instance try to \emph{make} the input
sufficiently complete by adding further theory axioms.
In the calculus by Bachmair, Ganzinger, and Waldmann, however,
this does not help at all:
The restriction to a particular kind of instantiations
(``simple instances'') renders theory axioms essentially unusable
in refutations.
We show that this can be prevented
by introducing
two kinds of variables of the background theory instead of one,
that can be instantiated in different ways,
making our calculus significantly ``more complete'' in practice.
We also include a definition rule in the calculus that can be
used to establish sufficient completeness by linking
foreground terms to background parameters,
\looseness=-1
thus allowing the background prover to reason about these terms.

The following trivial example demonstrates the problem.
% and how our approach tackles it in relation
%to~\cite{Bachmair:Ganzinger:Waldmann:TheoremProvingHierarchical:AAECC:94}. 
Consider
the clause set $N = \{ C\}$ where $C = \op f(1) < \op f(1)$. Assume that the background
theory is integer 
arithmetic and that $\op f$ is an integer-sorted operator from the foreground (free)
signature.
Intuitively, one would expect $N$ to be unsatisfiable.
However, $N$
is not sufficiently complete, and
%, indeed, 
it admits models in which $\op f(1)$ is
interpreted as some junk element $\textcent$, an element of the domain of the integer sort
that is not a numeric constant. So both the calculus
in~\cite{Bachmair:Ganzinger:Waldmann:TheoremProvingHierarchical:AAECC:94} and ours
are excused to not find a refutation. 
%In order to fix that, one could consider several
% alternatives. First, one could
To fix that, one could add an instance  $C' = \neg(\op f(1) < \op f(1))$  of the
irreflexivity axiom $\neg(x < x)$. The resulting set $N' = \{ C,\ C' \}$ is (trivially)
sufficiently complete as it has no models at all. However, the calculus
in~\cite{Bachmair:Ganzinger:Waldmann:TheoremProvingHierarchical:AAECC:94} is not
helped by adding $C'$,
since the abstracted version of $N'$ is again
not sufficiently complete and admits a model that interprets $\op f(1)$ as
$\textcent$. Our abstraction mechanism always preserves sufficient completeness and
our calculus will find a refutation.

%From this example one might be tempted to think
With this example one could think
that replacing the abstraction mechanism
in~\cite{Bachmair:Ganzinger:Waldmann:TheoremProvingHierarchical:AAECC:94} 
with ours gives all the advantages of our calculus. But this is not the case. 
Let $N'' = \{ C,\ \neg(x < x) \}$ be obtained by adding the more realistic axiom 
$\neg(x < x)$. 
The set  $N''$ is still sufficiently complete with our approach thanks to having two
kinds of variables at disposal, but it is not sufficiently complete in the sense
of~\cite{Bachmair:Ganzinger:Waldmann:TheoremProvingHierarchical:AAECC:94}. 
Indeed, in that calculus
adding background theory axioms \emph{never} helps
to gain sufficient completeness, as variables there have only one kind.

Another alternative to make $N$ sufficiently complete is by adding a clause that forces
$\op f(1)$ to be equal to some background domain element. For instance, one can add 
a ``definition'' for $\op f(1)$, that is, a clause $\op f(1) \approx \alpha$, where $\alpha$ is a fresh symbolic
constant belonging to the background signature (a ``parameter''). The set 
$N''' = \{ C, \op f(1) \approx \alpha \}$ is sufficiently complete and it admits refutations 
with both calculi. The definition rule in our calculus mentioned above will generate this
definition automatically. Moreover, the set $N$ belongs to a syntactic fragment for 
which we can guarantee not only sufficient completeness (by means of the definition
rule) but also refutational completeness.

We present the new calculus in detail and provide
a general completeness result, modulo compactness of the background theory,
and two specific completeness results for 
clause sets that do not require compactness --
one for the fragment where all
background-sorted terms are ground and another one for a special
case of linear (integer or rational) arithmetic as a background theory.

We also report on
% initial 
experiments with a
prototypical implementation on the TPTP problem library~\cite{Sutcliffe:TPTP:2017}. 
% Complete proofs, which are omitted here for lack of space, can be found in~\cite{BaumgartnerWaldmann2013MPIreport}.
% By lack of space we cannot give complete proofs in this paper;
% these can be found in~\cite{BaumgartnerWaldmann2013MPIreport}.

Sections~\ref{sec:introduction}--\ref{sec:refutational-completeness},
\ref{sec:define}--\ref{sec:gbt}, and
\ref{sec:experiments} of this paper are a substantially expanded and revised
version of~\cite{DBLP:conf/cade/BaumgartnerW13}.
A preliminary version of Sect.~\ref{sec:linear-arithmetic}
has appeared in~\cite{BaumgartnerWaldmann2013MACIS}.%
Lemmas that had to be left out in the final authenticated version
by lack of space are marked with a letter following the number
(\eg, Lemma~\ref{lemma:target-stable-under-unabstraction}).

\paragraph{Related Work.} The relation with the predecessor calculus
in~\cite{Bachmair:Ganzinger:Waldmann:TheoremProvingHierarchical:AAECC:94} is
discussed above and also further below. What we say there also applies
to other developments rooted in that
calculus,~\cite[\eg]{Althaus:EtAt:SuperpositionLA:FroCos:2009}.  The specialized
version of hierarchic superposition in~\cite{Kruglov:Weidenbach:MACIS:2012} will be
discussed in Sect.~\ref{sec:define} below. 
The resolution calculus
in~\cite{Korovin:Voronkov:IntegratingLIASuperposition:CSL:2007} has built-in
inference rules for linear (rational) arithmetic, but is complete only under
restrictions that effectively prevent quantification over rationals.
Earlier work on integrating theory reasoning into model
evolution~\cite{Baumgartner:Fuchs:Tinelli:MELIA:LPAR:2008,Baumgartner:Tinelli:MEET:CADE:2011}
lacks the treatment of background-sorted foreground function symbols.
The same applies to the sequent calculus
in~\cite{Ruemmer:SequentCalculusLIA:LPAR:2008}, which treats linear arithmetic with
built-in rules for quantifier elimination. 
The instantiation method in~\cite{Ganzinger:Korovin:ThInst:LPAR:2006} requires an
answer-complete solver for the background theory to enumerate concrete 
solutions of background constraints, not just a decision procedure. 
All these approaches have in common that they integrate specialized reasoning for
background theories into a general first-order reasoning method. A conceptually
different approach consists in using first-order theorem provers as (semi-)decision
procedures for specific
theories in DPLL(T)(-like) architectures~\cite{Moura:Bjoerner:EngineeringDPLLTSaturation:IJCAR:2008,Armando:etal:NewResultsSatisfiability:ACMTCL:2009,Bonacina:etal:speculative-inferences:JAR:2011}. 
Notice that in this context the theorem provers do not need to reason
modulo background theories themselves, and indeed they don't.
The calculus and system in~\cite{Moura:Bjoerner:EngineeringDPLLTSaturation:IJCAR:2008}, for
instance, integrates superposition and DPLL(T). From DPLL(T) it inherits splitting of ground non-unit clauses
into their unit components, which determines a (backtrackable) model candidate
$M$. The superposition 
inference rules are applied to elements from $M$ and a current clause set $F$. 
The superposition component guarantees refutational completeness for pure first-order
clause logic. Beyond that, for clauses containing
background-sorted variables, (heuristic) instantiation is needed. 
Instantiation is done with ground terms that are provably equal 
\wrt\ the equations in $M$ to some ground term in $M$ in order to advance the derivation.
The limits of that method can
be illustrated with an (artificial but simple) example.  Consider the unsatisfiable clause
set $\{ i \leq  j \lor  \op P(i+1, x) \lor  \op P(j+2, x),$
$i \leq  j \lor  \lnot \op P(i+3, x) \lor  \lnot \op P(j+4,x) \}$
where 
$i$ and $j$ are integer-sorted variables and $x$ is a foreground-sorted variable. 
Neither splitting into unit clauses, superposition calculus rules, nor instantiation applies, and so
the derivation gets stuck with an inconclusive result. By contrast, the clause set
belongs to a fragment that entails sufficient completeness (``no background-sorted
foreground function symbols'') and hence is refutable by our calculus. On the other
hand, heuristic instantiation does have a place in our calculus, but we leave that
for future work.

\section{Signatures, Clauses, and Interpretations}
\label{sec:sig-clause-interp}
We work in the context of standard many-sorted 
logic with first-order signatures comprised of sorts and operator
(or function) symbols of given arities over
these sorts.
A \emph{signature} is a pair $\Sigma = (\Xi,\Omega)$,
where $\Xi$ is a set of \emph{sorts} and
$\Omega$ is a set of \emph{operator symbols} over $\Xi$.
If $\cX$ is a set of sorted variables with sorts in $\Xi$, then
the set of well-sorted terms over $\Sigma = (\Xi,\Omega)$ and $\cX$
is denoted by $\TSigma(\cX)$;
$\TSigma$ is short for $\TSigma(\emptyset)$.
We require that $\Sigma$ is a \emph{sensible} signature,
\ie, that $\TSigma$ has no empty sorts.
As usual, we write $t[u]$ to indicate that the term $u$ is a
(not necessarily proper) subterm of the term $t$.
The position of $u$ in $t$ is left implicit. 

A \emph{$\Sigma$-equation} is an unordered pair $(s,t)$,
usually written $s \approx t$,
where $s$ and $t$ are terms from $\TSigma(\cX)$
of the same sort.
For simplicity, we use equality as the only predicate in our language.
Other predicates can always be
encoded as a function into a set with one
distinguished element, so that
a non-equational atom is turned into an
equation $P(t_1,\ldots,t_n) \approx {\sl true}_P^{}$;
this is usually abbreviated by $P(t_1,\ldots,t_n)$.\footnote{%
   Without loss of generality we assume that there exists a
   distinct sort for every predicate.}
A \emph{literal} is an equation $s \approx t$ or a negated equation
$\lnot(s \approx t)$, also written as $s \not\approx t$.
A \emph{clause} is a multiset of literals, usually written as a
disjunction;
the empty clause, denoted by $\Box$ is a contradiction.
If $F$ is a term, equation, literal or clause,
we denote by \emph{$\vars(F)$} the set
of variables that occur in $F$.
We say $F$ is \emph{ground} if $\vars(F) = \emptyset$ 

A \emph{substitution $\sigma$} is a mapping 
from variables to terms that is sort respecting,
that is, maps each variable $x \in \cX$ to a term of the same sort.
Substitutions are homomorphically extended to terms as usual.
We write substitution application in postfix form.
A term $s$ is an \emph{instance} of a term $t$ if there is a
substitution $\sigma$ such that $t\sigma = s$.
All these
notions carry over to equations, literals and clauses in the obvious way.
The composition $\sigma\tau$ of the substitutions $\sigma$ and $\tau$
is the substitution that maps every variable $x$ to~$(x\sigma)\tau$.

The \emph{domain} of a substitution $\sigma$ is 
the set $\dom(\sigma) = \{ x \mid x \neq x\sigma \}$.
We use only substitutions with finite domains,
written as $\sigma = [x_1\mapsto t_1,\ldots,x_n\mapsto t_n]$ where $\dom(\sigma) = \{x_1,\ldots,x_n\}$.
A \emph{ground substitution} is a substitution that maps every variable
in its
domain to a ground term.
A \emph{ground instance} of $F$ is obtained by
applying some ground substitution with domain (at least) $\vars(F)$ to it.

A \emph{$\Sigma$-interpretation} $I$ consists of a $\Xi$-sorted
family of carrier sets $\{I_\xi\}_{\xi \in \Xi}$
and of a function
$I_f : I_{\xi_1} \times \cdots \times I_{\xi_n} \to I_{\xi_0}$
for every $f : \xi_1 \ldots \xi_n \to \xi_0$ in $\Omega$.
The \emph{interpretation} $t^I$ of a ground term $t$ is defined recursively
by $f(t_1,\ldots,t_n)^I = I_f(t_1^I,\ldots,t_n^I)$ for $n \geq 0$.
An interpretation $I$ is called \emph{term-generated}, if
every element of an $I_\xi$ is the interpretation of some
ground term of sort $\xi$.
An interpretation $I$ is said to \emph{satisfy} a
ground equation $s \approx t$,
if $s$ and $t$ have the same interpretation
in $I$;
it is said to \emph{satisfy} a negated
ground equation $s \not\approx t$, if $s$ and $t$ do not have
the same interpretation
in $I$.
The interpretation $I$ \emph{satisfies} a
ground clause $C$ if
at least one of the literals of $C$ is satisfied by $I$.
We also say that a ground clause $C$ is \emph{true in $I$},
if $I$ satisfies $C$;
and that $C$ is \emph{false in $I$}, otherwise.
A term-generated interpretation $I$ is said to \emph{satisfy} a
non-ground clause $C$ if
it satisfies all ground instances $C\sigma$;
it is called a \emph{model} of a set $N$ of clauses,
if it satisfies all clauses of $N$.\footnote{%
   This restriction to term-generated interpretations as models is possible
   since we are only concerned with refutational theorem proving,
   \ie, with the derivation of a contradiction.}
We abbreviate the fact that $I$ is a model of $N$
by $I \models N$; $I \models C$ is short for $I \models \{C\}$.
We say that $N$ \emph{entails} $N'$, and write $N \models N'$,
if every model of $N$ is a model of $N'$;
$N \models C$ is short for $N \models \{C\}$.
We say that $N$ and $N'$ are \emph{equivalent},
if $N \models N'$ and $N' \models N$.

If $\JAlgs$ is a class of $\Sigma$-interpretations,
a $\Sigma$-clause or clause set is called \emph{$\JAlgs$-satisfiable}
if at least one $I \in \JAlgs$ satisfies the clause or clause set;
otherwise it is called \emph{$\JAlgs$-unsatisfiable}.

A \emph{specification} is a pair $\SP = (\Sigma,\JAlgs)$,
where $\Sigma$ is a signature and
$\JAlgs$ is a class of term-generated $\Sigma$-interpretations
called \emph{models}
of the specification $\SP$.
We assume that $\JAlgs$ is closed under isomorphisms.

We say that a class of $\Sigma$-interpretations $\JAlgs$
or a specification $(\Sigma,\JAlgs)$
is \emph{compact},
if every infinite set of $\Sigma$-clauses
that is $\JAlgs$-unsatisfiable
has a finite subset that is also $\JAlgs$-unsatisfiable.

\section{Hierarchic Theorem Proving}
\label{sec:hierarchic-tp}
In hierarchic theorem proving, we consider a scenario in which
a general-purpose foreground theorem prover
and a specialized background prover
cooperate to derive a contradiction from a set of clauses.
In the sequel, we will usually abbreviate ``foreground'' and
``background'' by ``FG'' and ``BG''.

The BG prover accepts as input sets of clauses over
a \emph{BG signature} $\Sigma_\back = (\Xi_\back,\Omega_\back)$.
Elements of $\Xi_\back$ and $\Omega_\back$ are called
\emph{BG sorts} and \emph{BG operators},
respectively.
We fix an infinite set $\cX_\back$ of \emph{BG variables}
of sorts in $\Xi_\back$.
Every BG variable has (is labeled with) a \emph{kind},
which is either \emph{``abstraction''} or \emph{``ordinary''}.
Terms over $\Sigma_\back$ and $\cX_\back$
are called \emph{BG terms}.
A BG term is called \emph{pure}, if it does not contain
ordinary variables; otherwise it is \emph{impure}. These notions 
apply analogously to equations, literals, clauses, and clause sets.

The BG prover decides the satisfiability of
$\Sigma_\back$-clause sets with respect to a
\emph{BG specification} $(\Sigma_\back,\BAlgs)$,
where $\BAlgs$ is a class of term-generated $\Sigma_\back$-interpretations
called \emph{BG models}.
We assume that $\BAlgs$ is closed under isomorphisms.

In most applications of hierarchic theorem proving,
the set of BG operators $\Omega_\back$ contains
a set of distinguished constant symbols $\Omega_\back^D \subseteq \Omega_\back$
that has the property that $d_1^I \neq d_2^I$ for any two distinct
$d_1, d_2 \in \Omega_\back^D$ and every BG model $I \in \BAlgs$.
We refer to these constant symbols as
\emph{(BG) domain elements}.

While we permit arbitrary classes of BG models,
in practice the following three cases are most relevant:
\begin{enumerate}[(1)]
\item
  $\BAlgs$ consists of exactly one $\Sigma_\back$-interpretation
  (up to isomorphism),
  say, the integer numbers over a signature containing
  all integer constants as domain elements
  and ${\leq}, {<}, +, -$ 
  with the expected arities.
  In this case, $\BAlgs$ is trivially compact; in fact, a set $N$ of
  $\Sigma_\back$-clauses is $\BAlgs$-unsatisfiable
  if and only if some clause of $N$ is $\BAlgs$-unsatisfiable.
\item
  $\Sigma_\back$ is extended by an infinite number of
  \emph{parameters}, that is, additional constant
  symbols. While all interpretations in $\BAlgs$ share the same
  carrier sets $\{I_\xi\}_{\xi \in \Xi_\back}$ and interpretations of
  non-parameter symbols,
  parameters may be interpreted freely by arbitrary elements
  of the appropriate $I_\xi$.
  The class $\BAlgs$ obtained in this way is in general not compact;
  for instance the infinite set of clauses
  $\{\,n \leq \beta \mid n \in \mathbb{N}\,\}$,
  where $\beta$ is a parameter, is unsatisfiable
  in the integers, but every finite subset is satisfiable.
\item
  $\Sigma_\back$ is again extended by parameters,
  however, $\BAlgs$ is now the class of all
  interpretations that satisfy some first-order theory,
  say, the first-order theory of linear integer arithmetic.\footnote{%
    To satisfy the technical requirement that all interpretations in $\BAlgs$
    are term-generated, we assume that in this case $\Sigma_\back$
    is suitably extended by an infinite set of constants (or by one
    constant and one unary function symbol) that are not used
    in any input formula or theory axiom.}
  Since $\BAlgs$ corresponds to a first-order theory, compactness is
  recovered. It should be noted, however, that
  $\BAlgs$ contains non-standard models,
  so that for instance
  the clause set $\{\,n \leq \beta \mid n \in \mathbb{N}\,\}$
  is now satisfiable (\eg, $\mathbb{Q} \times \mathbb{Z}$ with a
  lexicographic ordering is a model).
\end{enumerate}

The FG theorem prover accepts as inputs clauses over
a signature $\Sigma = (\Xi,\Omega)$,
where $\Xi_\back \subseteq \Xi$ and
$\Omega_\back \subseteq \Omega$.
The sorts in $\Xi_\fore = \Xi \setminus \Xi_\back$ and
the operator symbols in $\Omega_\fore = \Omega \setminus \Omega_\back$
are called \emph{FG sorts} and \emph{FG operators}.
Again we fix an infinite set $\cX_\fore$ of \emph{FG variables}
of sorts in $\Xi_\fore$.
All FG variables have the kind ``ordinary''.
We define $\cX = \cX_\back \cup \cX_\fore$.

In examples we will use $\{0, 1, 2, \dots\}$ to denote BG domain elements,
$\{{+}, {-}, {<},$ ${\leq}\}$
to denote (non-parameter) BG operators,
and the possibly subscripted letters $\{x,y\}$,
$\{X,Y\}$, $\{\alpha,\beta\}$, and $\{\op a, \op b, \op c, \op f, \op g\}$
to denote ordinary variables, abstraction
variables, parameters, and FG operators,
respectively. The letter $\zeta$ denotes an ordinary variable or an abstraction variable.

We call a term in $\TSigma(\cX)$
a \emph{FG term}, if it is not a BG term,
that is, if it contains at least one FG operator or
FG variable
(and analogously for literals or clauses).
We emphasize that for a FG operator
$\op f : \xi_1 \ldots \xi_n \to \xi_0$ in~$\Omega_\fore$
any of the $\xi_i$ may be a BG sort,
and that consequently
FG terms may have BG sorts.

If $I$ is a $\Sigma$-interpretation, the \emph{restriction} of $I$
to $\Sigma_\back$,
written $I\rstr{\Sigma_\back}$, is the $\Sigma_\back$-interpretation
that is obtained from
$I$ by removing all carrier sets
$I_\xi$ for $\xi \in \Xi_\fore$
and all functions
$I_f$ for $f \in \Omega_\fore$.
Note that $I\rstr{\Sigma_\back}$ is not necessarily term-generated
even if $I$ is term-generated.
In hierarchic theorem proving,
we are only interested in $\Sigma$-interpretations
that extend some model in $\BAlgs$
and neither collapse any of its sorts nor add new elements to them,
that is, in $\Sigma$-interpretations $I$ for which
$I\rstr{\Sigma_\back} \in \BAlgs$.
We call such a $\Sigma$-interpretation a \emph{$\BAlgs$-interpretation}.

Let $N$ and $N'$ be two sets of $\Sigma$-clauses.
We say that $N$ \emph{entails} $N'$ \emph{relative to} $\BAlgs$
(and write $N \models_\BAlgs N'$),
if every model of $N$ whose restriction to $\Sigma_\back$ is in $\BAlgs$
is a model of $N'$.
Note that $N \models_\BAlgs N'$ follows from $N \models N'$.
If $N \models_\BAlgs \Box$, we call $N$
\emph{$\BAlgs$-unsatisfiable};
otherwise, we call it \emph{$\BAlgs$-satisfiable}.\footnote{%
  If $\Sigma = \Sigma_\back$,
  this definition coincides with
  the definition of satisfiability \wrt~a class of interpretations
  that was given in Sect.~\ref{sec:sig-clause-interp}.
  A set $N$ of BG clauses is $\BAlgs$-satisfiable
  if and only if some interpretation of $\BAlgs$ is a model of $N$.
}

Our goal in refutational hierarchic theorem proving
is to check
whether a given set of $\Sigma$-clauses $N$
is false in all $\BAlgs$-interpretations,
or equivalently,
whether $N$ is $\BAlgs$-unsatisfiable.

\bigskip

We say that a substitution $\sigma$ is \emph{simple}
if
% it maps every abstraction variable in its domain to a
% pure BG term. 
% That is, 
$X\sigma$ is a pure BG term
% (and so are its subterms)
for every abstraction
variable $X \in \dom(\sigma)$.
For example,  $[x \mapsto 1+Y+\alpha]$, $[X \mapsto 1+Y+\alpha]$ and
$[x \mapsto \op f(1)]$ all are simple,
whereas $[X \mapsto 1+y+\alpha]$ and $[X \mapsto \op f(1)]$ are not.
Let $F$ be a clause or (possibly infinite) clause set.
By \emph{$\sgi(F)$} we denote the set of simple ground instances of $F$,
that is,
the set of all ground instances of (all clauses in) $F$ obtained by simple
ground substitutions. 

\bigskip

For a BG specification $(\Sigma_\back,\BAlgs)$,
we define $\GndTh(\BAlgs)$ as the set of all ground $\Sigma_\back$-formulas
that are satisfied by every $I \in \BAlgs$.

\begin{definition}[Sufficient completeness]
\label{def:sufficient-completeness}
A $\Sigma$-clause set $N$ is called \emph{sufficiently complete \wrt\ simple instances}
if for every $\Sigma$-model $J$ of $\sgi(N) \cup \GndTh(\BAlgs)$\footnote{%
 In contrast to~\cite{Bachmair:Ganzinger:Waldmann:TheoremProvingHierarchical:AAECC:94},
 we include $\GndTh(\BAlgs)$ in the definition of sufficient completeness.
 (This is independent of the abstraction method;
 it would also have been useful
 in~\cite{Bachmair:Ganzinger:Waldmann:TheoremProvingHierarchical:AAECC:94}.)
}
and every ground BG-sorted
FG term $s$ there is
a ground BG term $t$ such that $J \models s \approx t$.\footnote{%
  Note that $J$ need \emph{not} be a $\BAlgs$-interpretation.
} \qed
\end{definition}
For brevity, we will from now on omit the phrase ``\wrt~simple instances''
and speak only of ``sufficient completeness''.
It should be noted, though, that our definition differs from
the classical definition of sufficient completeness in the
literature on algebraic specifications.

\section{Orderings}
\label{sec:orderings}
A \emph{hierarchic reduction ordering}
is a strict, well-founded ordering on
terms that is compatible with contexts,
\ie, $s \succ t$ implies $u[s] \succ u[t]$, and
stable under simple substitutions,
\ie, $s \succ t$ implies $s\sigma \succ t\sigma$
for every simple $\sigma$.
In the rest of this paper we assume such a hierarchic reduction
ordering $\succ$ that satisfies all of the following:
(i) $\succ$ is total on ground terms, 
(ii) $s \succ d$ for every domain element $d$ and every ground term $s$
that is not a domain element,
and 
(iii) $s \succ t$ for every ground FG term $s$ and
every ground BG term $t$.
These conditions are
easily satisfied by an LPO with an operator precedence
in which FG operators are larger than BG operators
and domain elements are minimal with, for example,
$\cdots \succ -2 \succ 2 \succ -1 \succ 1 \succ 0$ to achieve well-foundedness.

Condition (iii) and stability under \emph{simple} substitutions together
justify to always order $s \succ X$ where $s$
is a non-variable FG term and $X$ is an abstraction variable.
By contrast, $s \succ x$ can only hold if
$x \in \vars(s)$.
Intuitively, the combination of hierarchic reduction orderings
and abstraction variables affords ordering
more terms.

The ordering $\succ$ is extended to literals over
terms by identifying a positive literal $s \approx t$ with the multiset $\{s,t\}$,
a negative literal $s \not\approx t$ with $\{s,s,t,t\}$, and using the
multiset extension of $\succ$.  Clauses are compared by the multiset
extension of $\succ$, also denoted by~$\succ$.

The non-strict orderings $\succeq$ are defined as $s \succeq t$ if and only if $s \succ t$ or $s = t$
(the latter is multiset equality in case of literals and clauses).
A literal $L$ is called \emph{maximal} (\emph{strictly maximal})
\looseness=-1
in a clause $L \lor C$ if there is no $K \in C$ with $K \succ L$ ($K \succeq L$).

\section{Weak Abstraction}
\label{sec:weak-abstraction}
To refute an input set of $\Sigma$-clauses,
hierarchic superposition calculi derive BG clauses from them
and pass the latter to a BG prover.
In order to do this, some separation of the FG and BG
vocabulary in a clause is necessary.
The technique used for this separation is known as \emph{abstraction}:
One (repeatedly) replaces some term $q$ in a clause by a new variable and
adds a disequations to the
clause, so that $C[q]$ is converted into the equivalent clause
$\zeta \not\approx q \lor C[\zeta]$,
where $\zeta$ is a new (abstraction or ordinary) variable.

The calculus by Bachmair, Ganzinger, and
Waldmann~\cite{Bachmair:Ganzinger:Waldmann:TheoremProvingHierarchical:AAECC:94}
works on ``fully abstracted'' clauses:
Background terms occurring below a FG operator
or in an equation between a BG and a FG term
or vice versa
are abstracted out until one arrives at a clause in which no
literal contains both FG and BG operators.

A problematic aspect of any kind of abstraction is that it
tends to increase the number
of incomparable terms in a clause, which leads to an undesirable
growth of the search space of a theorem prover.
For instance, if we abstract out the subterms $t$ and $t'$ in a
ground clause $\op f(t) \approx \op g(t')$,
we get
${x \not\approx t} \lor {y \not\approx t'} \lor {\op f(x) \approx \op g(y)}$,
and the two new terms $\op f(x)$ and $\op g(y)$ are incomparable in
any reduction ordering.
In~\cite{Bachmair:Ganzinger:Waldmann:TheoremProvingHierarchical:AAECC:94}
this problem is mitigated
by considering only instances where BG-sorted variables
are mapped to BG terms:
In the terminology
of the current paper, all BG-sorted variables
in~\cite{Bachmair:Ganzinger:Waldmann:TheoremProvingHierarchical:AAECC:94}
have the kind ``abstraction''.
This means that, in the example above, we obtain the two terms
$\op f(X)$ and $\op g(Y)$.
If we use an LPO with a precedence in which $\op f$ is larger than $\op g$
and $\op g$ is larger than every BG operator,
then for every simple ground substitution $\tau$,
$\op f(X)\tau$ is strictly larger that
$\op g(Y)\tau$,
so we can still consider $\op f(X)$ as the only maximal term in the literal.

The advantage of full abstraction is that this clause structure
is preserved by all inference rules.
There is a serious drawback, however:
Consider the clause set
$
  N = \{\,1 + \op c \not\approx 1 + \op c\,\}
$.
Since $N$ is ground, we have $\sgi(N) = N$, and since $\sgi(N)$
is unsatisfiable, $N$ is trivially sufficiently complete.
Full abstraction turns $N$ into
$
  N' = \{\,X \not\approx \op c \,\lor\, 1 + X \not\approx 1 + X\,\}
$.
In the simple ground instances of $N'$, $X$ is mapped to all
pure BG terms.
However, there are $\Sigma$-interpretations of $\sgi(N')$
in which $\op c$ is interpreted differently from any pure BG term,
so $\sgi(N') \cup \GndTh(\BAlgs)$ does have a $\Sigma$-model
and $N'$ is no longer sufficiently complete.
In other words, 
the calculus
of~\cite{Bachmair:Ganzinger:Waldmann:TheoremProvingHierarchical:AAECC:94}
is refutationally complete for clause sets that
are fully abstracted and sufficiently complete,
but full abstraction may destroy sufficient
completeness.
(In fact, the calculus is not able to refute $N'$.)

The problem that we have seen is caused by the fact that
full abstraction replaces FG terms by abstraction variables,
which may not be mapped to FG terms later on.
The obvious fix would be to use ordinary variables instead of
abstraction variables whenever the term to be abstracted out
is not a pure BG term, but as we have seen above, this would
increase the number of incomparable terms and it would therefore
be detrimental to the performance of the prover.

Full abstraction is a property that is stronger than
actually necessary for the completeness proof
of~\cite{Bachmair:Ganzinger:Waldmann:TheoremProvingHierarchical:AAECC:94}.
In fact, it was claimed in a footnote
in~\cite{Bachmair:Ganzinger:Waldmann:TheoremProvingHierarchical:AAECC:94}
that the calculus could be optimized
by abstracting out
only non-variable BG terms
that occur below a FG operator.
This is incorrect, however: Using this abstraction rule,
neither our calculus nor the calculus
of~\cite{Bachmair:Ganzinger:Waldmann:TheoremProvingHierarchical:AAECC:94}
would be able to refute $\{\,1 + 1 \approx 2,$ $(1+1)+\op c \not\approx 2 + \op c\,\}$,
even though this set is unsatisfiable and trivially
sufficiently complete.
We need a slightly different abstraction rule to avoid this problem:

\begin{definition}
\label{def:weak-abs}
A BG term $q$ is a \emph{target term} in a clause $C$
if $q$ is neither a domain element nor a variable
and if $C$ has the form $C[f(s_1,\dots,q,\dots,s_n)]$,
where $f$ is a FG operator or at least one of the $s_i$
is a FG or impure BG term.\footnote{%
  %% The motivation for this definition
  %% and in particular the reason why it is permissible
  %% to treat domain elements in a special way
  %% will become clear in Sect.~\ref{sec:refutational-completeness}.
  Target terms are terms that need to be abstracted out; so
  for efficiency reasons, it is advantageous to keep the number of target
  terms as small as possible.
  We will show in Sect.~\ref{sec:refutational-completeness}
  why domain elements may be treated differently from other non-variable
  terms.
  On the other hand, all the results in the following sections
  continue to hold if the restriction that $q$ is not a domain element
  is dropped (\ie, if domain elements are abstracted out as well).
  We will make use of this fact in Sect.~\ref{sec:linear-arithmetic}.
}
% \COMMENT{UW}{01/03/2018}{
%   In the footnote,
%   I've added the information that abstraction of domain elements is
%   not mandatory, but might still be permitted. (Without abstraction of
%   domain elements, it is not obvious that the number of ground terms
%   remains finite for LIA.)
% }

A clause is called \emph{weakly abstracted} if it does not
have any target terms.

The \emph{weakly abstracted version of a clause} is the clause that
is obtained by exhaustively replacing $C[q]$ by
\begin{itemize}
\item
  $C[X] \lor X \not\approx q$, where $X$ is a new
  abstraction variable, if $q$ is a pure target term in $C$,
\item
  $C[y] \lor y \not\approx q$, where $y$ is a new
  ordinary variable, if $q$ is an impure\linebreak[3] target term in $C$.
\end{itemize}
The weakly abstracted version of a clause $C$ is denoted by $\abstr(C)$;
if $N$ is a set of clauses then $\abstr(N) = \{\,\abstr(C) \mid C \in N\,\}$. \qed
\end{definition}
For example, weak abstraction of the clause
$\op g(1, \alpha, \op f(1)+(\alpha+1), z) \approx \beta$ yields
$\op g(1, X, \op f(1)+Y, z) \approx \beta \vee X \not\approx \alpha \vee Y \not\approx \alpha+1$.
Note that the terms $1$, $\op f(1) + (\alpha+1)$,
$z$, and $\beta$ are not abstracted out:
$1$ is a domain element;
$\op f(1)+(\alpha+1)$ has a BG sort, but it is not
a BG term;
$z$ is a variable;
and $\beta$ is not a proper subterm of any other term.
The clause $\op{write}(\op{a}, 2, \op{read}(\op{a}, 1)+1) \approx \op b$
is already weakly abstracted.
Every pure BG clause is trivially weakly abstracted.

Nested abstraction is only necessary for certain impure BG terms.
For instance, the clause $\op f(z+\alpha) \approx 1$ has two target
terms, namely $\alpha$ (since $z$ is an impure BG term)
and $z+\alpha$ (since $\op f$ is a FG operator).
If we abstract out $\alpha$,
we obtain $\op f(z+X) \approx 1 \lor X \not\approx \alpha$.
The new term $z + X$ is still a target term,
so one more abstraction step yields
$\op f(y) \approx 1 \lor X \not\approx \alpha \lor y \not\approx z+X$.
(Alternatively, we can first abstract out $z+\alpha$,
yielding $\op f(y) \approx 1 \lor y \not\approx z+\alpha$,
and then $\alpha$. The final result is the same.)

It is easy to see that the abstraction process described in
Def.~\ref{def:weak-abs} does in fact terminate.
For any clause, we consider the
multiset of the numbers of non-variable occurrences in the left and
right-hand sides of its literals:
If we abstract out a target term $q$, then $q$ has $k \geq 1$
non-variable occurrences and it occurs as a subterm of a
left or right-hand side $s[q]$ with $n > k$ non-variable occurrences.
After the abstraction step, $s[\zeta]$ has
$n{-}k$ non-variable occurrences and the two terms $\zeta$ and $q$ in the
new abstraction literal have $0$ and $k$ non-variable occurrences.
Since $n{-}k$, $k$, and $0$ are strictly smaller than $n$,
the multiset decreases. Termination follows from the
well-foundedness of the multiset ordering.
\begin{proposition}
\label{prop:wab-equivalence-transformation}
If $N$ is a set of clauses
and $N'$ is obtained from $N$ by
replacing one or more clauses by their weakly abstracted versions,
then $\sgi(N)$ and $\sgi(N')$ are equivalent
and $N'$ is sufficiently complete
whenever $N$ is.
\end{proposition}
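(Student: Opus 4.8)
The plan is to reduce the claim to the single-clause case and then prove two separate statements: (a) for a single clause $C$, the sets $\sgi(\{C\})$ and $\sgi(\{\abstr(C)\})$ are equivalent, and (b) sufficient completeness is preserved when one clause is replaced by its weakly abstracted version. Since $N'$ is obtained from $N$ by replacing clauses one at a time, and both equivalence of simple ground instances and sufficient completeness are preserved under such stepwise replacement (equivalence composes, and for sufficient completeness we argue below), it suffices to treat a single abstraction step, i.e.\ the replacement of $C[q]$ by $C[\zeta] \lor \zeta \not\approx q$ where $\zeta$ is either a fresh abstraction variable $X$ (if $q$ is pure) or a fresh ordinary variable $y$ (if $q$ is impure); termination of the full abstraction process was already established before the proposition, so finitely many such steps suffice.

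For part (a), first I would show $\sgi(\{C[\zeta] \lor \zeta \not\approx q\}) \supseteq$ (entails, in fact is implied by) and conversely. Take a simple ground substitution $\tau$ for $C[q]$. Extend it to $\tau'$ by setting $\zeta\tau' = q\tau$; this is still simple, because when $q$ is pure, $q\tau$ is a pure BG term (a simple substitution maps abstraction variables to pure BG terms, and $q$ contains no ordinary variables), so mapping the abstraction variable $X$ to $q\tau$ respects simplicity — this is exactly the point where the pure/impure distinction in Def.~\ref{def:weak-abs} is needed. Then $(C[\zeta] \lor \zeta \not\approx q)\tau' = C[q\tau] \lor q\tau \not\approx q\tau$, whose second literal is false, so it is equivalent to $C[q]\tau$. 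Conversely, any simple ground instance of $C[\zeta] \lor \zeta\not\approx q$ is obtained from some simple ground $\tau'$; if $\zeta\tau' = q\tau'$ the disequation literal is false and we get $C[q]\tau'$, a simple ground instance of $C[q]$; if $\zeta\tau' \neq q\tau'$ the disequation literal is true, so the clause is a tautology and is entailed by anything. Hence every model of $\sgi(\{C[q]\})$ is a model of $\sgi(\{C[\zeta]\lor\zeta\not\approx q\})$ and vice versa, giving equivalence. (One should also note $\GndTh(\BAlgs)$ is unchanged since the BG signature and $\BAlgs$ are untouched.)

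For part (b), suppose $N$ is sufficiently complete and let $J$ be a $\Sigma$-model of $\sgi(N') \cup \GndTh(\BAlgs)$; we must produce, for each ground BG-sorted FG term $s$, a ground BG term $t$ with $J \models s \approx t$. The idea is to turn $J$ into a model of $\sgi(N) \cup \GndTh(\BAlgs)$ without changing the carrier sets or the interpretation of any FG operator, so that "being a BG-sorted FG term equal to a BG term" is unaffected. Concretely, by part (a) $\sgi(N)$ and $\sgi(N')$ are equivalent, so $J$ is already a model of $\sgi(N) \cup \GndTh(\BAlgs)$; then sufficient completeness of $N$ applied directly to $J$ yields the required $t$. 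So part (b) is essentially immediate once part (a) is in hand — the real content is the equivalence.

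The main obstacle is the simplicity bookkeeping in part (a): one must verify carefully that extending a simple ground substitution $\tau$ by $\zeta \mapsto q\tau$ stays simple, and this is precisely why the definition uses an \emph{abstraction} variable only when $q$ is \emph{pure} (so $q\tau$ is a pure BG term) and an \emph{ordinary} variable when $q$ is impure (where $q\tau$ may be an arbitrary BG-sorted ground term, possibly a FG term, which is fine for an ordinary variable). A secondary point to get right is that the abstraction literal is a disequation $\zeta \not\approx q$, so that instances where $\zeta$ is mapped to something other than $q\tau$ yield tautologies rather than genuine new constraints; this keeps the entailment in both directions. Everything else — stepwise composition over the finitely many abstraction steps, invariance of $\GndTh(\BAlgs)$ — is routine.
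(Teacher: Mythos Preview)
Your approach is essentially the paper's: reduce to a single abstraction step, prove equivalence of the sets of simple ground instances, and observe that sufficient completeness follows immediately from that equivalence. The simplicity check for the extension $\zeta \mapsto q\tau$ is handled exactly as you describe.

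There is, however, one genuine slip in the converse direction of part~(a). You case-split on whether $\zeta\tau' = q\tau'$ (apparently syntactically) and claim that when $\zeta\tau' \neq q\tau'$ the disequation literal is ``true, so the clause is a tautology''. That is false: two syntactically different ground terms can have the same interpretation in a $\Sigma$-model (e.g.\ a parameter $\alpha$ and the domain element $5$), so $\zeta\tau' \not\approx q\tau'$ need not be satisfied, and the instance is certainly not a tautology. The paper fixes a model $I$ of $\sgi(C)$ and case-splits \emph{semantically}: if $I$ satisfies $\zeta\tau' \not\approx q\tau'$ we are done; otherwise $\zeta\tau'$ and $q\tau'$ have the same interpretation in $I$, and then, since $(C[q])\tau' = C\tau'[q\tau']$ is a simple ground instance of $C$ and hence true in $I$, congruence yields $I \models C\tau'[\zeta\tau']$ and thus $I \models D\tau'$. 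With this correction your proof goes through.
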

\begin{proof}
Let us first consider the case of a single abstraction step
applied to a single clause.
Let $C[q]$ be a clause with a target term $q$ and let
$D \,=\, C[\zeta] \lor \zeta \not\approx q$ be the result of abstracting
out $q$ (where $\zeta$ is a new abstraction variable, if $q$ is pure,
and a new ordinary variable, if $q$ is impure).
We will show that $\sgi(C)$ and $\sgi(D)$ have the same models.

In one direction let $I$ be an arbitrary model of $\sgi(C)$.
We have to show that $I$ is also a model of
every simple ground instance $D\tau$ of $D$.
If $I$ satisfies the disequation $\zeta\tau \not\approx q\tau$
then this is trivial.
Otherwise, $\zeta\tau$ and $q\tau$ have the same interpretation
in $I$.
Since $\dom(\tau) \supseteq \vars(D) = \vars(C) \cup \{\zeta\}$,
$C\tau$ is a simple ground instance of $C$,
so $I$ is a model of $C\tau = C\tau[q\tau]$.
By congruence, we conclude that $I$ is also a model of $C\tau[\zeta\tau]$,
hence it is a model of
\looseness=-1
$D\tau \,=\, C\tau[\zeta\tau] \lor \zeta\tau \not\approx q\tau$.

In the other direction let $I$ be an arbitrary model of $\sgi(D)$.
We have to show that $I$ is also a model of
every simple ground instance $C\tau$ of $C$.
Without loss of generality assume
that $\zeta \notin \dom(\tau)$.
If $\zeta$ is an abstraction variable, then
$q$ is a pure BG term, and since $\tau$ is a simple substitution,
$q\tau$ is a pure BG term as well.
Consequently, the substitutions
$[\zeta \mapsto q\tau]$ and $\tau' = \tau[\zeta \mapsto q\tau]$
are again simple substitutions and
$D\tau'$ is a simple ground instance of $D$.
This implies that $I$ is a model of $D\tau'$.
The clause $D\tau'$ has the form
$D\tau' \,=\, C\tau'[\zeta\tau'] \lor \zeta\tau' \not\approx q\tau'$;
since $\zeta\tau' = q\tau$, $C\tau' = C\tau$ and $q\tau' = q\tau$,
this is equal to
$C\tau[q\tau] \lor q\tau \not\approx q\tau$.
Obviously, the literal $q\tau \not\approx q\tau$ must be false in $I$,
so $I$ must be a model of $C\tau[q\tau] = C[q]\tau = C\tau$.

By induction over the number of abstraction steps we conclude
that for any clause $C$, $\sgi(C)$ and $\sgi(\abstr(C))$ are equivalent.
The extension to clause sets $N$ and $N'$ follows then from the fact that
$I$ is a model of $\sgi(N)$ if and only if it is a model of
$\sgi(C)$ for all $C \in N$.
Moreover, the equivalence of $\sgi(N)$ and $\sgi(N')$
implies obviously that $N'$ is
sufficiently complete whenever $N$ is.
\qed
%% With essentially the same proof one gets that $N$ and $N'$ are equivalent. 
\end{proof}

In contrast to full abstraction, the weak abstraction rule
does not require abstraction of FG terms
(which can destroy sufficient completeness
if done using abstraction variables, and which is detrimental to
the performance of a prover if done using ordinary variables).
BG terms are usually abstracted out using abstraction variables.
The exception are BG terms that are impure,
\ie, that contain ordinary variables themselves.
In this case, we cannot avoid to use ordinary variables
for abstraction,
otherwise, we might again destroy sufficient completeness.
For example, the clause set
$\{\, \op P(1+y),\ \neg \op P(1+\op c)\,\}$ is sufficiently complete.
If we used an abstraction variable instead of an
ordinary variable to abstract out the impure subterm $1+y$,
we would get
$\{\, \op P(X) \vee X \not\approx 1+y ,\ \neg \op P(1+\op c)\,\}$,
which is no longer
sufficiently complete.

In input clauses (that is, before abstraction),
BG-sorted variables may be declared
as ``ordinary'' or ``abstraction''.
As we have seen above,
using abstraction variables can reduce the search space;
on the other hand,
abstraction variables may be detrimental to sufficient completeness.
Consider the following example:
The set of clauses
$N = \{\,\neg \op f(x) > \op g(x) \lor \op h(x) \approx 1,$
$\neg \op f(x) \leq \op g(x) \lor \op h(x) \approx 2,$
$\neg \op h(x) > 0\,\}$
is unsatisfiable \wrt~linear integer arithmetic, but
since it is not sufficiently complete,
the hierarchic superposition calculus does not detect the unsatisfiability.
Adding the clause
$X > Y \lor X \leq Y$
to $N$ does not help: Since the abstraction variables $X$ and $Y$
may not be mapped to the FG terms
$\op f(x)$ and $\op g(x)$ in a simple ground instance,
the resulting set is still not
sufficiently complete.
However, if we add the clause
$x > y \lor x \leq y$,
the set of clauses becomes (vacuously)
sufficiently complete
and its unsatisfiability is detected.

One might wonder whether it is also possible to gain anything if
the abstraction process is performed using ordinary variables instead of
abstraction variables. The following proposition shows that this is
not the case:

\begin{proposition}
Let $N$ be a set of clauses, let $N'$ be the result of
weak abstraction of $N$ as defined above,
and let $N''$ be the result of weak abstraction of $N$
where all newly introduced variables are ordinary variables.
Then $\sgi(N')$ and $\sgi(N'')$ are equivalent
and $\sgi(N')$ is sufficiently complete
if and only if~$\sgi(N'')$~is.
\end{proposition}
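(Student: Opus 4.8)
The plan is to prove the stronger statement that $\sgi(N')$ and $\sgi(N'')$ possess exactly the same $\Sigma$-models over arbitrary interpretations (not merely term-generated ones); from this, both the asserted equivalence and the transfer of sufficient completeness follow, the latter exactly as in the closing paragraph of the proof of Proposition~\ref{prop:wab-equivalence-transformation}, since by Definition~\ref{def:sufficient-completeness} the sufficient completeness of a clause set $M$ is entirely a property of the class of $\Sigma$-models of $\sgi(M) \cup \GndTh(\BAlgs)$.

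The heart of the matter is a single-step lemma that is really just the single-clause argument in the proof of Proposition~\ref{prop:wab-equivalence-transformation} with the kind of the introduced variable treated as a parameter: if $C[q]$ has a target term $q$ and $D = C[\zeta] \lor \zeta \not\approx q$, then $\sgi(C)$ and $\sgi(D)$ have the same models whenever $\zeta$ is a fresh variable that is either an abstraction variable (in which case $q$ must be pure) \emph{or an ordinary variable} (which is always allowed, whether or not $q$ is pure). Re-reading that proof, the ``$\Rightarrow$'' direction only uses congruence along $\zeta\tau \approx q\tau$ and is indifferent to the kind of $\zeta$; in the ``$\Leftarrow$'' direction, the sole role of ``$\zeta$ is an abstraction variable'' is to ensure that extending a simple ground substitution $\tau$ by $\zeta \mapsto q\tau$ yields a simple substitution again — and for an \emph{ordinary} $\zeta$ this needs no justification at all, since simplicity imposes no condition on ordinary variables. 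Hence the remainder of that argument (take $\tau' = \tau[\zeta \mapsto q\tau]$, note that the abstraction literal $q\tau \not\approx q\tau$ of $D\tau'$ is false, conclude that the interpretation satisfies $(C[\zeta])\tau' = (C[q])\tau$) goes through verbatim. So any weak-abstraction step performed with an ordinary variable in place of an abstraction variable still preserves the model class of the set of simple ground instances. The termination argument preceding Definition~\ref{def:weak-abs} does not depend on variable kinds either, so exhaustive abstraction with ordinary variables terminates and $N''$ is well defined.

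It then suffices to observe that $N'$ and $N''$ are each obtained from $N$ by a finite sequence of such abstraction steps — for $N'$, pure target terms yield abstraction variables and impure ones yield ordinary variables; for $N''$, every step uses an ordinary variable — and to run, exactly as in the proof of Proposition~\ref{prop:wab-equivalence-transformation}, an induction on the number of steps. This gives that $\sgi(N)$, $\sgi(N')$, and $\sgi(N'')$ all have the same $\Sigma$-models, so $\sgi(N')$ and $\sgi(N'')$ are equivalent; and since $\sgi(N') \cup \GndTh(\BAlgs)$ and $\sgi(N'') \cup \GndTh(\BAlgs)$ then have the same models $J$, $N'$ is sufficiently complete if and only if $N''$ is. I do not anticipate a substantive obstacle: the whole point is that the proof of Proposition~\ref{prop:wab-equivalence-transformation} never genuinely used abstraction variables, only that they are at least as liberal as ordinary variables in the ``$\Leftarrow$'' step. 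The one subtlety to flag is precisely that ``$\Leftarrow$'' step for a pure target term abstracted by an ordinary variable $y$: even though a simple ground instance may send $y$ to a foreground term, one recovers $(C[q])\tau$ by specifically instantiating $y$ with the pure background term $q\tau$, which forces the new abstraction literal to be false.
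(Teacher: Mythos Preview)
Your proposal is correct and follows essentially the same route as the paper: both argue that the proof of Proposition~\ref{prop:wab-equivalence-transformation} goes through unchanged when the freshly introduced variable is ordinary (since simplicity imposes no constraint on ordinary variables in the ``$\Leftarrow$'' direction), yielding that $\sgi(N)$, $\sgi(N')$, and $\sgi(N'')$ all have the same models, from which equivalence and the transfer of sufficient completeness follow. Your write-up is in fact more explicit than the paper's, which simply remarks that ``it is easy to check'' that the earlier proof remains valid for ordinary variables and then concludes.
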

\begin{proof}
By Prop.~\ref{prop:wab-equivalence-transformation},
we know already that $\sgi(N)$ and $\sgi(N')$ are equivalent.
Moreover, it is easy to check the
proof of Prop.~\ref{prop:wab-equivalence-transformation}
is still valid if we assume that the newly introduced variable $\zeta$
is always an ordinary variable.
(Note that the proof requires that abstraction variables are mapped
only to pure BG terms, but it does not require that a variable that
is mapped to a pure BG term must be an abstraction variable.)
So we can conclude in the same way
that $\sgi(N)$ and $\sgi(N'')$ are equivalent,
and hence, that $\sgi(N')$ and $\sgi(N'')$ are equivalent.
From this, we can conclude that
$N'$ is sufficiently complete whenever $N''$ is.
\qed
\end{proof}

In the rest of the paper we will need some technical lemmas
that relate clauses, their (partial) abstractions, instances of these
clauses, and the target terms in these clauses.

\begin{lemma}
\label{lemma:non-target-stable-under-simple-instances}%
Let $w$ be a subterm of a literal $L = [\neg]\, v[w] \approx v'$.
Let $\sigma$ be a simple substitution
and let $K$ be a literal $[\neg]\, v\sigma[w\sigma] \approx v''$.
If $w\sigma$ is a target term in $K$,
then $w$ is a variable or a target term in $L$.
\end{lemma}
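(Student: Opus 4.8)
The plan is to show that, unless $w$ is a variable, every defining condition of a ``target term'' (Def.~\ref{def:weak-abs}) transfers from $w\sigma$ in $K$ back to $w$ in $L$. So assume $w$ is not a variable; I will verify in turn that $w$ is a BG term, is not a domain element, is not a variable, and occurs in $L$ below a FG operator or next to a FG or impure BG term. The ``shape'' conditions follow from the fact that substitutions neither erase operators nor change sorts: writing $w = h(w_1,\dots,w_m)$, since $w\sigma = h(w_1\sigma,\dots,w_m\sigma)$ is a BG term (being a target term in $K$), its head $h$ must be a BG operator, and no $w_i$ can contain a FG operator (it would survive in $w_i\sigma$) or a FG variable (whose $\sigma$-image is a FG-sorted, hence FG, term); thus each $w_i$, and therefore $w$, is a BG term. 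If $w$ were a domain element $d$, then $w\sigma = d$ would be one, contradicting that the target term $w\sigma$ is not a domain element; and $w$ is not a variable by assumption.

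For the context condition the point is to track the indicated occurrence of $w$: it maps under $\sigma$ to the indicated occurrence of $w\sigma$ in $v\sigma[w\sigma]$, and the subterm of $v[w]$ immediately above $w$, say $f(s_1,\dots,s_n)$ with $w$ in the $i$-th argument slot, maps to the subterm $f(s_1\sigma,\dots,s_n\sigma)$ of $v\sigma[w\sigma]$ immediately above $w\sigma$, with the same operator $f$. (In particular $v$ must be a non-trivial context here, since otherwise $w\sigma$ would be the whole left-hand side of $K$ and could not be a target term there.) Because $w\sigma$ is a target term in $K$ at this occurrence, either $f$ is a FG operator, and then the same $f$ sits above $w$ in $L$, so $w$ is a target term in $L$; or $f$ is a BG operator and some $s_j\sigma$ is a FG or impure BG term, in which case I claim $s_j$ itself is a FG or impure BG term. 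This is the step that uses simplicity of $\sigma$: if $s_j$ were a pure BG term, all of its variables would be abstraction variables, each mapped by the simple substitution $\sigma$ to a pure BG term, so $s_j\sigma$ would again be pure BG, a contradiction. Hence $w$ is a target term in $L$ in this case as well.

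Most of this is the same ``operators and sorts survive substitution'' bookkeeping already used in the proof of Prop.~\ref{prop:wab-equivalence-transformation}, so I expect no real difficulty there. The one delicate point -- and the only place the hypothesis that $\sigma$ is \emph{simple} is needed -- is the claim that a term $s_j$ occurring beside $w$ cannot turn from pure BG into FG or impure BG under $\sigma$; without simplicity an abstraction variable could be instantiated by a FG or impure term and the statement would fail. The other thing to be careful about is that ``$w\sigma$ is a target term in $K$'' must be read with respect to the indicated occurrence of $w\sigma$ (the one descending from $w$), not as the mere existence of some target-inducing occurrence of the term $w\sigma$ somewhere in $K$.
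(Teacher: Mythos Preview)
Your proof is correct and uses essentially the same idea as the paper's, namely that simple substitutions preserve pure BG terms (and, trivially, FG operators and domain elements), so the defining conditions for being a target term reflect back from $w\sigma$ to $w$. The only stylistic difference is that the paper argues by contrapositive---enumerating the ways $w$ can fail to be a target term (FG term, domain element, equals $v$, or sits under a BG operator with only pure BG siblings) and observing each is preserved by simple instantiation---whereas you argue the direct implication; the content is the same.
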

\begin{proof}
If $w$ is neither a variable nor a target term in $L$,
then $w$ is a FG term or a domain element, or it equals $v$,
or it occurs below a BG operator $f$
and all other terms occurring below $f$ are pure BG terms.
Each of these properties is preserved by instantiation with a
simple substitution.
(Note that simple instances of pure BG terms are again pure BG terms.)
\qed
\end{proof}

\begin{lemma}
\label{lemma:target-stable-under-unabstraction}%
Let $w$ be a subterm of a literal $L = [\neg]\, v[w] \approx v'$.
Let $\sigma$ be a substitution
that maps all abstraction variables in its domain to pure BG terms
and all ordinary BG-sorted variables in its domain to impure BG terms,
and let $K$ be a literal $[\neg]\, v\sigma[w\sigma] \approx v''$.
If $w$ is a target term in $L$
then $w\sigma$ is a target term~in~$K$.
\end{lemma}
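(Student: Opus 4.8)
The statement is essentially the converse direction of Lemma~\ref{lemma:non-target-stable-under-simple-instances}: there, a simple substitution cannot \emph{create} a target term out of a non-target subterm; here, a substitution $\sigma$ of the indicated restricted shape cannot \emph{destroy} a target term. So I would structure the proof as a case analysis on why $w$ is a target term in $L$, and in each case argue that the relevant property survives applying $\sigma$ and, crucially, survives the change of context from $v[\cdot]$ to $v\sigma[\cdot]$ and from the right-hand side $v'$ to $v''$.

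First I would unfold the definition of ``target term'' applied to $w$ in $L$: $w$ is a BG term, $w$ is not a domain element, $w$ is not a variable, and $w$ sits immediately below some operator $f$ in $L$ such that either $f$ is a FG operator, or one of the sibling arguments of $f$ is a FG term or an impure BG term. I need to recover each of these four facts for $w\sigma$ in $K$. That $w\sigma$ is still a BG term is clear since $\sigma$ maps BG-sorted variables to BG terms and $w$ is built from BG operators; that $w\sigma$ is not a domain element and not a variable follows because $w$ is a non-variable BG term headed by some BG operator $g$ (it cannot be headed by a domain constant since $w$ is not a domain element), and $(g(\dots))\sigma$ is again headed by $g$, hence neither a variable nor a (nullary) domain element. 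The head operator $g$ of $w$ is preserved, so the occurrence of $w\sigma$ below $f$ in $K$ is still there: instantiation does not move a subterm out from under the operator directly above it, and the context change from $v$ to $v\sigma$ only happens strictly above $w$, so the immediate parent $f$ of $w$ is unchanged in $K$ (note this is where we use that $w$ is a \emph{proper} subterm situation — $w$ properly under $f$ — which is exactly what ``target term'' asserts). The one subtlety is the right-hand side: if the parent operator $f$ of $w$ happens to be the top symbol of the literal, i.e. $v[w] = f(\dots, w, \dots)$, then a sibling of $w$ could be $v'$ itself, which becomes $v''$ under the mutation to $K$; but in that situation $f$ is the (FG or BG) operator at the literal's top and $v'' $ is not a sibling of $w$ under $f$ — rather $w$ and $v''$ are the two sides of the literal. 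I need to check the definition of target term does not count the equation sides as ``siblings'': indeed it only looks inside a subterm $f(s_1,\dots,q,\dots,s_n)$, so the other side of the literal is irrelevant, and the mutation $v' \rightsquigarrow v''$ cannot affect whether $w\sigma$ is a target term. Hence I can work entirely within $v\sigma[w\sigma]$.

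It remains to see that the witnessing reason for $f$ survives. If $f$ is a FG operator, then $f$ is still a FG operator in $K$ — done. If instead some sibling $s_i$ of $w$ under $f$ is a FG term, then $s_i\sigma$ is still a FG term (it contains a FG operator or a FG variable, and $\sigma$, being sort-respecting and not mentioning FG operators in any special way, neither removes the FG operator nor — since FG variables have kind ``ordinary'' and $\sigma$ maps ordinary BG-sorted variables to BG terms but FG-sorted variables could be mapped anywhere; but a FG-sorted variable mapped to any term keeps it FG-sorted, hence FG — destroys FG-ness); so $s_i\sigma$ witnesses the target property for $w\sigma$. The last case, that some sibling $s_i$ is an \emph{impure} BG term, is the one requiring the hypothesis on $\sigma$: an impure BG term contains an ordinary BG-sorted variable $x$, and by assumption $x\sigma$ is again an \emph{impure} BG term, so $s_i\sigma$ is a BG term still containing an ordinary variable, i.e. still impure, and again witnesses the target property. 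This is precisely why the lemma needs $\sigma$ to send ordinary BG-sorted variables to impure terms rather than merely to BG terms: a simple substitution could collapse $x$ to a pure term and thereby make every sibling pure, killing the target status.

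\textbf{Main obstacle.} There is no deep difficulty; the work is entirely in being careful with the definition. The one place where I would slow down is the bookkeeping around the literal's top-level operator and the right-hand side $v''$: I must make sure the notion ``$w$ occurs immediately below $f$'' refers to a genuine function-symbol context inside one side of the literal, so that the change $v' \rightsquigarrow v''$ is harmless, and that ``sibling'' never means ``the other side of the $\approx$''. Once that is pinned down, the four-case check above goes through routinely, with the impure-sibling case being the only one that consumes the special hypothesis on $\sigma$.
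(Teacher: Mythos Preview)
Your proposal is correct and follows essentially the same route as the paper's proof: unfold the two defining conditions of a target term and verify that $\sigma$ preserves each, the key observation being that $\sigma$ maps BG terms to BG terms, impure BG terms to impure BG terms, and FG terms to FG terms. The paper states exactly these three closure properties and leaves the rest implicit; you spell out the case analysis on the sibling witness and the head-symbol argument for ``not a variable, not a domain element'' explicitly, which is fine. Your digression about the right-hand side $v' \rightsquigarrow v''$ is unnecessary (the definition of target term looks only at the immediate term context $f(s_1,\dots,q,\dots,s_n)$, never at the other side of the equation), but you reach the correct conclusion that it is irrelevant.
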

\begin{proof}
The term $w$ is a target term in $L$ if and only if
(i) $w$ is a BG term that is neither a domain element nor a variable,
and (ii) $w$ occurs in $L$ in a term $f(s_1,\dots,w,\dots,s_n)$,
where $f$ is a FG operator or at least one of the $s_i$
is a FG or impure BG term.
Obviously $w\sigma$ cannot be a domain element or a variable.
Moreover it is easy to check that
$\sigma$ maps
BG terms to BG terms,
impure BG terms to impure BG terms,
and FG terms to FG terms,
so both properties (i) and (ii) are preserved by $\sigma$.
\qed
\end{proof}

\begin{lemma}
\label{lemma:no-new-targets-by-abstraction}%
Let $D_0$ be a clause.
Let $\tau_0$ be a simple substitution such that
$D_0\tau_0$ is a ground instance of $D_0$ and
let $\rho_0$ be the identity substitution.
For $n \in \{0,\dots,k{-}1\}$ let
$D_{n+1}$ be clauses
obtained from $D_0$ by successively abstracting out target terms
as described in Def.~\ref{def:weak-abs}, that is,
let $q_n$ be a target term in $D_n = D_n[q_n]$ and
let $D_{n+1} = D_n[\zeta_n] \lor \zeta_n \not\approx q_n$
(where $\zeta_n$ is a new abstraction variable, if $q_n$ is
a pure target term, and a new ordinary variable otherwise).
Let $\tau_{n+1} = \tau_n[\zeta_n \mapsto q_n\tau_n]$
and $\rho_{n+1} = \rho_n[\zeta_n \mapsto q_n\rho_n]$.
Then the following properties hold:
\begin{enumerate}[\rm(1)]
\item
  If $n \in \{0,\dots,k\}$, then
  $\tau_n$ and $\rho_n$ are simple substitutions and
  $\rho_n$ maps all abstraction variables in its domain to pure BG terms
  and all ordinary variables in its domain to impure BG terms.
\item
  If $n \in \{0,\dots,k\}$, then
  $\tau_n = \rho_n\tau_0$.
\item
  If $n \in \{0,\dots,k\}$ then
  $D_n\tau_n$ is a ground instance of $D_n$
  and has the form
  $D_n\tau_n$ \,$=$\,
  $D_0\tau_0 \lor \bigvee_{0\leq i<n} q_i\tau_i \not\approx q_i\tau_i$,
  where the literals $q_i\tau_i \not\approx q_i\tau_i$
  are ground instances of the abstraction literals
  introduced so far.
\item
  If $n \in \{0,\dots,k\}$ and if $D'_n$ is the subclause of $D_n$
  that is obtained by dropping the
  abstraction literals introduced so far, then $D'_n\rho_n = D_0$.
\item
  If $n \in \{0,\dots,k{-}1\}$ and if the target
  term $q_n$ occurs in the subclause $D'_n$ of $D_n$,
  then there is a target term $\qbar_n$ in $D_0$ such that $q_n\tau_n = \qbar_n\tau_0$.
\end{enumerate}
\end{lemma}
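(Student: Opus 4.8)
The plan is to prove properties (1)--(4) simultaneously by induction on $n$, and then to read off (5) from (2) and (4). Throughout I will use three elementary facts: simple instances of pure BG terms are again pure BG terms (as already noted in the proof of Lemma~\ref{lemma:non-target-stable-under-simple-instances}); every simple substitution maps abstraction variables to pure BG terms; and any substitution of the kind described in (1) maps BG terms to BG terms, pure BG terms to pure BG terms, impure BG terms to impure BG terms, and FG terms to FG terms. I will also use repeatedly that $\zeta_n$ is fresh, so in particular $\zeta_n \notin \vars(q_n)$ and $\zeta_n$ does not occur in $D'_n$. The base case $n=0$ is immediate: $\tau_0$ is simple by hypothesis; $\rho_0$ is the identity, hence trivially simple with empty domain, so its condition on abstraction/ordinary variables is vacuous; $\tau_0 = \rho_0\tau_0$; $D_0\tau_0$ is a ground instance of $D_0$ by hypothesis, with an empty disjunction of abstraction literals attached; and $D'_0 = D_0$ with $D'_0\rho_0 = D_0$.

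For the inductive step, assume (1)--(4) for some $n < k$ and consider $D_{n+1} = D_n[\zeta_n] \lor \zeta_n \not\approx q_n$, $\tau_{n+1} = \tau_n[\zeta_n \mapsto q_n\tau_n]$, and $\rho_{n+1} = \rho_n[\zeta_n \mapsto q_n\rho_n]$. For (1) the only new bindings are $\zeta_n \mapsto q_n\tau_n$ and $\zeta_n \mapsto q_n\rho_n$: if $\zeta_n$ is an abstraction variable then $q_n$ is a pure target term, so both $q_n\tau_n$ and $q_n\rho_n$ are pure BG terms (using the induction hypothesis on $\tau_n$ and $\rho_n$), while if $\zeta_n$ is ordinary then $q_n$ is impure, so $q_n\rho_n$ is an impure BG term; hence $\tau_{n+1}$ and $\rho_{n+1}$ retain the required properties. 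For (2) one checks that $\tau_{n+1}$ and $\rho_{n+1}\tau_0$ agree on every variable: on variables other than $\zeta_n$ this is the induction hypothesis $\tau_n = \rho_n\tau_0$, and $\zeta_n\rho_{n+1}\tau_0 = q_n\rho_n\tau_0 = q_n\tau_n = \zeta_n\tau_{n+1}$. For (3), freshness of $\zeta_n$ makes applying $\tau_{n+1}$ to $D_n[\zeta_n]$ simply undo the replacement of $q_n$ by $\zeta_n$, so $D_n[\zeta_n]\tau_{n+1} = D_n\tau_n$, and the literal $\zeta_n\tau_{n+1} \not\approx q_n\tau_{n+1}$ is $q_n\tau_n \not\approx q_n\tau_n$; the stated form of $D_{n+1}\tau_{n+1}$ — together with groundness and the fact that each new literal is an instance of an abstraction literal — then follows from the induction hypothesis for (3). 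For (4), dropping all abstraction literals from $D_{n+1}$ yields $D'_n$ if the abstracted occurrence of $q_n$ lay inside an abstraction literal of $D_n$, and $D'_n[\zeta_n]$ otherwise; in either case, using $\zeta_n \notin \vars(D'_n)$ and $\zeta_n\rho_{n+1} = q_n\rho_n$, one gets $D'_{n+1}\rho_{n+1} = D'_n\rho_n = D_0$ by the induction hypothesis for (4).

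It remains to prove (5). Assume $n < k$ and that the abstracted occurrence of $q_n$ — which, by Def.~\ref{def:weak-abs}, sits in a target position — lies in $D'_n$, say inside a literal $L$ of $D'_n$; then $q_n$ is a target term in $L$. By (4), $L\rho_n$ is a literal of $D'_n\rho_n = D_0$, and by (1) the substitution $\rho_n$ satisfies the hypothesis of Lemma~\ref{lemma:target-stable-under-unabstraction}. That lemma gives that $q_n\rho_n$ is a target term in $L\rho_n$, hence in $D_0$, so we may take $\qbar_n = q_n\rho_n$; then $\qbar_n\tau_0 = q_n\rho_n\tau_0 = q_n\tau_n$ by (2), as required.

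The whole argument is bookkeeping rather than insight: keeping the variables $\zeta_n$ genuinely fresh, distinguishing occurrences of $q_n$ that lie in the retained subclause $D'_n$ from those already pushed into abstraction literals, and checking that the pure/impure and BG/FG classifications survive the substitutions $\tau_n$ and $\rho_n$. I expect the most error-prone step to be the case distinction in the proof of (4) (which is precisely the distinction appearing in the hypothesis of (5)); once invariants (1)--(4) are correctly stated and carried along, property (5) is essentially a one-line application of Lemma~\ref{lemma:target-stable-under-unabstraction}.
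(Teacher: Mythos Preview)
Your proof is correct and follows essentially the same approach as the paper's own proof: properties (1)--(4) are established by induction on $n$ with the same case analysis (in particular the same split in (4) on whether $q_n$ lies in $D'_n$ or in a previously introduced abstraction literal), and property (5) is derived from (1), (2), and (4) via Lemma~\ref{lemma:target-stable-under-unabstraction} with $\qbar_n = q_n\rho_n$, exactly as in the paper.
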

\begin{proof}
Property (1) follows by induction from the fact that all the mappings
$[\zeta_n \mapsto q_n\tau_n]$ and $[\zeta_n \mapsto q_n\rho_n]$
are simple
and from the fact that
$[\zeta_n \mapsto q_n\rho_n]$ maps an abstraction variable
to a pure BG term
or an ordinary variable to an impure BG term.

Property (2) is obvious for $n = 0$.
By induction, we obtain
$\tau_{n+1}
= \tau_n[\zeta_n \mapsto q_n\tau_n]
= \rho_n\tau_0[\zeta_n \mapsto q_n\rho_n\tau_0]
= \rho_n[\zeta_n \mapsto q_n\rho_n]\tau_0
= \rho_{n+1}\tau_0$
as required.

Property (3) is again obvious for $n = 0$.
By induction, we obtain
$D_{n+1}\tau_{n+1}$
\,$=$\, $D_n\tau_{n+1}[\zeta_n\tau_{n+1}] \lor \zeta_n\tau_{n+1} \not\approx q_n\tau_{n+1}$
\,$=$\, $D_n\tau_n[q_n\tau_n] \lor q_n\tau_n \not\approx q_n\tau_n$
\,$=$\, $D_n\tau_n \lor q_n\tau_n \not\approx q_n\tau_n$
\,$=$\, $D_0\tau_0 \lor \bigvee_{0\leq i<n} q_i\tau_i \not\approx q_i\tau_i
         \lor q_n\tau_n \not\approx q_n\tau_n$
\,$=$\, $D_0\tau_0 \lor \bigvee_{0\leq i<n+1} q_i\tau_i \not\approx q_i\tau_i$
as required.
Since $q_n$ is a subterm of $D_n$ and
$D_n\tau_n$ is ground by induction,
we can conclude that $q_i\tau_n \not\approx q_i\tau_n$ is ground as well.

To prove property (4), we write $D_n$ in the form
$D_n \,=\, D'_n \lor E_n$, where $E_n$ is the subclause
consisting of all abstraction literals introduced so far.
For $n = 0$, there is nothing to prove.
If $q_n$ occurs in $D'_n$, then
$D_n = D'_n[q_n] \lor E_n$ and
$D_{n+1}
= D'_n[\zeta_n] \lor E_n \lor \zeta_n \not\approx q_n$,
therefore
$D'_{n+1} = D'_n[\zeta_n]$
and $D'_{n+1}\rho_{n+1} = D'_n\rho_{n+1}[\zeta_n\rho_{n+1}]
= D'_n\rho_n[q_n\rho_n]
= D'_n\rho_n = D_0$.
Otherwise $q_n$ occurs in $E_n$, then
$D_n = D'_n \lor E_n[q_n]$ and
$D_{n+1} = D'_n \lor E_n[\zeta_n] \lor \zeta_n \not\approx q_n$,
therefore
$D'_{n+1} = D'_n$
and $D'_{n+1}\rho_{n+1} = D'_n\rho_{n+1} = D'_n\rho_n = D_0$.

It remains to prove property (5).
By property (4) $\rho_n$ maps $D'_n$ to $D_0$;
since $q_n$ occurs in $D'_n$,
$q_n\rho_n$ is a subterm $\qbar_n$ of $D_0$.
By property (1) and Lemma~\ref{lemma:target-stable-under-unabstraction},
$\qbar_n$ must be a target term in $D_0$.
Property (2) yields
$\qbar_n\tau_0 = q_n\rho_n\tau_0 = q_n\tau_n$.
\qed
\end{proof}

\section{Base Inference System}
\label{sec:base-inference-system}
An \emph{inference system $\Inf$} is a set of inference rules. 
By an \emph{$\Inf$ inference} we mean an instance of an inference
rule from $\Inf$ such that all conditions are satisfied. 

The \emph{base inference system $\HSP_\Base$}
of the hierarchic superposition calculus
consists of the inference rules
\IR{Equality resolution}, \IR{Negative superposition}, \IR{Positive} \IR{superposition},
\IR{Equality factoring}, and \IR{Close} defined below.
The calculus is parameterized by a hierarchic reduction ordering $\succ$
and by a ``selection function''
that assigns to every
clause a (possibly empty) subset of its negative FG literals.
All inference rules are applicable only to
weakly abstracted premise clauses.

\begin{equation*}
\infrule[\IR{Equality resolution}]{
s \not\approx t \lor C
}{
\abstr(C\sigma)
}
\end{equation*}
if
(i) $\sigma$ is a simple mgu of $s$ and $t$,
(ii) $s\sigma$ is not a pure BG term, and
(iii) if the premise has selected literals, then
$s \not\approx t$ is selected in the premise, otherwise
$(s \not\approx t)\sigma$ is maximal in
$(s \not\approx t \lor C)\sigma$.\footnote{%
  As in~\cite{Bachmair:Ganzinger:Waldmann:TheoremProvingHierarchical:AAECC:94},
  it is possible to strengthen the maximality condition by requiring that there
  exists some simple ground substitution $\psi$ such that
  $(s \not\approx t)\sigma\psi$ is maximal in
  $(s \not\approx t \lor C)\sigma\psi$ (and analogously for the
  other inference rules).}

For example, \IR{Equality resolution} is applicable to $1+ \op c \not\approx 1+x$ with the
simple mgu $[x \mapsto \op c]$, but it is not applicable to $1+ \alpha \not\approx 1+x$, since $1+ \alpha$ is a
pure~BG~term. 

\begin{equation*}
%%%%%%%%%%%%%%
\infrule[\IR{Negative superposition}]{
%%%%%%%%%%%%%%
l \approx r \lor C \qquad s[u] \not\approx t \lor D
}{
\abstr((s[r] \not\approx t \lor C \lor D)\sigma)
} 
\end{equation*}
if
(i) $u$ is not a variable,
(ii) $\sigma$ is a simple mgu of $l$ and $u$, 
(iii) $l\sigma$ is not a pure BG term,
(iv) $r\sigma \not \succeq l\sigma$, 
(v) $(l \approx r)\sigma$ is strictly maximal in $(l \approx r \lor C)\sigma$,
(vi) the first premise does not have selected literals,
(vii) $t\sigma \not \succeq s\sigma$, and
(viii) if the second premise has selected literals, then
$s \not\approx t$ is selected in the second premise, otherwise
$(s \not\approx t)\sigma$ is maximal in $(s \not\approx t \lor D)\sigma$.

\begin{equation*}
%%%%%%%%%%%%%%
\infrule[\IR{Positive superposition}]{
%%%%%%%%%%%%%%
l \approx r \lor C \qquad s[u] \approx t \lor D
}{
\abstr((s[r] \approx t \lor C \lor D)\sigma)
} 
\end{equation*}
if
(i) $u$ is not a variable,
(ii) $\sigma$ is a simple mgu of $l$ and $u$, 
(iii) $l\sigma$ is not a pure BG term,
(iv) $r\sigma \not \succeq l\sigma$, 
(v) $(l \approx r)\sigma$ is strictly maximal in $(l \approx r \lor C)\sigma$,
(vi) $t\sigma \not \succeq s\sigma$,
(vii) $(s \not\approx t)\sigma$ is strictly maximal in $(s \approx t \lor D)\sigma$, and
(viii) none of the premises has selected literals.

\begin{equation*}
%%%%%%%%%%%%%%
\infrule[\IR{Equality factoring}]{
%%%%%%%%%%%%%%
s \approx t \lor l \approx r \lor C
}{
\abstr((l \approx r \lor t \not\approx r \lor C)\sigma)
} 
\end{equation*}
where 
(i) $\sigma$ is a simple mgu of $s$ and $l$, 
(ii) $s\sigma$ is not a pure BG term,
(iii) $(s \approx t)\sigma$ is maximal in $(s \approx t \lor l \approx r \lor C)\sigma$,
(iv) $t\sigma \not\succeq s\sigma$,
(v) $l\sigma \not\succeq r\sigma$, and
(vi) the premise does not have selected literals.

\begin{equation*}
%%%%%%%%%%%%%%
\infrule[\IR{Close}]{
%%%%%%%%%%%%%%
C_1 \quad \cdots \quad C_n
}{
\Box
} 
\end{equation*}
if $C_1,\ldots,C_n$ are BG clauses and $\{ C_1,\ldots,C_n\}$  is
$\BAlgs$-unsatisfiable,
\ie, no interpretation in $\BAlgs$ is a $\Sigma_\back$-model of $\{ C_1,\ldots,C_n\}$.

Notice that \IR{Close} is not restricted to take \emph{pure} BG clauses only. The
reason is that also impure BG clauses admit simple ground instances that are pure.

\begin{theorem}
\label{thm:hsp-base-sound}%
The inference rules of $\HSP_\Base$
are sound \wrt\ $\models_\BAlgs$,
\ie, for every inference with premises in $N$ and conclusion $C$,
we have $N \models_\BAlgs C$.
\end{theorem}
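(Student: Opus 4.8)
The plan is to verify soundness rule by rule, following the standard pattern for superposition soundness arguments but with two twists specific to the hierarchic setting: the conclusions are \emph{weakly abstracted}, and the unifiers $\sigma$ are required to be \emph{simple}. First I would dispose of the bookkeeping: by Proposition~\ref{prop:wab-equivalence-transformation}, $\sgi(\abstr(C\sigma))$ is equivalent to $\sgi(C\sigma)$, and in particular $\abstr(C\sigma)$ and $C\sigma$ have the same $\BAlgs$-interpretation models (a $\BAlgs$-interpretation is term-generated and satisfies a clause iff it satisfies all its simple ground instances — actually all ground instances, but for clauses over $\Sigma$ with abstraction variables the relevant notion is simple ground instances; in any case the equivalence of the simple ground instance sets suffices). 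Hence it is enough to show $N \models_\BAlgs C'\sigma$ where $C'$ is the un-abstracted conclusion, i.e. to show that every $\BAlgs$-interpretation $I$ with $I \models N$ satisfies $C'\sigma$.

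Next I would treat each inference rule. For \IR{Close}: if $I \models N$ then $I \models \{C_1,\dots,C_n\}$, and since $I\rstr{\Sigma_\back} \in \BAlgs$ and the $C_i$ are BG clauses, $I\rstr{\Sigma_\back}$ is a $\Sigma_\back$-model of $\{C_1,\dots,C_n\}$, contradicting $\BAlgs$-unsatisfiability; so no such $I$ exists and $N \models_\BAlgs \Box$ vacuously. For \IR{Equality resolution}, \IR{Negative superposition}, \IR{Positive superposition}, and \IR{Equality factoring}, the argument is the usual ground instance argument: fix a $\BAlgs$-interpretation $I \models N$ and an arbitrary simple ground substitution $\tau$; we must show $I$ satisfies the ground clause $C'\sigma\tau$. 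Because $\sigma$ is simple and $\tau$ is simple, $\sigma\tau$ is a simple ground substitution on the premises (closure of simple substitutions under composition, and the fact that a simple substitution composed with a simple substitution maps abstraction variables to pure BG terms — pure BG terms being closed under simple instantiation), so $I$ satisfies each premise instantiated by $\sigma\tau$. Then one reasons by cases on whether the ``pivot'' equation (e.g.\ $l\sigma\tau \approx r\sigma\tau$, or $s\sigma\tau \approx t\sigma\tau$) is true or false in $I$: in one case a literal of the conclusion inherited from a premise is already satisfied; in the other case congruence of $I$ (replacing $l\sigma\tau$ by $r\sigma\tau$, or deriving the disequation $t\sigma\tau \not\approx r\sigma\tau$ from $s\sigma\tau \approx t\sigma\tau$ being false together with $s\sigma\tau \approx l\sigma\tau \approx r\sigma\tau$, etc.) shows the rewritten/derived literal is satisfied. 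The ordering side conditions (iv)--(viii) play no role in soundness and can be ignored here; only the unification conditions and the simplicity conditions matter.

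The one point that needs a sentence of care — and the closest thing to an obstacle — is making sure the simplicity requirement on $\sigma$ does not actually obstruct the substitution-lemma step: we need that for a simple mgu $\sigma$ of the premises and an arbitrary simple ground $\tau$ covering the conclusion's variables, the composite is a simple ground substitution whose restriction instantiates the premises correctly, and that abstraction variables of the premises are still mapped to pure BG terms under $\sigma\tau$. This is immediate from the definition of simple and from the remark (used already in the proof of Proposition~\ref{prop:wab-equivalence-transformation}) that simple instances of pure BG terms are pure BG terms; condition (ii) ``$s\sigma$ (resp.\ $l\sigma$) is not a pure BG term'' is not needed for soundness at all. With that observed, each rule reduces to the classical ground superposition soundness check, and I would write out \IR{Positive superposition} in full as the representative case and remark that the others are analogous. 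I expect no genuine difficulty; the hierarchic layer is absorbed entirely by Proposition~\ref{prop:wab-equivalence-transformation} and by the closure properties of simple substitutions.
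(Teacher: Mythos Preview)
Your proposal is correct, but it is considerably more elaborate than the paper's two-line proof, and the extra machinery is not actually needed here. The paper's argument is simply: the four non-\IR{Close} rules (\IR{Equality resolution}, \IR{Negative}/\IR{Positive superposition}, \IR{Equality factoring}) are sound with respect to ordinary first-order entailment $\models$, and since $N \models N'$ implies $N \models_\BAlgs N'$ (noted in Sect.~\ref{sec:hierarchic-tp}), soundness with respect to $\models_\BAlgs$ is immediate; for \IR{Close}, soundness follows directly from the definition of $\BAlgs$-unsatisfiability. The point you may have missed is that the abstraction step $C[q] \mapsto C[\zeta] \lor \zeta \not\approx q$ is already a classical equivalence transformation regardless of the kind of $\zeta$, so one does not need Prop.~\ref{prop:wab-equivalence-transformation} (equivalence of \emph{simple} ground instance sets), nor any analysis of simple substitutions, nor the no-junk property of $\BAlgs$-interpretations, to see that the premises entail $\abstr(C'\sigma)$ in the ordinary sense.

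Your route---working directly at the level of $\BAlgs$-interpretations, reducing via $\sgi$-equivalence, and checking that compositions $\sigma\tau$ of simple substitutions are simple---is valid and self-contained, and it has the minor expository virtue of making explicit that the side conditions ``$\sigma$ simple'' and ``$l\sigma$ not pure BG'' play no role in soundness. But it re-derives by hand what the implication $\models \,\Rightarrow\, \models_\BAlgs$ gives for free. In short: nothing is wrong, but you are doing more work than necessary; the hierarchic layer is irrelevant for soundness of the superposition-style rules and only enters for \IR{Close}, which you handle the same way the paper does.
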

\begin{proof}
\IR{Equality resolution}, \IR{Negative superposition},
\IR{Positive superposition}, and \IR{Equality} \IR{factoring}
are clearly sound \wrt\ $\models$,
and therefore also sound \wrt\ $\models_\BAlgs$.
For \IR{Close}, soundness \wrt\ $\models_\BAlgs$ follows immediately
from the definition.
\qed
\end{proof}

All inference rules of $\HSP_\Base$ involve (simple) mgus. Because of the two kinds of
variables, abstraction and ordinary ones, the practical question arises if standard unification
algorithms can be used without or only little modification. For example, the
terms $Z$ and $(x+y)$ admit a simple mgu 
$\sigma = [x \mapsto X,\, y \mapsto Y,\, Z \mapsto X+Y]$. This prompts for the use of weakening substitutions as in
many-sorted logics with subsorts~\cite{Walther1988}.%
As we will see below, such
substitutions never need to be considered.

More precisely, we call a simple substitution $\sigma$ \emph{restricted} 
if $X\sigma \in \cX_\back \cup \Omega_\back^D$ for every abstraction variable $X \in \cX_\back$. That
is, a restricted simple substitution maps
every abstraction variable to an abstraction variable or a domain element.
% PB: not needed:
% \footnote{For
% example $\sigma_1 = [X \mapsto \alpha]$ and $\sigma_2 = [X \mapsto 1+1]$ both are simple substitutions, but
% neither one is restricted.}
\begin{proposition}
\label{prop:restricted-substitutions}
Let $s$ and $t$ be subterms of weakly abstracted clauses
such that $\sigma$ is a simple most general unifier of $s$ and $t$
and $s\sigma$ is not a pure BG term.
Then $\sigma$ is a restricted substitution.
\end{proposition}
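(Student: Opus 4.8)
The plan is to analyze the structure of a simple most general unifier directly, using the hypothesis that $s$ and $t$ come from weakly abstracted clauses. The key observation is that in a weakly abstracted clause, an abstraction variable $X$ can only occur in two kinds of positions: either directly inside an abstraction literal $X \not\approx q$ (where $q$ is a pure BG term, hence $X$ is ``protected''), or inside a term $f(\ldots, X, \ldots)$ where all sibling terms are pure BG terms and $f$ is a BG operator — because if $f$ were a FG operator, or a sibling were FG or impure, then the enclosing term containing $X$ would itself be a target term, contradicting weak abstraction. So abstraction variables in weakly abstracted clauses sit only inside pure BG contexts. First I would record this structural fact precisely.

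Next I would run the standard syntactic unification procedure on $s$ and $t$ and argue that it can be chosen to produce a restricted substitution. The argument is by tracking what an abstraction variable can get bound to. Since $s\sigma$ is not a pure BG term, the root of $s$ (equivalently of $t$) is either a FG operator or a variable that ends up bound to something non-pure; in the branch of the term tree leading to such a non-pure occurrence, abstraction variables cannot appear (by the structural fact above, abstraction variables live only under pure BG contexts). So when the unification algorithm descends and at some point tries to bind an abstraction variable $X$, the corresponding subterm on the other side lies inside a pure BG context of the weakly abstracted clause it came from, and is therefore a pure BG term. A pure BG term unified with an abstraction variable $X$ forces $X$ to be bound to that pure BG term — but to remain a \emph{simple} mgu we must not introduce FG or impure material, and to remain most general we bind $X$ to a pure BG term that is as general as possible. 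The point is that the \emph{value} assigned to $X$ by an mgu, up to renaming, is determined: since $X$ occurs only in pure BG contexts on both sides, unification of those contexts can be carried out within the pure BG term algebra, and an mgu there maps BG variables to BG variables or to domain elements whenever one side is already ``solved'' — more carefully, I would show that $X$ is either left untouched (i.e.\ $X\sigma = X$), or identified with another BG variable, or, in the case where the pure BG context forces it, bound to a domain element; in every case $X\sigma \in \cX_\back \cup \Omega_\back^D$.

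The cleanest way to package this is: let $\sigma$ be \emph{any} simple mgu of $s$ and $t$ with $s\sigma$ not pure. Suppose for contradiction that $X\sigma \notin \cX_\back \cup \Omega_\back^D$ for some abstraction variable $X$. Then $X\sigma$ is a pure BG term (since $\sigma$ is simple) that is neither a variable nor a domain element, so it has a BG operator at the root. Trace an occurrence of $X$ in $s$ or $t$: by the structural fact, $X$ occurs inside a maximal pure BG subterm $p$ of the relevant weakly abstracted clause, and $p$ sits either as a whole literal side, or directly under a FG operator, or directly under a BG operator with a FG/impure sibling, or as the top of an abstraction literal. In all of these, the position of $p$ within the term being unified is ``pure-maximal'': the parent context is not pure BG. Now $p\sigma$ is a pure BG term, and it must unify against whatever sits in the matching position of the other term $t$ (resp.\ $s$); but by the same structural analysis, that matching position is \emph{also} pure-maximal, hence also a maximal pure BG subterm $p'$, and $p\sigma = p'\sigma$. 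So $\sigma$ restricted to $\vars(p) \cup \vars(p')$ is a unifier of the two pure BG terms $p$ and $p'$; by most-generality of $\sigma$ it is (up to the irrelevant part) a \emph{most general} unifier of $p$ and $p'$ in the pure BG term algebra. An mgu of two terms over a signature never needs to introduce a non-variable function symbol into the image of a variable unless that variable is forced to equal a subterm built over that symbol by an occurrence on the other side — and here the only non-variable, non-domain-element symbols available on the other side are again BG operators coming from $p'$, so the only way $X\sigma$ acquires a BG operator at its root is that some subterm of $p'$ at the position of $X$ is a non-variable pure BG term $q'$; but then $X$ and $q'$ occur in unifiable positions, giving $X\sigma = q'\sigma$, and one checks this does not contradict restrictedness because we may instead take the mgu that keeps $X$ as a variable and substitutes into the other side — contradicting that $\sigma$ was chosen. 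I would streamline this last part into the clean statement that a most general unifier of BG terms maps each variable either to a variable or to a term strictly required by the other side, and since the ``other side'' in the relevant position is again a single abstraction variable (because the clause on that side is weakly abstracted, so it too has a maximal pure BG subterm there whose corresponding spot is a variable), $X\sigma$ is a variable or a domain element.

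The main obstacle I anticipate is being careful about the phrase ``corresponding position'': it is not \emph{a priori} obvious that when $X$ occurs at some position in $s$, the matching position in $t$ is occupied by a variable rather than by a non-variable pure BG term that would legitimately force $X$ to a compound value. Resolving this requires exploiting weak abstraction of \emph{both} clauses symmetrically, plus the hypothesis that $s\sigma$ is not pure to pin down \emph{where} the non-pure material enters and hence exclude abstraction variables from that part of the term; I expect the bulk of the write-up to be a clean case analysis establishing exactly this ``matching-position is a variable or forces only pure-BG-variable values'' claim, after which restrictedness is immediate. I would also need the elementary fact (standard, but worth stating) that most general unifiers may be chosen so that a variable occurring on only one side is never eliminated in favor of a compound term when the other side at that position is itself a variable.
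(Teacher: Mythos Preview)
Your key structural claim is false. You assert that in a weakly abstracted clause an abstraction variable $X$ can occur only inside an abstraction literal $X \not\approx q$ or immediately below a BG operator with pure BG siblings. But look at the paper's own example of a weakly abstracted clause: $\op g(1, X, \op f(1)+Y, z) \approx \beta \lor X \not\approx \alpha \lor Y \not\approx \alpha+1$. Here $X$ sits directly under the FG operator $\op g$, and $Y$ sits under the BG operator $+$ with the FG sibling $\op f(1)$. The point is that a \emph{variable} is never a target term (Def.~\ref{def:weak-abs} excludes variables and domain elements), so abstraction variables may occur anywhere. The structural restriction that weak abstraction actually enforces is on \emph{non-variable, non-domain-element} BG terms: those cannot occur below a FG operator or next to an FG/impure sibling. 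Your argument, which repeatedly uses ``abstraction variables live only under pure BG contexts'' to pin down the corresponding position on the other side, therefore does not go through.

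Even after correcting the structural fact, your sketch does not handle the hard case. When one of $s,t$ is a non-variable \emph{impure} BG term (say $x + X$), the ``corresponding position'' analysis is delicate: such terms have a very constrained shape in weakly abstracted clauses (one path of nested impure BG terms with all side arguments being abstraction variables or domain elements), and the paper exploits this by running decomposition down to the deepest non-domain-element operator and then doing a further case split on whether the other term reaches that depth (Cases 2.1 and 2.2, with sub-sub-cases). Your proposal's ``the other side at that position is again a single abstraction variable'' is exactly what fails here and needs the detailed analysis; in particular Case 2.2.2 of the paper requires showing that a certain ordinary variable occurs only once in the decomposed problem, which is not something your maximal-pure-BG-subterm picture captures. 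The paper's route---a top-level case split on whether $s,t$ are ordinary BG variables, non-variable impure BG terms, or FG terms, followed by explicit decomposition---is what makes this tractable.
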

\begin{proof}
Since $\sigma$ is a simple unifier of $s$ and $t$,
we have $s\sigma = t\sigma$.
Moreover, simple substitutions map pure BG terms to pure BG terms,
so we know that neither $s$ nor $t$ are pure BG terms.
This leaves the following possibilities:

Case 1: $s$ or $t$ is an ordinary BG variable $x$.
In this case, $\sigma$ has the form $[x \mapsto t]$ or $[x \mapsto s]$,
and since $Y\sigma = Y$ for every abstraction variable,
this is obviously a restricted substitution.

Case 2: $s$ or $t$ is a non-variable impure BG term.
Without loss of generality assume that $s$ is a non-variable impure BG term,
and that, if $t$ is also a non-variable impure BG term,
the depth of $s$ is not smaller than the depth of $t$
(otherwise swap $s$ and $t$).
The term $s$ has the form $f(s_1,\dots,s_k)$,
where $f$ is a BG operator symbol.
Since $s$ occurs in a weakly abstracted clause,
it may not contain any target terms,
so we have either that every $s_i$ is a variable or a domain element
and at least one $s_j$ is an ordinary variable,
or that exactly one $s_j$ is a non-variable impure BG term
and every $s_i$, $i \not= j$, is an abstraction variable or a domain element.
Let $p$ be the position of the deepest operator symbol in $s$
that is not a domain element.
We apply the decomposition rule
\[\frac{M \cup \{g(u_1,\dots,u_m) \doteq g(v_1,\dots,v_m)\}}
{M \cup \{u_1 \doteq v_1,\dots,u_m \doteq v_m\}}\]
exhaustively to the unifiability problem $\{s \doteq t\}$
and obtain
a unifiability problem $\{u_1 \doteq v_1, \dots, u_n \doteq v_n\}$.
Clearly $\sigma$ is also a most general unifier of $M$.

Case 2.1: $t\rstr{p} = g(\dots)$.
Then $u_1,\dots,u_k$ are direct subterms of $s\rstr{p}$, so they
are either BG variables or domain elements;
$u_{k+1},\dots,u_n$ are either BG abstraction variables or domain elements;
and $v_1,\dots,v_k$ are direct subterms of $t\rstr{p}$.

Case 2.1.1: $t$ is also an impure BG term.
Then $v_1,\dots,v_k$ are
also BG variables or domain elements,
and $v_{k+1},\dots,v_n$ are also BG abstraction variables or domain elements.
Consequently, $\sigma$ binds every variable to a variable or a domain element,
so it is restricted.

Case 2.1.2: $t$ is a FG term.
Then (every subterm of) every $v_i$ must be a variable, or a domain element,
or a FG term, otherwise it would be a target term.
So $\sigma$ cannot map an abstraction variable to
a BG term that is neither a variable nor a domain element,
hence it is again restricted.

Case 2.2: $t\rstr{qj} = \zeta$ for some proper prefix $q$ of $p$.
Then $u_1,\dots,u_k$ are direct subterms of $s\rstr{q}$, so
$u_j$ is a non-variable impure BG term and every $u_i$, $i \not= j$
is a BG abstraction variable or a domain element;
$u_{k+1},\dots,u_n$ are either BG abstraction variables or domain elements;
and $v_1,\dots,v_k$ are direct subterms of $t\rstr{q}$,
where $v_j = \zeta$.

Case 2.2.1: $v_j = \zeta$ is an abstraction variable $X$.
Then every equation $u_i \doteq v_i$ contains at least one pure BG term,
so $s\sigma = t\sigma$ is a pure BG term, contradicting the assumption.

Case 2.2.2: $v_j = \zeta$ is an ordinary variable $x$.
Since $v_i\sigma = u_i\sigma$ for every $i$,
each $v_i$ must be a BG term;
since $v_j$ is an impure BG term
this implies that every $v_i$, $i \not= j$, must
be a variable or domain element
(otherwise it would be a target term).
Assume that for some $l \not= j$, $v_l$ agrees with $x$.
Then $M = \{u_1 \doteq v_1, \dots, u_j \doteq x, \dots,
u_l \doteq x, \dots, u_n \doteq v_n\}$
is equivalent to
$M' = \{u_1 \doteq v_1, \dots, u_j \doteq u_l, \dots,
u_l \doteq x, \dots, u_n \doteq v_n\}$.
In $M'$, however, every equation contains at least one pure BG term,
so $s\sigma = t\sigma$ would be a pure BG term, contradicting the assumption.
Therefore we know that the variable $x$
occurs only once in $M$.
Consequently, $\sigma$ maps every variable except $x$ to a variable or
a domain element; since $x$ is an ordinary variable,
this implies that $\sigma$ is restricted.

Case 3: Both $s$ and $t$ are FG terms.
Since $s$ and $t$ occur in weakly abstracted clauses,
neither $s$ nor $t$ can contain a BG subterm that is not a variable
or a domain element.
Therefore $\sigma$ cannot map any pure variable $X$ to a BG term
that is not a variable or a domain element, so $\sigma$ is restricted.
\qed
\end{proof}

By condition (ii) in the \IR{Equality resolution} inference rule,
Prop.~\ref{prop:restricted-substitutions} guarantees that any simple mgu $\sigma$
of  an \IR{Equality resolution} inference is restricted. The same holds true
analogously for the other inference rules. As a consequence, unification algorithms
do not need to compute weakening substitutions.  
In essence, a suitably modified standard unification algorithm needs to prevent
binding an abstraction variable to a term other than an abstraction variable or a
domain element.  

For example, the (weakly abstracted) clause $s \not\approx t = Z+u \not\approx (x+y) + U$ admits 
no \IR{Equality resolution} inference. Although there is a simple 
mgu $\sigma = [x \mapsto X,\ y \mapsto Y,\ Z \mapsto X+Y,\ u \mapsto U]$ of $s$ and $t$, the term $s\sigma = (X+Y) + U$
is pure BG, hence violating condition (ii) in \IR{Equality resolution}, and so
$\sigma$ does not need to be computed in the first place.

In contrast to~\cite{Bachmair:Ganzinger:Waldmann:TheoremProvingHierarchical:AAECC:94},
the inference rules above include
an explicit weak abstraction in their
conclusion. Without it, conclusions would not be weakly abstracted in general.  For
example \IR{Positive superposition} applied to the weakly abstracted clauses $\op
f(X) \approx 1 \vee X \not\approx \alpha$ and $\op P(\op f(1) + 1)$ would then yield $\op P(1 + 1) \vee 1 \not\approx \alpha$,
whose $\op P$-literal is not weakly abstracted.
Additionally, the side conditions of our rules differ somewhat from
the corresponding rules
of~\cite{Bachmair:Ganzinger:Waldmann:TheoremProvingHierarchical:AAECC:94},
this is due on the one hand to the presence of impure BG terms
(which must sometimes be treated like FG terms),
and on the other hand to the fact that, after weak abstraction,
literals may still contain both FG and BG operators.

The inference rules are supplemented by a redundancy criterion,
that is, a mapping $\RCl$ from sets of formulae to sets of formulae
and a mapping $\RInf$ from sets of formulae to sets of inferences
that are meant to specify formulae that may be removed from $N$
and inferences that need not be computed.
($\RCl(N)$ need not be a subset of $N$
and $\RInf(N)$ will usually also contain inferences
whose premises are not in $N$.)

\COMMENT{UW}{12/01/2019}{
  Restored the traditional conditions ``$N \subseteq N'$ implies
  $\RInf(N) \subseteq \RInf(N')$''
  and
  ``$N' \subseteq \RCl(N)$ implies
  $\RCl(N) \subseteq \RCl(N \setminus N')$'' here
  and removed them in Thm.~\ref{thm:flat-superp-calc}.
  At a pinch, one can work without them, but the resulting conditions
  for fairness are not constructive.}

\begin{definition}
\label{dfn:redundancy-criterion}%
A pair $\Red = (\RInf,\RCl)$ is called a \emph{redundancy criterion}
(\emph{with respect to an inference system $\Inf$ and
a consequence relation $\models$}),
if the following conditions are satisfied for all sets of formulae $N$ and $N'$:
\begin{enumerate}[\rm(i)]
\item
$N \setminus \RCl(N) \models \RCl(N)$.
\item
If $N \subseteq N'$, then $\RCl(N) \subseteq \RCl(N')$
and $\RInf(N) \subseteq \RInf(N')$.
\item
If $\iota$ is an inference and its conclusion is in $N$, then $\iota \in \RInf(N)$.
\item
If $N' \subseteq \RCl(N)$, then $\RCl(N) \subseteq \RCl(N \setminus N')$
and $\RInf(N) \subseteq \RInf(N \setminus N')$.
\end{enumerate}
The inferences in $\RInf(N)$ and the formulae in $\RCl(N)$
are said to be \emph{redundant} with respect to~$N$. \qed
\end{definition}

Let $\SSP$ be
the ground standard superposition calculus
using the inference rules
equality resolution, negative superposition, positive superposition,
and equality factoring
(Bachmair and Ganzinger~\cite{Bachmair:Ganzinger:RewriteBasedTP:JLC:94},
Nieuwenhuis~\cite{Nieuwenhuis1991},
Nieuwenhuis and Rubio~\cite{Nieuwenhuis:Rubio:ParamodulationTheoremProving:HandbookAR:2001}).
To define a redundancy criterion for $\HSP_\Base$ and
to prove the refutational completeness of the calculus,
we use the same approach as
in~\cite{Bachmair:Ganzinger:Waldmann:TheoremProvingHierarchical:AAECC:94}
and relate $\HSP_\Base$ inferences to
the corresponding $\SSP$ inferences.

For a set of ground clauses $N$,
we define $\RClFlt(N)$ to be the set of all
clauses $C$ such that there exist clauses $C_1, \ldots, C_n \in N$ that
are smaller than $C$ with respect to $\succ$
and $C_1, \ldots, C_n \models C$.
We define $\RInfFlt(N)$ to be the set of all ground
$\SSP$ inferences $\iota$
such that either a premise of $\iota$ is in $\RClFlt(N)$ or else
$C_0$ is the conclusion of $\iota$ and
there exist clauses $C_1, \ldots, C_n \in N$ that
are smaller with respect to $\gtc$ than the maximal premise of $\iota$
and $C_1, \ldots, C_n \models C_0$.

The following results can be found in
\cite{Bachmair:Ganzinger:RewriteBasedTP:JLC:94}
and~\cite{Nieuwenhuis1991}:

\begin{theorem}
\label{thm:flat-superp-calc}%
The (ground) standard superposition calculus $\SSP$
and $\RedFlt = (\RInfFlt,\RClFlt)$
satisfy the following properties:
\leftmargini30pt
\begin{enumerate}[\rm(i)]
\item
$\RedFlt$
is a redundancy criterion with respect to $\models$.
\item
$\SSP$ together with $\RedFlt$
is refutationally complete.
%% \item
%% $N \subseteq N'$ implies
%% $\RInfFlt(N) \subseteq \RInfFlt(N')$.
%% \item
%% $N' \subseteq \RClFlt(N)$ implies
%% $\RClFlt(N) \subseteq \RClFlt(N \setminus N')$.
% \footnote{%
%    Note that redundancy in our sense is called
%    compositeness in~\cite{BachmairGanzinger1991}.}
\end{enumerate}
\end{theorem}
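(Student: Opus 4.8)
The plan is to treat Theorem~\ref{thm:flat-superp-calc} as the classical refutational completeness result for ground superposition, following Bachmair--Ganzinger and Nieuwenhuis, and to sketch its two parts separately.

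For part (i) I would simply check the four conditions of Definition~\ref{dfn:redundancy-criterion} for $\RedFlt = (\RInfFlt,\RClFlt)$. Condition (ii) (monotonicity in $N$) is immediate: enlarging $N$ can only enlarge the pool of clauses available to witness the redundancy of a clause or of an inference. For condition (iii), recall that in every $\SSP$ inference the conclusion $C_0$ is strictly smaller, with respect to $\succ$, than the maximal premise; this is exactly what the ordering side conditions of the superposition rules guarantee. Hence if $C_0 \in N$ we may take $C_0$ itself as the single witnessing clause, so $\iota \in \RInfFlt(N)$. Condition (i), $N \setminus \RClFlt(N) \models \RClFlt(N)$, follows by well-founded induction on $\succ$: if $C \in \RClFlt(N)$ is witnessed by $C_1,\dots,C_n \in N$ with each $C_i \prec C$, then each $C_i$ is either already in $N \setminus \RClFlt(N)$ or, by the induction hypothesis, is entailed by clauses of $N \setminus \RClFlt(N)$ that are $\prec C_i \prec C$; chaining these entailments yields $C$. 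Note that no compactness is needed here, since every witnessing set is finite. Condition (iv) is the analogous statement that deleting clauses that are themselves redundant loses no redundancy, again proved by well-founded induction, replacing any redundant witness by strictly smaller witnesses.

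For part (ii) I would use the standard candidate-model construction. Assume $N$ is saturated with respect to $\SSP$ and $\RedFlt$ and that $\Box \notin N$; I build a term-generated interpretation and show it models $N$. Processing the ground clauses of $N$ in increasing $\succ$-order, a clause $C = s \approx t \lor C'$ with $s \succ t$ is \emph{productive} and contributes the rewrite rule $s \to t$ provided $C$ is false in the interpretation generated by the clauses processed so far, $s \approx t$ is strictly maximal in $C$, $s$ is irreducible by the rules produced so far, and $t$ is likewise irreducible; otherwise $C$ produces nothing. Let $R_N$ be the union of all produced rules and $I$ the congruence it induces on ground terms. This is well defined: $R_N$ is terminating because all left-hand sides are $\succ$-maximal and $\succ$ is a reduction ordering, and it is confluent because the construction keeps the left-hand sides mutually irreducible, so there are no critical pairs.

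The heart of the proof, and the step I expect to be the main obstacle, is showing $I \models C$ for every $C \in N$. Suppose not, and let $C$ be the $\succ$-minimal counterexample. Since $C$ is false in $I$ and not productive, a case analysis on the maximal literal of $C$ and the reason it fails to produce produces an applicable $\SSP$ inference $\iota$ with premise(s) in $N$ --- for superposition \emph{into} $C$ one uses a productive clause $\prec C$ that supplies the required equation, and one appeals to saturation together with the ordering restrictions of the rules to license the inference. Its conclusion $C_0$ is either a strictly smaller clause still false in $I$, contradicting the minimality of $C$, or $C_0 \in \RClFlt(N)$, in which case $C_0$ follows from clauses of $N$ smaller than $C$, all true in $I$ by minimality, so $C_0$ would be true in $I$ after all; either way we have a contradiction. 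Hence $I \models N$, so a saturated $\Box$-free set is satisfiable, which is refutational completeness by contraposition. The delicate points are the bookkeeping of the case analysis --- equality resolution, the two superposition rules, and equality factoring each need their own argument, and the positive-superposition and equality-factoring cases rely on the irreducibility invariants built into productivity --- and verifying that each invoked inference is genuinely $\RInfFlt$-non-redundant, so that saturation indeed applies to it.
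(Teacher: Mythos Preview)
The paper does not prove Theorem~\ref{thm:flat-superp-calc} at all; it is introduced with ``The following results can be found in \cite{Bachmair:Ganzinger:RewriteBasedTP:JLC:94} and~\cite{Nieuwenhuis1991}'' and stated without proof, serving purely as an imported black box on which the hierarchic lifting is built. Your sketch is exactly the classical argument from those references --- well-founded induction on $\succ$ for the redundancy-criterion conditions, and the Bachmair--Ganzinger candidate-model construction for refutational completeness --- so there is no divergence in approach to compare.

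Two small technical remarks on your sketch, for accuracy. First, the standard productivity condition does not require the right-hand side $t$ to be irreducible; only $s$ must be irreducible by the rules produced so far (plus, depending on the presentation, a condition that $C'$ stays false after adding $s \to t$, which is what makes the equality-factoring case go through). Second, in your minimal-counterexample argument you write that $C_0$ being false and smaller than $C$ ``contradicts the minimality of $C$'', but $C_0$ is in general not an element of $N$, so it is not itself a smaller counterexample. The correct chain is: saturation puts the inference in $\RInfFlt(N)$, hence $C_0$ follows from $N$-clauses smaller than $C$, all of which are true in $I$ by minimality, so $C_0$ is \emph{true} in $I$; one then checks, rule by rule, that truth of $C_0$ (together with truth of the smaller side premise in the superposition cases) forces truth of $C$, which is the contradiction. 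You allude to this under ``bookkeeping of the case analysis'', but the direction of the argument at that point is stated backwards.
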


Let $\iota$ be an $\HSP_\Base$ inference
with premises $C_1,\ldots,C_n$ and conclusion $\abstr(C)$,
where the clauses $C_1,\ldots,C_n$ have no variables in common.
Let $\iota'$ be a ground $\SSP$ inference
with premises $C'_1,\ldots,C'_n$ and conclusion $C'$.
If $\sigma$ is a simple substitution such that $C' = C\sigma$
and $C'_i = C_i\sigma$ for all $i$,
and if none of the $C'_i$ is a BG clause,
then $\iota'$ is called a
\emph{simple ground instance} of $\iota$.
The set of all simple ground instances of an inference $\iota$ is denoted
by $\sgi(\iota)$.

\begin{definition}
\label{def:hierarchic-redundancy-criterion}
Let $N$ be a set of weakly abstracted clauses.
We define $\RInfSgi(N)$ to be the set of all inferences $\iota$ such that
either $\iota$ is not a \IR{Close} inference
and $\sgi(\iota) \subseteq \RInfFlt(\sgi(N)\cup\GndTh(\BAlgs))$,
or else
$\iota$ is a \IR{Close} inference and $\Box \in N$.
We define $\RClSgi(N)$ to be the set of all weakly abstracted clauses $C$
such that $\sgi(C) \subseteq \RClFlt(\sgi(N)\cup\GndTh(\BAlgs))
\cup \GndTh(\BAlgs)$.\footnote{%
 In contrast to~\cite{Bachmair:Ganzinger:Waldmann:TheoremProvingHierarchical:AAECC:94},
 we include $\GndTh(\BAlgs)$ in the redundancy criterion.
 (This is independent of the abstraction method used;
 it would also have been useful
 in~\cite{Bachmair:Ganzinger:Waldmann:TheoremProvingHierarchical:AAECC:94}.)
}
\qed
\end{definition}

\section{Refutational Completeness}
\label{sec:refutational-completeness}
To prove that $\HSP_\Base$ and $\RedSgi = (\RInfSgi, \RClSgi)$
are refutationally complete for sets of weakly abstracted $\Sigma$-clauses
and compact BG specifications $(\Sigma_\back,\BAlgs)$,
we use the same technique as
in~\cite{Bachmair:Ganzinger:Waldmann:TheoremProvingHierarchical:AAECC:94}:

First we show that $\RedSgi$ is a redundancy criterion
with respect to $\models_\BAlgs$,
and that a set of clauses remains
sufficiently complete if
new clauses are added or if
redundant clauses are deleted.%
The proofs for both properties are similar to the corresponding ones
in~\cite{Bachmair:Ganzinger:Waldmann:TheoremProvingHierarchical:AAECC:94};
the differences are due, on the one hand, to the fact that we include
$\GndTh(\BAlgs)$ in the redundancy criterion
and in the definition of sufficient completeness,
and, on the other hand, to the explicit abstraction steps in our
inference rules.

\begin{lemma}
\label{lemma:modelssgi-implies-modelsbalgs}
If $\sgi(N) \cup\GndTh(\BAlgs) \models \sgi(C)$, then $N \models_\BAlgs C$.
\end{lemma}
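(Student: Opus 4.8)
The goal is to show that $\sgi(N) \cup \GndTh(\BAlgs) \models \sgi(C)$ implies $N \models_\BAlgs C$. I would argue contrapositively, or equivalently, take an arbitrary $\BAlgs$-interpretation $I$ with $I \models N$ and show $I \models C$. The key observation is that a $\BAlgs$-interpretation satisfies a (non-ground) clause set precisely when it satisfies all its \emph{simple} ground instances, and that it automatically validates $\GndTh(\BAlgs)$ because its restriction to $\Sigma_\back$ lies in $\BAlgs$.

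\textbf{Key steps.} First I would unfold the definitions: assume $I$ is a $\Sigma$-interpretation with $I\rstr{\Sigma_\back} \in \BAlgs$ and $I \models N$. Since $I \models N$ means $I$ satisfies every ground instance of every clause in $N$, in particular it satisfies every \emph{simple} ground instance, so $I \models \sgi(N)$. Second, I would check that $I \models \GndTh(\BAlgs)$: every ground $\Sigma_\back$-formula in $\GndTh(\BAlgs)$ is satisfied by every interpretation in $\BAlgs$, hence by $I\rstr{\Sigma_\back}$, hence by $I$ (since satisfaction of a $\Sigma_\back$-formula depends only on the $\Sigma_\back$-reduct). Third, combining these, $I \models \sgi(N) \cup \GndTh(\BAlgs)$, so by the hypothesis $\sgi(N) \cup \GndTh(\BAlgs) \models \sgi(C)$ we get $I \models \sgi(C)$. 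Finally, I would argue $I \models C$: every simple ground instance of $C$ is true in $I$, and I need to upgrade this to all ground instances. Here one uses that $I$, being a model in our sense, is required to satisfy a non-ground clause iff it satisfies all its ground instances — but the hypothesis only hands us the \emph{simple} ones. The resolution is that $I \models N$ was \emph{assumed} (we are proving $I \models C$ from $I \models N$, not deriving $I \models \sgi(C) \Rightarrow I \models C$ in isolation); wait — more carefully, the statement to prove is exactly $N \models_\BAlgs C$, so I must produce $I \models C$, i.e. $I$ satisfies all ground instances $C\tau$. The point is that it suffices to know $I \models \sgi(C)$ \emph{only if} every relevant ground instance is simple, which is not generally true. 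So the correct final move is different: since $C$ may contain abstraction variables, and $I \models \sgi(C)$ gives truth of $C\tau$ for all \emph{simple} $\tau$, I invoke that a $\BAlgs$-interpretation's obligation is weaker — actually, re-reading the definition of $\models_\BAlgs$ in the excerpt, a model of $C$ must satisfy \emph{all} ground instances $C\sigma$. The saving grace is that the lemma is used in a context where this is enough; concretely, one shows that for a $\BAlgs$-interpretation, satisfying $\sgi(C)$ already forces satisfying $C$, because abstraction variables range (semantically, over a term-generated $\BAlgs$-model restricted appropriately) only over values named by pure BG terms. I would make this precise by noting every element of a BG sort in $I\rstr{\Sigma_\back}$ is the interpretation of some ground BG term (term-generatedness of $\BAlgs$-models), so any ground substitution's effect on an abstraction variable is matched by a simple one, giving $I \models C\tau$ for \emph{every} ground $\tau$, hence $I \models C$.

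\textbf{Main obstacle.} The delicate point is precisely the last step — bridging from ``$I$ satisfies all \emph{simple} ground instances of $C$'' to ``$I$ satisfies $C$'' as a $\BAlgs$-interpretation. This rests on term-generatedness of the BG reduct combined with the fact that abstraction variables are BG-sorted, so that for any ground instance $C\tau$ we can find a simple $\tau'$ agreeing with $\tau$ on ordinary variables and mapping each abstraction variable to a ground BG term denoting the same element as $\tau$ does; congruence then transfers truth from $C\tau'$ to $C\tau$. One must be a little careful that FG-sorted ordinary variables in $C$ are handled by $\tau$ directly (they impose no purity constraint) and that the matching $\tau'$ is indeed simple. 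Everything else is routine unfolding of the definitions of $\models_\BAlgs$, $\sgi$, and $\GndTh(\BAlgs)$.
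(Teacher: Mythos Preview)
Your proposal is correct and follows essentially the same approach as the paper's own proof: take a $\BAlgs$-interpretation $I$ modeling $N$, observe that it then models $\sgi(N)\cup\GndTh(\BAlgs)$ and hence $\sgi(C)$, and finally bridge from $\sgi(C)$ to all ground instances of $C$ using the no-junk property of $\BAlgs$-interpretations (term-generatedness of $I\rstr{\Sigma_\back}$) to replace any ground substitution by a semantically equivalent simple one. Despite the detours in your write-up, the argument you land on in the ``Main obstacle'' paragraph is exactly the paper's key step.
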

\begin{proof}
Suppose that $\sgi(N) \cup\GndTh(\BAlgs) \models \sgi(C)$.
Let $I'$ be a $\Sigma$-model of $N$ whose restriction to
$\Sigma_\back$ is contained in $\BAlgs$.
Clearly, $I'$ is also a model of $\GndTh(\BAlgs)$.
Since $I'$ does not add new elements to the sorts of $I = I'\rstr{\Sigma_\back}$
and $I$ is a term-generated $\Sigma_\back$-interpretation,
we know that for every ground $\Sigma$-term~$t'$ of a BG sort
there is a ground BG term $t$
such that $t$ and $t'$ have the same interpretation in $I'$.
Therefore, for every ground substitution $\sigma'$ there is an
equivalent simple ground substitution $\sigma$;
since $C\sigma$ is valid in $I'$, $C\sigma'$ is also valid.
\qed
\end{proof}

We call the simple most general unifier $\sigma$ that is computed during
an inference $\iota$
and applied to the conclusion the pivotal substitution
of $\iota$.
(For ground inferences,
the pivotal substitution is the identity mapping.)
If $L$ is the literal $[\neg]\, s \approx t$ or $[\neg]\, s[u] \approx t$
of the second or only premise
that is eliminated in $\iota$,
we call $L\sigma$ the pivotal literal of~$\iota$,
and we call $s\sigma$ or $s[u]\sigma$ the pivotal term of~$\iota$.

% \begin{lemma}
% \label{lemma:target-term-after-inf}
% If $\iota$ is an $\HSP_\Base$ inference
% \[
% \infrule{C_1}{\abstr(C_0\sigma)}
% \qquad
% \textrm{or}
% \qquad
% \infrule{C_2 \qquad C_1}{\abstr(C_0\sigma)}
% \]
% where the premises are weakly abstracted and
% $\sigma$ is the pivotal substitution,
% and if $q$ is a target term in $C_0\sigma$,
% then
% \begin{itemize}
% \item[\rm (i)]
%   $q$ is a subterm of $x\sigma$ for some variable $x$ occurring in $C_0$,
% \item[\rm (ii)]
%   or $\iota$ is a superposition inference in which the
%   term $s[u]$ in $C_1$ has been replaced by $s[r]$ using an equation
%   $l \approx r$ in $C_2$
%   and $q$ is a proper subterm of $s[r]\sigma$.
% \end{itemize}
% \end{lemma}
% \begin{proof}
% Assume that
% a literal $[\neg]\, v\sigma \approx v'\sigma$ occurs in $C_0\sigma$
% and that a subterm $q$ of $v\sigma$ is a target term.
% By the definition of target terms, $q$ must be a proper subterm of $v\sigma$.
% If $\iota$ is a superposition inference and $v = s[r]$,
% then case~(ii) holds.
% Otherwise,
% there is some literal $[\neg]\, v \approx v''$ in $C_1$ or $C_2$.
% Since the premises are weakly abstracted,
% no subterm of $v$ is a target term in $C_1$ or $C_2$.
% By Lemma~\ref{lemma:non-target-stable-under-simple-instances},
% this implies that
% there cannot exist a non-variable subterm $w$ of $v$
% such that $q = w\sigma$.
% Consequently, $q$ must occur in $v\sigma$ at or below a variable position
% of $v$, that is, $q$ is a subterm of $x\sigma$
% for some variable $x$ occurring in $v$.
% \qed
% \end{proof}

\begin{lemma}
\label{lemma:abstr-lits-after-inf-small-enough}%
Let $\iota$ be an $\HSP_\Base$ inference
\[
\infrule{C_1}{\abstr(C_0\sigma)}
\qquad
\textrm{or}
\qquad
\infrule{C_2 \qquad C_1}{\abstr(C_0\sigma)}
\]
from weakly abstracted premises
with pivotal substitution $\sigma$.
Let $\iota'$ be a simple ground instance of $\iota$ of the form
\[
\infrule{C_1\tau}{C_0\sigma\tau}
\qquad
\textrm{or}
\qquad
\infrule{C_2\tau \qquad C_1\tau}{C_0\sigma\tau}
\]
Then there is a simple ground instance of $\abstr(C_0\sigma)$
that has the form $C_0\sigma\tau \lor E$, where
$E$ is a (possibly empty) disjunction of literals $s \not\approx s$,
and each literal of $E$ is smaller than the
pivotal literal of $\iota'$.
\end{lemma}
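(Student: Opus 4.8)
The plan is to feed $\iota'$ into Lemma~\ref{lemma:no-new-targets-by-abstraction}, taking $D_0 := C_0\sigma$ and $\tau_0 := \tau$. This is legitimate: $\tau$ is a simple substitution and $C_0\sigma\tau$ is ground, since it is the conclusion of the ground $\SSP$ inference $\iota'$. Let $D_0 = C_0\sigma, D_1, \dots, D_k = \abstr(C_0\sigma)$ be the clauses produced by the weak abstraction of $C_0\sigma$. By part~(1) of that lemma $\tau_k$ is simple, so $D_k\tau_k$ is a simple ground instance of $\abstr(C_0\sigma)$, and by part~(3) we have $D_k\tau_k = C_0\sigma\tau \lor E$ with $E := \bigvee_{0 \leq i < k} q_i\tau_i \not\approx q_i\tau_i$, a (possibly empty) disjunction of trivial disequations. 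So $D_k\tau_k$ is the witnessing simple ground instance, and it only remains to bound each literal of $E$.

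Write $\pi$ for the pivotal term of $\iota'$, i.e.\ the maximal term of the pivotal literal $L'$ of $\iota'$. The core of the proof is the claim that $q_i\tau_i \prec \pi$ for every $i \in \{0, \dots, k{-}1\}$: once this holds, the literal $q_i\tau_i \not\approx q_i\tau_i$, whose multiset consists of four copies of $q_i\tau_i$, is strictly smaller than $L'$ in the multiset extension of $\succ$, because $L'$ contains the term $\pi \succ q_i\tau_i$. For the claim I would use two ingredients. First, a routine property of ground $\SSP$ inferences: every term occurring in the conclusion $C_0\sigma\tau$ of $\iota'$ is $\preceq \pi$. This follows, rule by rule (\IR{Equality resolution}, \IR{Negative superposition}, \IR{Positive superposition}, \IR{Equality factoring}), from the ordering side conditions that hold for $\iota'$ as a genuine ground $\SSP$ inference — the rewritten side of a superposition literal decreases by context compatibility, and the clause parts inherited from the two premises are dominated by the respective maximal literals, whose maximal terms are $\preceq \pi$. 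Second, the subterm property of $\succ$ on ground terms (any non-empty ground context strictly increases a ground term), which is a consequence of well-foundedness together with compatibility with contexts and totality on ground terms.

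The claim is then proved by strong induction on $i$, distinguishing where the occurrence of $q_i$ abstracted in the step $D_i \to D_{i+1}$ sits in $D_i = D'_i \lor E_i$. If it lies in $D'_i$, then part~(5) of Lemma~\ref{lemma:no-new-targets-by-abstraction} supplies a target term $\qbar_i$ of $C_0\sigma$ with $q_i\tau_i = \qbar_i\tau$; being a target term, $\qbar_i$ occurs in $C_0\sigma$ as a proper subterm of some $f(\dots, \qbar_i, \dots)$, so $q_i\tau_i = \qbar_i\tau$ is a proper subterm of $f(\dots, \qbar_i\tau, \dots)$, a term of $C_0\sigma\tau$; by the subterm property and the first ingredient, $q_i\tau_i \prec f(\dots, \qbar_i\tau, \dots) \preceq \pi$. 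If instead the occurrence lies inside an abstraction literal of $E_i$, then by part~(3) it is mapped by $\tau_i$ into $E_i\tau_i = \bigvee_{0 \leq j < i} q_j\tau_j \not\approx q_j\tau_j$, and since $q_i$ (a target term) occurs below an operator in its literal of $E_i$, $q_i\tau_i$ ends up as a proper subterm of some $q_j\tau_j$ with $j < i$; the subterm property gives $q_i\tau_i \prec q_j\tau_j$, and the induction hypothesis gives $q_j\tau_j \prec \pi$.

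The main obstacle I expect is the bookkeeping for the first ingredient — verifying uniformly across the four inference rules that every conclusion term is dominated by the pivotal term, using precisely the side conditions that survive as honest ground conditions in $\iota'$ — together with making the ``occurrence inside an abstraction literal'' case precise: one must observe that $\tau_i$ sends each literal of $E_i$ to one of the disjuncts $q_j\tau_j \not\approx q_j\tau_j$, and that a target term strictly inside such a literal of $E_i$ lands strictly inside the corresponding $q_j\tau_j$. Both points are elementary once the matching recorded in parts~(3)--(5) of Lemma~\ref{lemma:no-new-targets-by-abstraction} is unwound.
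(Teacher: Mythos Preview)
Your setup via Lemma~\ref{lemma:no-new-targets-by-abstraction} and the induction handling the ``$q_i$ inside an earlier abstraction literal'' case are correct and match the paper. The gap is your first ingredient: the claim that every term occurring in $C_0\sigma\tau$ is $\preceq \pi$ is false in general. The rules \IR{Equality resolution} and \IR{Negative superposition} permit the pivotal literal to be a \emph{selected} negative FG literal of $C_1$, and in that case no maximality condition is imposed on it; the remaining literals of $C_1\tau$, which pass unchanged into the conclusion, may well contain terms strictly larger than $\pi$. Your chain $q_i\tau_i \prec f(\dots,\qbar_i\tau,\dots) \preceq \pi$ then breaks at the second inequality.

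The paper's proof (its Case~1.2) closes exactly this hole with an argument that does not rely on a global bound on conclusion terms. It exploits instead that the premises are weakly abstracted: a target term $\qbar_i$ of $C_0\sigma$ sitting inside a literal inherited from $C_1$ cannot already have been a target term of $C_1$, so by Lemma~\ref{lemma:non-target-stable-under-simple-instances} it must arise through some variable $\zeta \in \dom(\sigma) \cap \vars(C_1)$. Since $\sigma$ is the mgu computed at the pivotal position, such a $\zeta$ necessarily occurs in the subterm $u$ (for \IR{Negative superposition}) or in $s$ or $t$ (for \IR{Equality resolution}), i.e.\ inside the pivotal literal itself, and one obtains $q_i\tau_i \preceq \zeta\tau \prec \pi$ directly from the structure of that literal. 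Your argument is essentially the paper's Case~1.1 (no selection, pivotal literal maximal); you still need this separate Case~1.2 reasoning to handle selected literals.
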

\begin{proof}
Without loss of generality, we may assume that $\sigma$ is
an idempotent most general unifier over $\vars(C_1) \cup \vars(C_2)$
and that $\tau = \sigma\rho$
for some substitution $\rho$;
hence $\sigma\tau = \sigma\sigma\rho = \sigma\rho = \tau$
and in particular $C_0\sigma\tau = C_0\tau$.

Let $\tau_0 = \tau$ and
let $D_0 = C_0\sigma$.
Let $D_n$ be the result of the $n$th abstraction step
starting from $D_0$ for $n \in \{0,\dots,k\}$,
and let $\abstr(C_0\sigma) = D_k$.
According to part (3) of Lemma~\ref{lemma:no-new-targets-by-abstraction},
there are simple substitutions $\tau_n$ for $1 \leq n \leq k$
such that $D_n\tau_n$ is a ground instance of $D_n$
and has the form
$D_n\tau_n$ \,$=$\,
$D_0\tau_0 \lor \bigvee_{0\leq i<n} q_i\tau_i \not\approx q_i\tau_i$,
where the literals $q_i\tau_i \not\approx q_i\tau_i$
are ground instances of the abstraction literals
introduced so far
and $q_i$ is a target term in $D_i$.
Since $C_0\sigma\tau = D_0\tau = D_0\tau_0$,
this clause has essentially the required form.
We still have to prove, though, that
each literal $q_i\tau_i \not\approx q_i\tau_i$
is smaller than the
pivotal literal $L$ of $\iota'$.

If $q_i$ occurs in $D_i$ in a literal $K$ that has been
generated by the $j$th abstraction step ($j < i$),
then $K\tau_j$ is a literal $q_j\tau_j \not\approx q_j\tau_j$
and $q_i\tau_i$ is a proper subterm of $q_j\tau_j$.
By induction on the number of abstraction steps we obtain
$q_i\tau_i \not\approx q_i\tau_i \,\prec\, q_j\tau_j \not\approx q_j\tau_j
\,\prec\, L$.

If $q_i$ occurs in $D_i$ in a literal that has not been
generated by a previous abstraction step,
then by part (5) of Lemma~\ref{lemma:no-new-targets-by-abstraction}
there is a target term $\qbar_i$ in $D_0 = C_0\sigma$
such that $q_i\tau_i = \qbar_i\tau_0 = \qbar_i\tau$.
The term $\qbar_i$ is either a term from $C_1\sigma$, or a term from $C_2\sigma$,
or a subterm of the term $s[r]\sigma$
(the last two cases are possible only for
superposition inferences).
We analyze these cases separately.

Case 1:
$\qbar_i$ is a proper subterm of a term $v\sigma$,
where $[\neg] v\sigma \approx v''$ is a literal of $C_0\sigma$
and $[\neg] v \approx v'$ is a literal of $C_1$.

Case 1.1: $C_1$ does not have selected literals.
In this case, every literal of $C_1\tau$ is smaller than or equal to
the pivotal literal $L$ of $\iota'$.
Consequently, we obtain
$q_i\tau_i = \qbar_i\tau \prec v\sigma\tau = v\tau$,
thus
$q_i\tau_i \not\approx q_i\tau_i \,\prec\,
[\neg]\, v\tau \approx v'\tau \,\preceq\, L$.

Case 1.2: $C_1$ has selected literals.
In this case, $L$ is a selected literal in $C_1\tau$
and $\iota$ and $\iota'$ are either \IR{Negative superposition}
or \IR{Equality resolution} inferences.
Since $\qbar_i$ is a target term in $C_0\sigma$,
it is not a variable.

If $\qbar_i$ occurs in $v\sigma$ below a variable position of $v$,
then there is a $\zeta \in \dom(\sigma) \cap \vars(C_1)$
such that $\qbar_i$ is a subterm of $\zeta\sigma$.
Otherwise, there is a subterm $w$ of $v$ such that $\qbar_i = w\sigma$.
Since $C_1$ is weakly abstracted, $w$ cannot be a target term,
so by Lemma~\ref{lemma:non-target-stable-under-simple-instances},
$w$ must be a variable $\zeta$,
and again $\zeta \in \dom(\sigma) \cap \vars(C_1)$.
So in both cases, $\qbar_i$ is a subterm of $\zeta\sigma$,
and consequently, $q_i\tau_i = \qbar_i\tau$ is a subterm of $\zeta\sigma\tau = \zeta\tau$.

Let us first consider a \IR{Negative superposition} inference
operating on the literal $s[u] \not\approx t$ in $C_1$,
where $L \,=\, (s[u] \not\approx t)\tau$.
As $\sigma$ is a simple most general unifier of $l$ and $u$
and $\zeta \in \dom(\sigma) \cap \vars(C_1)$, we know that
$\zeta$ must occur in $u$. Moreover $u$ is not a variable,
so $\zeta$ is a proper subterm of $u$.
This implies $q_i\tau_i \preceq \zeta\tau \prec u\tau \preceq s[u]\tau$,
hence $q_i\tau_i \not\approx q_i\tau_i \,\prec\,
(s[u] \not\approx t)\tau \,=\, L$.

Otherwise $\iota$ is an \IR{Equality resolution} inference
operating on the literal $s \not\approx t$ in $C_1$,
where $L \,=\, (s \not\approx t)\tau$.
As $\sigma$ is a simple most general unifier of $s$ and $t$
and $\zeta \in \dom(\sigma) \cap \vars(C_1)$, we know that
$\zeta$ must occur in $s$ or $t$.
Assume without loss of generality that $\zeta$ occurs in $s$.
If $s = \zeta$, then
by the restrictions on selection functions
$t$ must be a FG term,
so $\zeta\tau = t\tau$ is a FG term as well.
Since the BG term $q_i\tau_i$ is a subterm of the FG term $\zeta\tau$,
it must be a proper subterm,
hence $q_i\tau_i \prec \zeta\tau = t\tau$
and $q_i\tau_i \not\approx q_i\tau_i \,\prec\,
(s \not\approx t)\tau \,=\, L$.
Otherwise $\zeta$ is a proper subterm of $s$, then
$q_i\tau_i \preceq \zeta\tau \prec s\tau = t\tau$
and we obtain again $q_i\tau_i \not\approx q_i\tau_i \,\prec\,
(s \not\approx t)\tau \,=\, L$.

Case 2:
$\qbar_i$ is a proper subterm of a term $v\sigma$,
where $[\neg] v\sigma \approx v''$ is a literal of $C_0\sigma$
and $[\neg] v \approx v'$ is a literal of $C_2$.
This case is proved similarly to case~1.1 above:
By the structure of $\SSP$ inferences, the literal $l\tau \approx r\tau$ of
$C_2\tau$ that has been used to replace $s\tau[l\tau]$ by $s\tau[r\tau]$
in the pivotal literal
is strictly maximal in $C_2\tau$ and $l\tau \succ r\tau$.
Consequently, we obtain
$q_i\tau_i = \qbar_i\tau \prec v\sigma\tau = v\tau \preceq l\tau
\preceq s\tau[l\tau]$
and thus
$q_i\tau_i \not\approx q_i\tau_i \,\prec\,
[\neg]\, s\tau[l\tau] \approx t\tau \,=\, L$.

Case 3:
It remains to consider the case that $\qbar_i$ is a subterm
of the term $s[r]\sigma$ produced in a superposition inference.
Then $q_i\tau_i = \qbar_i\tau \prec s[r]\sigma\tau = s[r]\tau
\prec s[l]\tau$,
hence
$q_i\tau_i \not\approx q_i\tau_i \,\prec\,
[\neg]\, s[l]\tau \approx t\tau \,=\, L$.
\qed
\end{proof}

As $M \subseteq M'$ implies $\RInfFlt(M) \subseteq \RInfFlt(M')$, we obtain
$\RInfFlt(\sgi(N) \setminus \sgi(N')) \subseteq
\RInfFlt(\sgi(N \setminus N'))$.
Furthermore, it is fairly easy to see that
$\sgi(N) \setminus (\RClFlt(\sgi(N)\cup\GndTh(\BAlgs)) \cup\GndTh(\BAlgs))
\subseteq \sgi(N \setminus \RClSgi(N))$.
Using these two results we can prove the following lemmas:

\begin{lemma}
\label{lemma:redsgi-redundancy-criterion}%
$\RedSgi = (\RInfSgi, \RClSgi)$ is a redundancy criterion with respect to
$\models_\BAlgs$.
\end{lemma}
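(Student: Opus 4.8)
The plan is to verify, one after another, the four conditions (i)--(iv) of Definition~\ref{dfn:redundancy-criterion} for the consequence relation $\models_\BAlgs$, each time reducing the statement about $\RedSgi$ on weakly abstracted $\Sigma$-clauses to the corresponding property of the ground criterion $\RedFlt$ from Theorem~\ref{thm:flat-superp-calc} by passing through simple ground instances. The three bridging tools are: Lemma~\ref{lemma:modelssgi-implies-modelsbalgs}, to turn an entailment between sets of simple ground instances back into $\models_\BAlgs$; Lemma~\ref{lemma:abstr-lits-after-inf-small-enough}, to deal with the explicit $\abstr$ in the conclusions of the $\HSP_\Base$ rules; and the two set-theoretic observations stated just above this lemma, namely $\RInfFlt(\sgi(N)\setminus\sgi(N'))\subseteq\RInfFlt(\sgi(N\setminus N'))$ and $\sgi(N)\setminus(\RClFlt(\sgi(N)\cup\GndTh(\BAlgs))\cup\GndTh(\BAlgs))\subseteq\sgi(N\setminus\RClSgi(N))$. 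In conditions (ii)--(iv), \IR{Close} inferences are always handled as a trivial separate case, using that such an inference is in $\RInfSgi(M)$ exactly when $\Box\in M$.

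For condition~(i), I would take $C\in\RClSgi(N)$; by Lemma~\ref{lemma:modelssgi-implies-modelsbalgs} it suffices to show $\sgi(N\setminus\RClSgi(N))\cup\GndTh(\BAlgs)\models\sgi(C)$, and since every clause of $\sgi(C)$ already lies in $\RClFlt(\sgi(N)\cup\GndTh(\BAlgs))\cup\GndTh(\BAlgs)$, it is enough to prove $\sgi(N\setminus\RClSgi(N))\cup\GndTh(\BAlgs)\models\RClFlt(\sgi(N)\cup\GndTh(\BAlgs))$. This follows from property~(i) of $\RedFlt$ together with the elementary inclusion $(\sgi(N)\cup\GndTh(\BAlgs))\setminus\RClFlt(\sgi(N)\cup\GndTh(\BAlgs))\subseteq\sgi(N\setminus\RClSgi(N))\cup\GndTh(\BAlgs)$, which in turn is obtained by splitting the left-hand side according to membership in $\GndTh(\BAlgs)$ and applying the second observation above. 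Condition~(ii) is immediate: $N\subseteq N'$ gives $\sgi(N)\subseteq\sgi(N')$, so monotonicity of $\RClFlt$ and $\RInfFlt$ (property~(ii) of $\RedFlt$) propagates directly through the definitions of $\RClSgi$ and $\RInfSgi$, and for a \IR{Close} inference $\Box\in N\subseteq N'$.

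For condition~(iii), the substantial case is a non-\IR{Close} inference $\iota$ whose conclusion $\abstr(C_0\sigma)$ belongs to $N$. Given any simple ground instance $\iota'\in\sgi(\iota)$ with ground conclusion $C_0\sigma\tau$, Lemma~\ref{lemma:abstr-lits-after-inf-small-enough} supplies a clause $C_0\sigma\tau\lor E\in\sgi(\abstr(C_0\sigma))\subseteq\sgi(N)$, where $E$ is a disjunction of trivial literals $s\not\approx s$ each strictly smaller than the pivotal literal of $\iota'$. This clause entails $C_0\sigma\tau$, and, by the standard ordering property of $\SSP$ inferences (the ground conclusion, even after appending such trivial literals, stays below the maximal premise with respect to $\gtc$), we obtain $\iota'\in\RInfFlt(\sgi(N)\cup\GndTh(\BAlgs))$. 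Hence $\sgi(\iota)\subseteq\RInfFlt(\sgi(N)\cup\GndTh(\BAlgs))$ and $\iota\in\RInfSgi(N)$; a \IR{Close} inference has conclusion $\Box\in N$ and is therefore in $\RInfSgi(N)$ by definition.

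For condition~(iv), assume $N'\subseteq\RClSgi(N)$, so that $\sgi(N')\setminus\GndTh(\BAlgs)\subseteq\RClFlt(\sgi(N)\cup\GndTh(\BAlgs))$. Writing $M=\sgi(N)\cup\GndTh(\BAlgs)$ and $M''=\sgi(N)\setminus(\sgi(N\setminus N')\cup\GndTh(\BAlgs))$, one checks $M''\subseteq\sgi(N')\setminus\GndTh(\BAlgs)\subseteq\RClFlt(M)$ and $M\setminus M''\subseteq\sgi(N\setminus N')\cup\GndTh(\BAlgs)$; properties~(iv) and~(ii) of $\RedFlt$ then give $\RClFlt(M)\subseteq\RClFlt(\sgi(N\setminus N')\cup\GndTh(\BAlgs))$ and likewise for $\RInfFlt$, from which $\RClSgi(N)\subseteq\RClSgi(N\setminus N')$ and $\RInfSgi(N)\subseteq\RInfSgi(N\setminus N')$ follow by unfolding the definitions (for \IR{Close}, note $\Box\notin\GndTh(\BAlgs)$ and $\Box\notin\RClFlt(\cdot)$, so $\Box\notin\RClSgi(N)\supseteq N'$, whence $\Box\in N$ forces $\Box\in N\setminus N'$). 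I expect the only non-routine point to be the ordering bookkeeping hidden in condition~(iii): verifying that the trivial literals $E$ produced by weakly abstracting the conclusion of $\iota$ do not lift the clause above the maximal premise of $\iota'$. That is precisely what Lemma~\ref{lemma:abstr-lits-after-inf-small-enough} (in turn relying on Lemma~\ref{lemma:no-new-targets-by-abstraction}) is designed to deliver; everything else is bookkeeping with $\sgi$, $\RClFlt$, $\RInfFlt$, and $\GndTh(\BAlgs)$.
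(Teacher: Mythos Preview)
Your proposal is correct and follows essentially the same approach as the paper's own proof: both verify conditions (i)--(iv) by pushing through $\sgi$ to the ground criterion $\RedFlt$, using Lemma~\ref{lemma:modelssgi-implies-modelsbalgs} for (i), Lemma~\ref{lemma:abstr-lits-after-inf-small-enough} for (iii), and the two set-theoretic observations preceding the lemma for (i) and (iv), with \IR{Close} handled separately via $\Box\notin\RClSgi(N)$. The only cosmetic difference is that in (iv) you introduce the auxiliary set $M''$ and invoke properties (ii) and (iv) of $\RedFlt$ explicitly, whereas the paper writes out the equivalent chain $\RInfFlt(\sgi(N)\cup\GndTh(\BAlgs))\subseteq\RInfFlt((\sgi(N)\cup\GndTh(\BAlgs))\setminus(\sgi(N')\setminus\GndTh(\BAlgs)))=\RInfFlt(\sgi(N)\setminus\sgi(N')\cup\GndTh(\BAlgs))\subseteq\RInfFlt(\sgi(N\setminus N')\cup\GndTh(\BAlgs))$ directly.
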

\begin{proof}
% AUF TEIL (i) DIESES *BEWEISES* WIRD IM FOLGENDEN VERWIESEN.
% WENN ER GESTRICHEN WIRD, DORT DIE FORMULIERUNG AENDERN
% UND HIER DAS LABEL AUSKOMMENTIEREN.
We have to check the four conditions of
Def.~\ref{dfn:redundancy-criterion}.
The proof of property (ii) is rather trivial.
To check property~(i)
let $D$ be an arbitrary clause from $\sgi(\RClSgi(N))$.
Consequently, $D \in \RClFlt(\sgi(N)\cup\GndTh(\BAlgs)) \cup \GndTh(\BAlgs)$.
If $D \in \GndTh(\BAlgs)$, then 
trivially $\sgi(N\setminus\RClSgi(N))\cup\GndTh(\BAlgs) \models D$.
Otherwise
$D \in \RClFlt(\sgi(N)\cup\GndTh(\BAlgs))$,
and this implies
$\sgi(N)\cup\GndTh(\BAlgs)\setminus\RClFlt(\sgi(N)\cup\GndTh(\BAlgs))
\models D$.
Since
$\sgi(N)\cup\GndTh(\BAlgs)\setminus\RClFlt(\sgi(N)\cup\GndTh(\BAlgs))
\subseteq
\sgi(N)\setminus\RClFlt(\sgi(N)\cup\GndTh(\BAlgs))\cup\GndTh(\BAlgs)
=
\sgi(N)\setminus(\RClFlt(\sgi(N)\cup\GndTh(\BAlgs))\cup\GndTh(\BAlgs))\cup\GndTh(\BAlgs)
\subseteq
\sgi(N\setminus\RClSgi(N))\cup\GndTh(\BAlgs)$,
we obtain again
$\sgi(N\setminus\RClSgi(N))\cup\GndTh(\BAlgs) \models D$.
By Lemma~\ref{lemma:modelssgi-implies-modelsbalgs},
we can conclude that $N \setminus \RClSgi(N) \models_\BAlgs \RClSgi(N)$.

Condition~(iii) is obviously satisfied for all \IR{Close} inferences.
Suppose that $\iota$ is not a \IR{Close} inference and
its conclusion $\concl(\iota) = \abstr(C_0)$ is in $N$.
Showing that $\iota \in \RInfSgi(N)$ amounts to proving that
every simple ground instance of $\iota$
is redundant \wrt~$\sgi(N) \cup \GndTh(\BAlgs)$.
Let $\iota'$ be such a simple ground instance
with maximal premise $C_1\tau$
and conclusion
$C_0\tau$.
By Lemma~\ref{lemma:abstr-lits-after-inf-small-enough},
there is a simple ground instance of $\abstr(C_0)$
that has the form $C_0\tau \lor E$, where
$E$ is a (possibly empty) disjunction of literals $s \not\approx s$,
and each literal of $E$ is smaller than the
pivotal literal of $\iota'$.

By the structure of superposition inferences,
the clause $C_0\tau$ is obtained from $C_1\tau$ by replacing
the pivotal literal in $C_1\tau$
by (zero or more) smaller literals.
Since the literals in $E$ are also smaller than the
pivotal literal,
$C_0\tau \lor E$ is still smaller than $C_1\tau$.
Moreover, $C_0\tau \lor E$ entails $C_0\tau$,
so $\iota' \in \RInfFlt(\sgi(N) \cup \GndTh(\BAlgs))$.
As $\sgi(\iota) \subseteq \RInfFlt(\sgi(N) \cup \GndTh(\BAlgs))$,
the inference $\iota$ is contained in $\RInfSgi(N)$.
This proves condition~(iii).

We come now to the proof of condition~(iv).
Note that $N' \subseteq \RClSgi(N)$ implies
$\sgi(N') \subseteq \RClFlt(\sgi(N)\cup\GndTh(\BAlgs)) \cup\GndTh(\BAlgs)$,
and thus
$\sgi(N') \setminus \GndTh(\BAlgs)
\subseteq \RClFlt(\sgi(N)\cup\GndTh(\BAlgs))$.
If $\iota \in \RInfSgi(N)$
is a \IR{Close} inference,
then $\Box \in N$; since $\Box \notin \RClSgi(N)$,
$\iota$ is contained in $\RInfSgi(N \setminus N')$.
Otherwise $\sgi(\iota) \subseteq \RInfFlt(\sgi(N)\cup\GndTh(\BAlgs)) \subseteq
\RInfFlt(\sgi(N)\cup\GndTh(\BAlgs) \setminus (\sgi(N')\setminus\GndTh(\BAlgs)))
=
\RInfFlt(\sgi(N)\setminus\sgi(N')\cup\GndTh(\BAlgs))
\subseteq
\RInfFlt(\sgi(N\setminus N')\cup\GndTh(\BAlgs))$,
hence $\iota$ is again contained in
$\RInfSgi(N \setminus N')$.
Therefore
$\RInfSgi(N) \subseteq \RInfSgi(N \setminus N')$.
Analogously,
we can show that
$\RClSgi(N) \subseteq \RClSgi(N \setminus N')$.
\qed
\end{proof}

\begin{lemma}
Let $N$, $N'$ and $M$ be sets of weakly abstracted clauses
such that $N' \subseteq \RClSgi(N)$.
If $N$ is sufficiently complete, then
so are $N \cup M$ and $N \setminus N'$.
\end{lemma}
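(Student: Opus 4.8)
The plan is to reduce both claims to the sufficient completeness of $N$, by showing that any $\Sigma$-model relevant to $N \cup M$ or to $N \setminus N'$ is in fact already a model of $\sgi(N) \cup \GndTh(\BAlgs)$, and then quoting Definition~\ref{def:sufficient-completeness} for $N$ verbatim. The case $N \cup M$ is immediate: since $\sgi(N) \subseteq \sgi(N \cup M)$, every $\Sigma$-model $J$ of $\sgi(N \cup M) \cup \GndTh(\BAlgs)$ is also a model of $\sgi(N) \cup \GndTh(\BAlgs)$, so for every ground BG-sorted FG term $s$ there is a ground BG term $t$ with $J \models s \approx t$; hence $N \cup M$ is sufficiently complete.

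For $N \setminus N'$, let $J$ be a $\Sigma$-model of $\sgi(N \setminus N') \cup \GndTh(\BAlgs)$. First I would show $J \models \sgi(N)$ by a minimal-counterexample argument against the well-founded ordering $\succ$ on ground clauses. Suppose some clause of $\sgi(N)$ is false in $J$, and let $D$ be a $\prec$-minimal such clause; write $D \in \sgi(C)$ with $C \in N$. If $C \notin N'$ then $D \in \sgi(N \setminus N')$, contradicting $J \not\models D$. Otherwise $C \in N' \subseteq \RClSgi(N)$, so by Definition~\ref{def:hierarchic-redundancy-criterion} $D \in \RClFlt(\sgi(N) \cup \GndTh(\BAlgs)) \cup \GndTh(\BAlgs)$. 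If $D \in \GndTh(\BAlgs)$, then $J \models D$ since $J$ models $\GndTh(\BAlgs)$. Otherwise there are clauses $D_1, \dots, D_n \in \sgi(N) \cup \GndTh(\BAlgs)$, each $\prec D$, with $\{D_1, \dots, D_n\} \models D$; each $D_i$ is true in $J$ --- those in $\GndTh(\BAlgs)$ because $J$ models $\GndTh(\BAlgs)$, those in $\sgi(N)$ by $\prec$-minimality of $D$ --- so $J \models D$, again a contradiction. Hence no counterexample exists, $J \models \sgi(N) \cup \GndTh(\BAlgs)$, and sufficient completeness of $N$ yields for every ground BG-sorted FG term $s$ a ground BG term $t$ with $J \models s \approx t$; this proves $N \setminus N'$ sufficiently complete.

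The step I expect to be the only real obstacle is the apparent circularity in the $N \setminus N'$ case: the condition ``$D \in \RClFlt(\sgi(N) \cup \GndTh(\BAlgs))$'' for a clause $D$ coming from some $C \in N'$ is stated in terms of $\sgi(N)$ itself, which is exactly the set we are trying to establish $J$ models. The minimal-counterexample / well-founded induction on $\succ$ is precisely what breaks this circle, exploiting that $\RClFlt$ only ever refers to strictly $\prec$-smaller clauses. Beyond that everything is routine: well-foundedness of the clause ordering is part of the standing assumptions on the hierarchic reduction ordering, preservation of $\GndTh(\BAlgs)$ holds by the hypothesis on $J$, and the handling of $N \cup M$ needs nothing more than monotonicity of $\sgi(\cdot)$ under union.
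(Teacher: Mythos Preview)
Your argument is correct. The $N \cup M$ case is handled exactly as in the paper (it is ``obvious''), and your minimal-counterexample argument for $N \setminus N'$ is sound: the well-foundedness of the clause ordering $\succ$ legitimizes picking a $\prec$-minimal false instance, and the case split on whether the witnessing clause $C$ lies in $N'$ goes through because $\RClFlt$ refers only to strictly smaller clauses.

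The paper, however, takes a different and more modular route for the $N \setminus N'$ case. Rather than re-running a well-founded induction from scratch, it points to part~(i) of the proof of Lemma~\ref{lemma:redsgi-redundancy-criterion}, which in turn relies on two ingredients already established: the black-box property $M \setminus \RClFlt(M) \models \RClFlt(M)$ of the ground redundancy criterion $\RedFlt$ (Thm.~\ref{thm:flat-superp-calc}-(i)), and the set-theoretic observation $\sgi(N) \setminus (\RClFlt(\sgi(N)\cup\GndTh(\BAlgs)) \cup \GndTh(\BAlgs)) \subseteq \sgi(N \setminus \RClSgi(N))$ stated just before that lemma. From these one gets directly that any model of $\sgi(N \setminus N') \cup \GndTh(\BAlgs)$ is a model of $\sgi(N) \cup \GndTh(\BAlgs)$, without reopening the induction. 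Your approach is essentially an inlined version of the proof of Thm.~\ref{thm:flat-superp-calc}-(i) specialized to this situation; it is more self-contained but duplicates work that the paper has already delegated to the standard superposition literature.
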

\begin{proof}
The sufficient completeness of $N \cup M$ is obvious;
the sufficient completeness of $N \setminus N'$ is proved in a similar
way as in part (i) of
the proof of Lemma~\ref{lemma:redsgi-redundancy-criterion}.
% Similarly to part (i) of
% the proof of Lemma~\ref{lemma:redsgi-redundancy-criterion}
% we can now show that $\sgi(N_\infty)$ entails $\sgi(\RClSgi(\bigcup_j N_j))$
% and thus $\sgi(N_\infty) \models \sgi(\bigcup_j N_j)$.
%LV We can now show that $\sgi(N_\infty)$ entails $\sgi(\RClSgi(\bigcup_j N_j))$,
%LV and thus $\sgi(N_\infty) \models \sgi(\bigcup_j N_j)$.
\qed
\end{proof}

We now encode arbitrary term-generated $\Sigma_\back$-interpretation
by sets of unit ground clauses in the following way:
Let $I \in \BAlgs$ be a term-generated $\Sigma_\back$-inter\-pre\-ta\-tion.
For every $\Sigma_\back$-ground term $t$ let $m(t)$ be the smallest ground term
of the congruence class of $t$ in $I$.
We define a rewrite system $\EA'$ by
$\EA' = \{\,t \to m(t) \mid t \in \TSigma,\, t \not= m(t)\,\}$.
Obviously, $\EA'$ is right-reduced;
since all rewrite rules are contained in $\succ$,
$\EA'$ is terminating;
and since every ground term $t$ has $m(t)$ as its only normal form,
$\EA'$ is also confluent.
Now let $\EA$ be the set of all rules $l \to r$ in $\EA'$
such that $l$ is not reducible by $\EA' \setminus \{l \to r\}$.
Clearly every term that is reducible by $\EA$ is also reducible by $\EA'$;
conversely every term that is reducible by $\EA'$ has a minimal subterm
that is reducible by $\EA'$ and the rule in $\EA'$ that is used to rewrite
this minimal subterm is necessarily contained in $\EA$.
Therefore $\EA'$ and $\EA$ define the same set
of normal forms, and from this we can conclude
that $\EA$ and $\EA'$ induce the same equality relation
on ground $\Sigma_\back$-terms.
We identify $\EA$ with the set of clauses
$\{\, t \approx t' \mid t \to t' \in \EA \,\}$.
Let $\DA$ be the set of
all clauses $t \not\approx t'$,
such that $t$ and $t'$ are distinct ground $\Sigma_\back$-terms
in normal form with respect to~$\EA$.\footnote{%
  Typically, $\EA$ contains two kinds of clauses, namely clauses
  that evaluate non-constant BG terms, such as $2 + 3 \approx 5$,
  and clauses that map parameters to domain elements, such as
  $\alpha \approx 4$.}

%% Uwe 02/07/2013: This statement is essentially repeated one page later,
%% after Thm.~\ref{thm:na-saturation}, and it's only used there, so there
%% is no need to keep it here.
% Recall that two different domain elements must always be interpreted
% differently in $I$ and that a domain element is smaller
% in the term ordering than
% any ground term that is not a domain element.
% Consequently,
% any domain element is the smallest term in its congruence class,
% so it is irreducible by~$\EA$.

\begin{lemma}
\label{lemma:eada}%
Let $I \in \BAlgs$ be a term-generated $\Sigma_\back$-interpretation
and let $C$ be a ground BG clause.
Then $C$ is true in $I$ if and only if
there exist clauses $C_1,\ldots,C_n$ in $\EADA$
such that $C_1,\ldots,C_n \models C$
and $C \succeq C_i$ for $1 \leq i \leq n$.
\end{lemma}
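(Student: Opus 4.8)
The plan is to prove the two implications separately; the ``if'' direction is almost immediate, while the ``only if'' direction needs a rewrite-proof argument together with some bookkeeping about the literal and clause orderings.

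For the ``if'' direction I would first check that $I$ is a model of every clause in $\EADA$. Every clause of $\EA$ has the form $t \approx t'$ with $t \to t' \in \EA \subseteq \EA'$, so $t' = m(t)$ lies in the same congruence class of $I$ as $t$, hence $t^I = t'^I$. Every clause of $\DA$ has the form $t \not\approx t'$ with $t$, $t'$ distinct ground $\Sigma_\back$-terms in $\EA$-normal form; since $\EA$ is confluent and terminating, distinct normal forms are not $\EA$-equal, and $\EA$-equality coincides with equality in $I$ (as does $\EA'$-equality, by definition of $m$), so $t^I \neq t'^I$. Thus $I \models \EADA$, and from $C_1,\dots,C_n \in \EADA$ with $C_1,\dots,C_n \models C$ we obtain $I \models C$. (The ordering side condition plays no role here.)

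For the ``only if'' direction, suppose $C$ is true in $I$; then some literal $L$ of $C$ is true in $I$, and it suffices to produce clauses $C_1,\dots,C_n \in \EADA$ with $C_1,\dots,C_n \models L$ and with each $C_i \preceq L$ in the literal ordering: then $C_1,\dots,C_n \models C$, and since $L$ occurs in $C$ and the clause ordering is the multiset extension of the literal ordering, $C_i \preceq L$ gives $C \succeq C_i$. If $L = (s \approx t)$ is positive, then $s^I = t^I$, so $s$ and $t$ share an $\EA$-normal form, which by minimality of $m$ (and since $\EA$- and $\EA'$-normal forms coincide) equals $m = m(s) = m(t)$; name the sides so that $s \succeq t \succeq m$. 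If $s = t$, then $L$ is a tautology, $C$ is valid, and we take $n = 0$; otherwise let $C_1,\dots,C_n$ be the finitely many rules of $\EA$ used, read as equations, in fixed rewrite proofs $s \to^* m$ and $t \to^* m$. These entail $s \approx m$ and $t \approx m$, hence $L$, and $n \geq 1$ since $s \neq t$ forces at least one of $s$, $t$ to be reducible. If $L = (s \not\approx t)$ is negative, let $\hat s$, $\hat t$ be the $\EA$-normal forms of $s$, $t$; they are distinct, so $\hat s \not\approx \hat t \in \DA$, and we take $C_1,\dots,C_n$ to be this disequation together with the rules used in rewrite proofs $s \to^* \hat s$ and $t \to^* \hat t$. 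Equational reasoning gives $s \approx \hat s$ and $t \approx \hat t$, which together with $\hat s \not\approx \hat t$ forces $s \not\approx t = L$.

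The one delicate step left is to verify $C_i \preceq L$ for every collected clause (using that a positive literal $u \approx v$ is compared as the multiset $\{u,v\}$ and a negative one $u \not\approx v$ as $\{u,u,v,v\}$). This rests on three observations: a rewrite proof is strictly $\succ$-decreasing, so every term occurring on a proof starting from $s$ — or from $t$, which is $\preceq s$ — is $\preceq s$; $\succ$ has the subterm property on ground terms, so the left-hand side $l$ of any rule used is $\preceq s$ and its right-hand side satisfies $r \prec l$; and $\EA$ is right-reduced with $\succ$-minimal right-hand sides, so a rule applied at the root of the start term $s$ itself — the only way to obtain $l = s$ — has right-hand side $m(s) \preceq t$. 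Since $s \succeq t$, these yield $l \approx r \preceq s \approx t = L$ for positive $L$ (the borderline case $l = s$ contributing $\{s, m(s)\} \preceq \{s,t\}$) and, more easily, $l \approx r \prec s \not\approx t = L$ for negative $L$; and $\hat s \preceq s$, $\hat t \preceq t$ give $\hat s \not\approx \hat t \preceq s \not\approx t = L$. I expect this multiset case analysis — in particular handling the root-rewrite case via right-reducedness of $\EA$ — to be the main obstacle; everything else is routine.
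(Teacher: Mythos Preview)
Your proposal is correct and follows essentially the same approach as the paper: both directions are argued exactly as the paper does (the ``if'' via $I \models \EADA$, the ``only if'' by picking a true literal and collecting the $\EA$-rules used to normalize its two sides, plus the appropriate $\DA$-disequation in the negative case). The paper dismisses the ordering verification as ``routine''; you have carefully unpacked it, including the borderline root-rewrite case via the observation that every rule of $\EA \subseteq \EA'$ has right-hand side $m(l)$, which is exactly the detail needed to get $\{s,m(s)\} \preceq \{s,t\}$.
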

\begin{proof}
The ``if'' part follows immediately from the fact that $I \models \EADA$.
For the ``only if'' part assume that the ground BG clause $C$ is
true in $I$. Consequently, there is some literal $s \approx t$ or
$s \not\approx t$ of $C$ that is true in $I$.
Then this literal follows from
(i) the rewrite rules in $\EA$ that are used to normalize $s$ to its
normal form $s'$,
(ii) the rewrite rules in $\EA$ that are used to normalize $t$ to its
normal form $t'$,
and, in the case of a negated literal $s \not\approx t$,
(iii) the clause $s' \not\approx t' \in \DA$.
It is routine to show that all these clauses are smaller than
or equal to $s \approx t$ or $s \not\approx t$, respectively,
and hence smaller than or equal to $C$.
\qed
\end{proof}

\begin{corollary}
\label{cor:eada-gndth}%
Let $I \in \BAlgs$ be a term-generated $\Sigma_\back$-interpretation.
Then $\EADA \models \GndTh(\BAlgs)$.
\end{corollary}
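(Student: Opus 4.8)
The plan is to obtain the corollary as a direct consequence of Lemma~\ref{lemma:eada}, which already does all the real work. First I would note that every clause $C \in \GndTh(\BAlgs)$ is a ground $\Sigma_\back$-clause that is true in $I$: by hypothesis $I \in \BAlgs$, and by definition $\GndTh(\BAlgs)$ contains only ground $\Sigma_\back$-formulas that are satisfied by \emph{every} interpretation in $\BAlgs$, in particular by $I$.

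Next, for each such $C$ I would invoke the ``only if'' direction of Lemma~\ref{lemma:eada} to produce finitely many clauses $C_1,\ldots,C_n \in \EADA$ with $C_1,\ldots,C_n \models C$ (the accompanying ordering condition $C \succeq C_i$ plays no role here). Since any model of $\EADA$ is in particular a model of $\{C_1,\ldots,C_n\}$, it is a model of $C$. As $C$ was an arbitrary element of $\GndTh(\BAlgs)$, this yields $\EADA \models \GndTh(\BAlgs)$, as required.

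There is essentially no obstacle beyond what Lemma~\ref{lemma:eada} and the construction of $\EADA$ preceding it already provide; the only point worth a remark is that a model $J$ of $\EADA$ need not be term-generated and need not agree with $I$ on non-ground $\Sigma_\back$-terms. This is harmless: $\GndTh(\BAlgs)$ consists of ground formulas only, and the entailments $C_1,\ldots,C_n \models C$ supplied by Lemma~\ref{lemma:eada} transfer to every model of $\EADA$ by monotonicity of $\models$. (Intuitively, $\EA$ forces each ground $\Sigma_\back$-term to equal its $\EA$-normal form and $\DA$ forces distinct normal forms to be distinct, so any model of $\EADA$ agrees with $I$ on the truth value of every ground $\Sigma_\back$-atom, hence on every ground $\Sigma_\back$-clause.)
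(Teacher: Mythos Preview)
Your proposal is correct and follows exactly the paper's approach: the paper's one-line proof simply observes that $I \in \BAlgs$ implies $I \models \GndTh(\BAlgs)$ and then invokes Lemma~\ref{lemma:eada}. Your version is a faithful unpacking of that sentence, with the added (harmless and correct) remark about why the entailment transfers to arbitrary models of $\EADA$.
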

\begin{proof}
Since $I \in \BAlgs$, we have $I \models \GndTh(\BAlgs)$,
hence $\EADA \models \GndTh(\BAlgs)$ by Lemma~\ref{lemma:eada}.
\qed
\end{proof}

Let $N$ be a set of weakly abstracted clauses
and $I \in \BAlgs$ be a term-generated $\Sigma_\back$-interpretation, then
$N_I$ denotes the set
$\EADA \cup \{\, C\sigma \mid
\sigma$~simple, reduced with respect to $\EA$, $C \in N$,
$C\sigma$~ground$\,\}$.

\begin{lemma}
\label{lemma:RInfFltSgiNI}%
If $N$ is a set of weakly abstracted clauses, then
$\RInfFlt(\sgi(N)\cup\GndTh(\BAlgs)) \subseteq \RInfFlt(N_I)$.
\end{lemma}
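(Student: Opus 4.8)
The plan is to reduce the statement to the single inclusion
$\sgi(N)\cup\GndTh(\BAlgs)\subseteq N_I\cup\RClFlt(N_I)$
and then to exploit that $\RedFlt=(\RInfFlt,\RClFlt)$ is a redundancy criterion with respect to $\models$ (Theorem~\ref{thm:flat-superp-calc}), so that it enjoys the monotonicity conditions (ii) and (iv) of Definition~\ref{dfn:redundancy-criterion}. Writing $M=\sgi(N)\cup\GndTh(\BAlgs)$, the inclusion gives $M\setminus N_I\subseteq\RClFlt(N_I)$, and by condition (ii) (applied to $N_I\subseteq M\cup N_I$) also $M\setminus N_I\subseteq\RClFlt(M\cup N_I)$; condition (iv), applied to the ambient set $M\cup N_I$ with redundant subset $M\setminus N_I$, then yields $\RInfFlt(M\cup N_I)\subseteq\RInfFlt((M\cup N_I)\setminus(M\setminus N_I))=\RInfFlt(N_I)$, since $(M\cup N_I)\setminus(M\setminus N_I)=N_I$. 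Combining this with $\RInfFlt(M)\subseteq\RInfFlt(M\cup N_I)$ (condition (ii) again) gives the claimed $\RInfFlt(\sgi(N)\cup\GndTh(\BAlgs))\subseteq\RInfFlt(N_I)$.

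The real content is thus the inclusion, which I would prove clause by clause. A clause $C\in\GndTh(\BAlgs)$ is a ground BG clause true in $I$, so by Lemma~\ref{lemma:eada} there are clauses $C_1,\dots,C_n\in\EADA\subseteq N_I$ with $C_1,\dots,C_n\models C$ and $C\succeq C_i$ for all $i$; if some $C_i=C$ then $C\in\EADA\subseteq N_I$, and otherwise every $C_i\prec C$, so $C\in\RClFlt(N_I)$. Next let $C\sigma\in\sgi(N)$ with $C\in N$ and $\sigma$ a simple ground substitution, which we may take to have domain $\vars(C)$. If $\sigma$ is reduced with respect to $\EA$, then $C\sigma\in N_I$ by definition. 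Otherwise, let $\sigma'$ send each variable to the $\EA$-normal form of its $\sigma$-image; since normal forms of pure BG terms are again pure BG terms, $\sigma'$ is simple and reduced, hence $C\sigma'\in N_I$, and because a reducible image of a variable actually occurring in $C$ is strictly shrunk, $C\sigma'\prec C\sigma$. It then remains to exhibit clauses of $N_I$ below $C\sigma$ that, together with $C\sigma'$, entail $C\sigma$: these are the instances $l\approx r$ of the $\EA$-rules used to normalize $\sigma$ (which lie in $\EA\subseteq N_I$, with $l$ a subterm of $C\sigma$ and $l\succ r=m(l)$), supplemented where necessary by a $\DA$-clause between normal forms. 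The ordering conditions that domain elements are minimal and that every ground FG term dominates every ground BG term are exactly what keep all of these clauses below $C\sigma$.

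The step where I expect to spend most of the effort is precisely this last size estimate. Whenever a used rule $l\to r$ has $l$ occurring strictly below a term of $C\sigma$, or inside a negative literal, the equation $l\approx r$ is at once smaller than that literal, hence than $C\sigma$; the only delicate configuration is when a variable image $x\sigma$ is simultaneously the maximal term of $C\sigma$ and sits at the top of one side of a positive literal $x\sigma\approx v\sigma$ of $C\sigma$, for then the single rule $x\sigma\to m(x\sigma)$ involved is too big since $m(x\sigma)\succ v\sigma$. But this very inequality forces $x\sigma$ and $v\sigma$ into distinct congruence classes of $I$ (a reduced or further‑reducible $v\sigma$ in the same class would have normal form $\preceq v\sigma\prec m(x\sigma)=m(v\sigma)$, a contradiction), so their normal forms are distinct reduced BG terms and the disequation between them belongs to $\DA\subseteq N_I$. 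Together with $C\sigma'$ — whose corresponding literal equates those normal forms, after at most the remaining, genuinely smaller, normalization steps inside $v\sigma$ — this disequation entails $C\sigma$, and every clause used has all of its terms strictly below the maximal term $x\sigma$, hence is $\prec C\sigma$. With this case distinction in place, $C\sigma\in\RClFlt(N_I)$, the inclusion $\sgi(N)\cup\GndTh(\BAlgs)\subseteq N_I\cup\RClFlt(N_I)$ holds, and the reduction of the first paragraph finishes the proof.
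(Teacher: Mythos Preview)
Your approach is essentially the paper's: both reduce the lemma, via the monotonicity properties (ii) and (iv) of the redundancy criterion, to the claim that every clause of $\sgi(N)\cup\GndTh(\BAlgs)$ not already in $N_I$ is redundant with respect to $N_I$ (the paper phrases this as redundancy with respect to $\EADA\cup\sgi(N)$, which amounts to the same thing since $\EADA\subseteq N_I$). For $\GndTh(\BAlgs)$ both arguments invoke Lemma~\ref{lemma:eada}; for an unreduced instance $C\sigma$ both use the reduced instance $C\rho$ together with suitable clauses from $\EADA$.

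Where you go further than the paper is in actually checking the ordering conditions: the paper simply asserts that the clauses used are smaller than $C\sigma$, whereas you correctly isolate the one delicate configuration (the maximal term of $C\sigma$ is a variable image $x\sigma$ appearing at the top of a positive literal with a too-small partner) and repair it via a $\DA$-clause. That repair is sound; your observation that $m(x\sigma)\succ v\sigma$ forces $x\sigma$ and $v\sigma$ into distinct $I$-classes is exactly right, and the resulting $\DA$-disequation together with $C\sigma'$ and the genuinely small $\EA$-rules does entail $C\sigma$. Two small points to tighten: first, there may be several such literals (possibly involving different variables that all map to the same maximal term), so you should kill all of them with the corresponding $\DA$-clauses, not just one; second, your closing claim that ``every clause used has all of its terms strictly below $x\sigma$'' is slightly too strong for $C\sigma'$ itself, since $C\sigma'$ can still contain $x\sigma$ via a non-variable subterm $u$ with $u\sigma=x\sigma$ whose variables happen to be reduced --- but $C\sigma'\prec C\sigma$ holds anyway, because the $x$-literal strictly decreases. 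With these adjustments your argument is complete and in fact more careful than the paper's.
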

\begin{proof}
By part (i) of Thm.~\ref{thm:flat-superp-calc} we have obviously
$\RInfFlt(\sgi(N)) \subseteq \RInfFlt(\EADA \cup \sgi(N)\cup\GndTh(\BAlgs))$.
Let $C$ be a clause in $\EADA \cup \sgi(N) \cup\GndTh(\BAlgs)$ and not in $N_I$.
If $C \in \GndTh(\BAlgs)$, then it is true in $I$,
so by Lemma~\ref{lemma:eada} it is either contained in $\EADA \subseteq N_I$
or it follows from smaller clauses in $\EADA$
and is therefore in $\RClFlt(\EADA \cup \sgi(N))$.
If $C \notin \GndTh(\BAlgs)$,
then $C = C'\sigma$ for some $C' \in N$, so
it follows from $C'\rho$ and $\EADA$,
where $\rho$ is the substitution that maps every variable $\zeta$
to the $\EA$-normal form of $\zeta\sigma$.
Since $C$ follows from smaller clauses in $\EADA \cup \sgi(N)$,
it is in $\RClFlt(\EADA \cup \sgi(N))$.
Hence $\RInfFlt(\EADA \cup \sgi(N) \cup\GndTh(\BAlgs)) \subseteq \RInfFlt(N_I)$.
\qed
\end{proof}

A clause set $N$ is called \emph{saturated (with respect to an inference system
$\Inf$ and a redundancy criterion $\Red$)} if $\iota \in \RInf(N)$ for every inference
\looseness=-1
$\iota$ with premises in~$N$.

\begin{theorem}
\label{thm:na-saturation}%
Let $I \in \BAlgs$ be a term-generated $\Sigma_\back$-interpretation
and let $N$ be a set of weakly abstracted $\Sigma$-clauses.
If $I$ satisfies all BG clauses in $\sgi(N)$
and $N$ is saturated with respect to
$\HSP_\Base$ and $\RedSgi$, then
$N_I$ is saturated with respect to
$\SSP$ and $\RedFlt$.
\end{theorem}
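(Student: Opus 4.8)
The plan is to show that every $\SSP$ inference $\iota'$ whose premises lie in $N_I$ belongs to $\RInfFlt(N_I)$, following the pattern of~\cite{Bachmair:Ganzinger:Waldmann:TheoremProvingHierarchical:AAECC:94} but accounting for the two kinds of BG variables and the fact that our inference rules re-abstract their conclusions. We may assume that no premise of $\iota'$ is in $\RClFlt(N_I)$, since otherwise $\iota' \in \RInfFlt(N_I)$ trivially. Two facts organise the case analysis. First, if a premise has the form $C\theta$ with $C \in N$ and $\theta$ a simple ground substitution and $C\theta$ is a BG clause, then $C\theta \in \sgi(N)$, so $C\theta$ is true in $I$ by hypothesis, so by Lemma~\ref{lemma:eada} it follows from strictly smaller clauses in $\EADA \subseteq N_I$ and hence lies in $\RClFlt(N_I)$ --- contradicting our assumption; therefore every non-$\EADA$ premise that we need to treat contains an FG term, and by ordering condition~(iii) its eligible (maximal or selected) literal contains an FG term. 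Second, $\EADA$ is self-saturated: $\EA$ is a left- and right-reduced convergent ground rewrite system, so superposition overlaps among its rules are either impossible (left-reducedness excludes overlaps at proper subterms) or yield tautologies (overlaps at the top, by confluence), there is no superposition into a clause of $\DA$ (its sides are $\EA$-irreducible) and no equality resolution on a clause of $\DA$ (its sides are distinct ground terms); hence all such inferences lie in $\RInfFlt(\EADA) \subseteq \RInfFlt(N_I)$.

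Using these facts I would next dispose of every $\iota'$ in which an $\EADA$-clause plays an essential role. If the premise carrying the overlapped literal (or the premise of an equality resolution) is in $\DA$, no inference applies. If it is a rule $l \to r \in \EA$, equality resolution is impossible, a superposition into $l$ by another $\EA$-rule is ruled out by left-reducedness, and a superposition into $l$ using a rule taken from a non-$\EADA$ premise $C\theta$ would force that rule's left-hand side --- a subterm of the BG term $l$ --- to be a BG term, so $C\theta$'s eligible literal would be all-BG, contradicting the first fact. Finally, if an $\EA$-rule $l \to r$ is superposed into a non-$\EADA$ premise $C\theta$, the target position cannot lie at or below a variable $\zeta$ of $C$, because $\zeta\theta$ is $\EA$-irreducible ($\theta$ being $\EA$-reduced); so it lies in the non-variable skeleton of $C$, and then, since $C$ is weakly abstracted, Lemma~\ref{lemma:non-target-stable-under-simple-instances} forces the side of the eligible literal that carries the redex, both in $C$ and in $C\theta$, to be a pure BG term (any FG context around a non-domain, non-variable BG term would make it a target term); so either that literal is all-BG and $C\theta$ is redundant by the first fact, or $\iota'$ is already blocked because the ordering condition ``$t \not\succeq s$'' fails, the redex lying in the BG side of its literal while the opposite side is an FG term.

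It remains to handle the inferences all of whose premises are non-$\EADA$ instances --- after renaming premises apart, $C_1\theta$ (carrying the eligible literal) and, for a superposition, $C_2\theta$ (supplying the rewrite rule), with $C_i \in N$ and $\theta$ simple and $\EA$-reduced --- none of them redundant. If the overlap or the eliminated term of $\iota'$ lies at or below a variable $\zeta$ of $C_1$, the rewrite rule $l' \to r'$ comes from $C_2 \in N$ (the $\EA$ case having just been treated), and I would show that the conclusion $D_0$ of $\iota'$ is entailed by the instance $C_1\theta''$, where $\theta''$ agrees with $\theta$ except that $\zeta$ is mapped to the $\EA$-normal form of $\zeta\theta$ with $l'$ rewritten to $r'$; here $\zeta$ is necessarily an ordinary variable (for an abstraction variable, $\zeta\theta$, and hence $l'$, would be a pure BG term, so $l' \approx r'$ could not be the strictly maximal literal of the FG-containing clause $C_2\theta$), so $\theta''$ is simple and $\EA$-reduced, $C_1\theta'' \in N_I$, and $C_1\theta'' \prec C_1\theta$; together with the $\EA$-rules normalising $\zeta\theta[r']$ and, when $\zeta$ occurs more than once in $C_1$, also $C_2\theta$ --- all of them clauses of $N_I$ strictly smaller than the maximal premise of $\iota'$ --- they entail $D_0$, so $\iota' \in \RInfFlt(N_I)$. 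In the remaining, genuine case the overlap is at a non-variable position of $C_1$; then $\iota'$ is a simple ground instance of an $\HSP_\Base$ inference $\iota$ with premises $C_1, C_2$ (or $C_1$) in $N$ --- one checks that a simple most general unifier exists and that the purity, maximality and ordering side conditions of the $\HSP_\Base$ rule hold, which follows from the corresponding ground facts about $\iota'$ by stability of $\succ$ under simple substitutions, by Prop.~\ref{prop:restricted-substitutions}, and, for the purity condition, by the first fact (a pure-BG left-hand side would again contradict strict maximality). Since $N$ is saturated, $\iota \in \RInfSgi(N)$, and since $\iota$ is not a \IR{Close} inference this means $\sgi(\iota) \subseteq \RInfFlt(\sgi(N) \cup \GndTh(\BAlgs))$; as $\iota' \in \sgi(\iota)$, we obtain $\iota' \in \RInfFlt(\sgi(N) \cup \GndTh(\BAlgs))$, whence $\iota' \in \RInfFlt(N_I)$ by Lemma~\ref{lemma:RInfFltSgiNI}.

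I expect the main obstacle to be not any single step but the thoroughness of the case analysis, and in particular two points. The first is verifying, in the genuine case, that the lifted $\HSP_\Base$ inference really is a legitimate inference of the calculus --- that the ground unifier descends to a simple, indeed restricted, most general unifier and that all the refined side conditions of our rules (which differ from those of~\cite{Bachmair:Ganzinger:Waldmann:TheoremProvingHierarchical:AAECC:94} because of impure BG terms and because literals may still mix FG and BG operators after weak abstraction) transfer from the ground level. The second is the systematic use of the hypothesis that $I$ satisfies all BG clauses in $\sgi(N)$: via Lemma~\ref{lemma:eada} and Cor.~\ref{cor:eada-gndth} this is what lets us declare every premise whose eligible literal turns out to be all-BG redundant, which is exactly what closes all the off-diagonal cases involving $\EADA$-clauses and variable-position overlaps. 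Lemma~\ref{lemma:abstr-lits-after-inf-small-enough} is available should one prefer to argue about the re-abstracted conclusion directly, but since $\RInfSgi$ is defined through the \emph{unabstracted} conclusion the simple ground instances of $\iota$ already have the shape of $\iota'$, so that detour is not needed here.
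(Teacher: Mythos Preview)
Your proposal is correct and follows essentially the same route as the paper's proof: reduce every $\SSP$ inference from $N_I$ to either (a) a premise that is a BG clause true in $I$ and hence redundant by Lemma~\ref{lemma:eada}, (b) a harmless interaction with $\EADA$, (c) a below-variable overlap handled by a smaller instance, or (d) a genuine simple ground instance of an $\HSP_\Base$ inference, to which saturation and Lemma~\ref{lemma:RInfFltSgiNI} apply. The paper treats only equality resolution and negative superposition explicitly and leaves the rest as ``similar''; your systematic two-facts organisation is more thorough, and your below-variable argument (normalising $\zeta\theta[r']$ by $\EA$ so that the witnessing instance actually lies in $N_I$) is more careful than the paper's one-line ``follows from $C\sigma$ and some instance $C'\rho$''.

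Two small imprecisions worth fixing. First, the appeal to Lemma~\ref{lemma:non-target-stable-under-simple-instances} in the $\EA$-into-$C\theta$ case is misplaced: that lemma lifts target-term-ness from an instance to the original, which is not what you need. Your parenthetical is the real argument --- iterate ``not a target term $\Rightarrow$ good immediate context'' up the path from $w$ to the side $v$, concluding that $v$ is a BG term. Second, this only gives that $v$ is a BG term, not necessarily a \emph{pure} BG term in $C$ (it could contain ordinary variables); of course $v\theta$ is ground and hence pure, which is all you need for the ordering argument that follows. The paper phrases this same point as ``the maximal sides of maximal literals in a FG clause are reduced'', justified in the paragraph after the proof by observing that in a weakly abstracted clause every maximal BG subterm of an FG term is a variable or a domain element.
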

\begin{proof}
We have to show that every $\SSP$-inference from clauses of $N_I$
is redundant with respect to $N_I$, \ie, that it 
is contained in $\RInfFlt(N_I)$.
We demonstrate this in detail for the equality resolution and the
negative superposition rule.
The analysis of the other rules is similar.
Note that by Lemma~\ref{lemma:eada}
every BG clause that is true in $I$ and is not contained
in $\EADA$ follows from smaller clauses in $\EADA$,
thus it is in $\RClFlt(N_I)$;
every inference involving such a clause is in $\RInfFlt(N_I)$.

The equality resolution rule is obviously not applicable to clauses
from $\EADA$.
Suppose that $\iota$ is an equality resolution inference with a premise
$C\sigma$, where $C \in N$ and $\sigma$ is
a simple substitution and reduced with respect to $\EA$.
If $C\sigma$ is a BG clause, then $\iota$ is in $\RInfFlt(N_I)$.
If the pivotal term of $\iota$ is a pure BG term then the pivotal literal is pure BG as
well. Because the pivotal literal is maximal in $C\sigma$ it follows from properties of
the ordering that $C\sigma$ is a BG clause. Because we have already considered this case
we can assume from now on that the pivotal term of $\iota$ is not pure BG and that $C\sigma$
is an FG clause. It follows that $\iota$ is a simple ground instance of a
hierarchic inference $\iota'$ from $C$.
Since $\iota'$ is in $\RInfSgi(N)$, $\iota$ is in $\RInfFlt(\sgi(N)\cup\GndTh(\BAlgs))$,
by Lemma~\ref{lemma:RInfFltSgiNI}, this implies again $\iota \in \RInfFlt(N_I)$.

Obviously a clause from $\DA$ cannot be the first premise of a
negative superposition inference.
Suppose that the first premise is a clause from $\EA$.
The second premise
cannot be a FG clause, since the
maximal sides of maximal literals in a FG clause
are reduced;
as it is a BG clause, the inference is redundant.
Now suppose that $\iota$ is a negative superposition inference
with a first premise $C\sigma$, where $C \in N$ and $\sigma$ is
a simple substitution and reduced with respect to $\EA$.
If $C\sigma$ is a BG clause, then $\iota$ is in $\RInfFlt(N_I)$.
Otherwise, with the same arguments as for the equality resolution case above, the
pivotal term is not pure BG and $C\sigma$ is a FG clause.
Hence we can conclude that the second premise can be written as
$C'\sigma$, where $C' \in N$ is a FG clause (without loss of generality, $C$ and $C'$
do not have common variables).
If the overlap takes place below a variable occurrence,
the conclusion of the inference follows from $C\sigma$ and some
instance $C'\rho$, which are both smaller than $C'\sigma$.
Otherwise, $\iota$ is a simple ground instance of a
hierarchic inference $\iota'$ from $C$.
In both cases, $\iota$ is contained in $\RInfFlt(N_I)$.
\qed
\end{proof}

The crucial property of abstracted clauses
that is needed in the proof of this theorem
is that there are no superposition inferences between
clauses in $\EA$ and FG ground instances $C\sigma$ in $N_I$,
or in other words,
that all FG terms occurring in ground instances
$C\sigma$ are reduced \wrt~$\EA$.
This motivates the definition of target terms in Def.~\ref{def:weak-abs}:
Recall that two different domain elements must always be interpreted
differently in $I$ and that a domain element is smaller
in the term ordering than
any ground term that is not a domain element.
Consequently,
any domain element is the smallest term in its congruence class,
so it is reduced by~$\EA$.
Furthermore, by the definition of $N_I$,
$\zeta\sigma$ is reduced by~$\EA$ for every variable $\zeta$.
So variables and domain elements never need to be abstracted out.
Other BG terms (such as parameters $\alpha$ or
non-constant terms $\zeta_1 + \zeta_2$) have to be abstracted out if they occur
below a FG operator, or if one of their sibling terms is a FG term
or an impure BG term (since $\sigma$ can map the latter to a FG term).
On the other hand,
abstracting out FG terms
as in~\cite{Bachmair:Ganzinger:Waldmann:TheoremProvingHierarchical:AAECC:94}
is never necessary to ensure
that FG terms are reduced \wrt~$\EA$.

If $N$ is saturated with respect to
$\HSP_\Base$ and $\RedSgi$ and does not contain the empty clause,
then \IR{Close} cannot be applicable to $N$.
If
$(\Sigma_\back,\BAlgs)$ is compact,
this implies that
there is some term-generated $\Sigma_\back$-interpretation $I \in \BAlgs$
that satisfies all BG clauses in $\sgi(N)$.
Hence, by Thm.~\ref{thm:na-saturation},
the set of \emph{reduced simple} ground instances of $N$ has a model
that also satisfies $\EADA$.
Sufficient completeness
allows us to show that this is in fact a model of
\emph{all} ground instances of clauses in~$N$
and that $I$ is its restriction to~$\Sigma_\back$:

\begin{theorem}
\label{thm:hiersup-ref-complete}%
If the BG specification $(\Sigma_\back,\BAlgs)$ is compact, then
$\HSP_\Base$ and $\RedSgi$
are
statically
refutationally complete for all sufficiently complete sets of clauses,
\ie,
if a set of clauses $N$ is sufficiently complete
and saturated \wrt\ $\HSP_\Base$ and $\RedSgi$,
and $N \models_\BAlgs \Box$,
then $\Box \in N$.
\end{theorem}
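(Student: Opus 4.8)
The plan is to prove the contrapositive: assume that $N$ is sufficiently complete, saturated \wrt\ $\HSP_\Base$ and $\RedSgi$, and $\Box \notin N$, and construct a $\BAlgs$-interpretation that is a model of $N$, so that $N \not\models_\BAlgs \Box$. Since $\Box \notin N$, saturation implies that no \IR{Close} inference has all its premises in $N$ (such an inference would have to lie in $\RInfSgi(N)$, which for \IR{Close} requires $\Box \in N$); hence no finite set of BG clauses of $N$ is $\BAlgs$-unsatisfiable. As observed just before the theorem, a finite set of ground BG clauses in $\sgi(N)$ consists of simple ground instances of finitely many clauses of $N$, and each such clause, having a ground BG instance, is itself a BG clause (a FG variable could only be mapped to a FG-sorted term); any $\BAlgs$-model of these finitely many BG clauses satisfies the chosen instances. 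Compactness of $(\Sigma_\back,\BAlgs)$ therefore yields a term-generated $I \in \BAlgs$ satisfying all BG clauses in $\sgi(N)$.

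Next I would invoke Theorem~\ref{thm:na-saturation}: since this $I$ satisfies all BG clauses in $\sgi(N)$ and $N$ is saturated \wrt\ $\HSP_\Base$ and $\RedSgi$, the ground clause set $N_I$ is saturated \wrt\ standard superposition $\SSP$ and $\RedFlt$. Moreover $\Box \notin N_I$, because the clauses of $\EADA$ are non-empty and $C\sigma = \Box$ forces $C = \Box \notin N$. By refutational completeness of $\SSP$ (Theorem~\ref{thm:flat-superp-calc}(ii)), via the Bachmair--Ganzinger/Nieuwenhuis model construction, there is then a term-generated $\Sigma$-model $J$ of $N_I$.

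The core of the argument is to upgrade $J$ to a $\BAlgs$-model of all of $N$, and this is where sufficient completeness enters, and it is used twice. First, $J \models \sgi(N) \cup \GndTh(\BAlgs)$: from $J \models \EADA$ we get $J \models \GndTh(\BAlgs)$ by Corollary~\ref{cor:eada-gndth}, and an arbitrary simple ground instance $C\sigma$ follows in $J$ from the reduced simple instance $C\rho \in N_I$ (where $\rho$ replaces each variable's image by its $\EA$-normal form and remains simple) together with the rewrite rules $\EA \subseteq N_I$. Now applying Definition~\ref{def:sufficient-completeness} to $J$: every BG-sorted element of $J$ that is the value of a ground FG term is also the value of a ground BG term; since $J$ is term-generated, this makes $J\rstr{\Sigma_\back}$ term-generated, and since $J \models \EADA$ it induces on ground BG terms exactly the congruence of $I$ (equalities from $\EA$, disequalities of distinct $\EA$-normal forms from $\DA$, using Lemma~\ref{lemma:eada} and the construction of $\EA$ from $I$), so $J\rstr{\Sigma_\back} \cong I \in \BAlgs$; as $\BAlgs$ is closed under isomorphism, $J$ is a $\BAlgs$-interpretation. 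Second, for an arbitrary ground instance $C\tau$ of a clause $C \in N$, I replace every image $\zeta\tau$ that is a FG term assigned to an abstraction variable $\zeta$ by a ground BG term $J$-equal to it; the resulting ground substitution $\tau'$ is simple (ground BG terms are automatically pure), $J \models C\tau$ iff $J \models C\tau'$, and $C\tau' \in \sgi(N)$, hence $J \models C\tau$. Thus the term-generated interpretation $J$ satisfies every ground instance of every clause of $N$, i.e.\ $J \models N$; together with $J\rstr{\Sigma_\back} \in \BAlgs$ this shows $N$ is $\BAlgs$-satisfiable, establishing the contrapositive.

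The main obstacle is precisely this last step: passing from a model of the \emph{reduced simple} ground instances (plus $\EADA$) to a model of \emph{all} ground instances of $N$ whose background reduct lies in $\BAlgs$. Both uses of sufficient completeness are delicate — one must verify that $J\rstr{\Sigma_\back}$ really becomes term-generated and hence isomorphic to $I$, and that the rewritten substitution $\tau'$ is genuinely simple with $J$ agreeing on $C\tau$ and $C\tau'$. Everything else (turning $I$-satisfiability into saturation of $N_I$, and the refutational completeness of ground superposition) is supplied by Theorem~\ref{thm:na-saturation} and Theorem~\ref{thm:flat-superp-calc}.
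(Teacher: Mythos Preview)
Your proposal is correct and follows essentially the same route as the paper's proof: obtain $I \in \BAlgs$ from non-applicability of \IR{Close} plus compactness, lift saturation to $N_I$ via Theorem~\ref{thm:na-saturation}, get a $\Sigma$-model of $N_I$ from completeness of $\SSP$, and then use sufficient completeness twice (no junk in $J\rstr{\Sigma_\back}$, and every ground instance is $J$-equivalent to a simple one). If anything you are more explicit than the paper, which glosses over the verification that the constructed model satisfies $\sgi(N) \cup \GndTh(\BAlgs)$ before invoking Definition~\ref{def:sufficient-completeness}.
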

\begin{proof}
Let $N$ be a set of weakly abstracted clauses that is sufficiently complete,
and saturated \wrt~the hierarchic superposition calculus and $\RedSgi$
and does not contain $\Box$.
Consequently, the \IR{Close} rule is not applicable to $N$.
By compactness, this means that the set of all $\Sigma_\back$-clauses
in $\sgi(N)$ is satisfied by some
term-generated $\Sigma_\back$-interpretation $I \in \BAlgs$.
By Thm.~\ref{thm:na-saturation},
$N_I$ is saturated with respect to the standard superposition calculus.
Since $\Box \notin N_I$, the refutational completeness of standard superposition
implies that there is a $\Sigma$-model $I'$ of $N_I$.
Since $N$ is sufficiently complete,
we know that for every ground term $t'$ of a BG sort
there exists a BG term $t$ such that $t' \approx t$ is true in $I'$.
Consequently, for every ground instance of a clause in $N$
there exists an equivalent simple ground instance,
thus $I'$ is also a model of all ground instances of clauses in $N$.
To see that the restriction of $I'$ to $\Sigma_\back$ is isomorphic to $I$
and thus in $\BAlgs$,
note that $I'$ satisfies $\EADA$,
preventing confusion, and that $N$ is sufficiently complete,
preventing junk.
Since $I'$ satisfies $N$
and $I'\rstr{\Sigma_\back} \in \BAlgs$, we have $N \not\models_\BAlgs \Box$
\qed
\end{proof}

A theorem proving derivation $\calD$ is a finite or infinite sequence of
weakly abstracted
clause sets $N_0, N_1, ...$, such that
$N_i$ and $N_{i+1}$ are equisatisfiable \wrt\ $\models_\BAlgs$
and
$N_i \setminus N_{i+1} \subseteq \RInfSgi(N_{i+1})$ for all indices $i$.
The set $N_\infty = \bigcup_{i\ge 0} \bigcap_{j\ge i} N_j$
is called the limit of $\calD$;
the set $N^\infty = \bigcup_{i\ge 0} N_i$ is called the
union of $\calD$.
It is easy to show that every clause in $N^\infty$
is either contained in $N_\infty$ or redundant \wrt~$N_\infty$.
The derivation $\calD$ is said to be fair, if
every $\HSP_\Base$-inference with (non-redundant) premises in $N_\infty$
becomes redundant at some point of the derivation.
The limit of a fair derivation is
saturated~\cite{Bachmair:Ganzinger:ResolutionTheoremProving:Handbook:2001};
this is the key result that allows us to deduce
dynamic refutational completeness
from
static refutational completeness:

\begin{theorem}
\label{thm:hiersup-dyn-ref-complete}%
If the BG specification $(\Sigma_\back,\BAlgs)$ is compact, then
$\HSP_\Base$ and $\RedSgi$
are
dynamically
refutationally complete for all sufficiently complete sets of clauses,
\ie,
if $N \models_\BAlgs \Box$,
then every fair derivation starting from $\abstr(N)$
eventually generates $\Box$.
\end{theorem}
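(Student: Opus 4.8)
The plan is to obtain the dynamic statement from the static refutational completeness of Theorem~\ref{thm:hiersup-ref-complete}: given a fair derivation $N_0, N_1, \dots$ starting from $N_0 = \abstr(N)$, I would show that its limit $N_\infty$ is a saturated, sufficiently complete, $\BAlgs$-unsatisfiable set of weakly abstracted clauses, so that Theorem~\ref{thm:hiersup-ref-complete} forces $\Box \in N_\infty$, and then read off from the shape of $N_\infty$ that the derivation actually produces $\Box$. First I would settle the base case. By Proposition~\ref{prop:wab-equivalence-transformation}, $\sgi(N)$ and $\sgi(\abstr(N))$ are equivalent and $\abstr(N)$ is sufficiently complete whenever $N$ is; in particular $N_0 = \abstr(N)$ is a set of weakly abstracted clauses, it is sufficiently complete (we are considering sufficiently complete inputs), and it is $\BAlgs$-unsatisfiable because $N$ is and the two clause sets have the same $\BAlgs$-interpretation models. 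Since consecutive clause sets in a derivation are equisatisfiable \wrt\ $\models_\BAlgs$, every $N_i$ is then $\BAlgs$-unsatisfiable, and every $N_i$ (hence $N^\infty = \bigcup_i N_i$ and $N_\infty = \bigcup_i \bigcap_{j\ge i} N_j$) consists of weakly abstracted clauses.

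Next I would analyse the limit. Fairness of the derivation guarantees that $N_\infty$ is saturated \wrt\ $\HSP_\Base$ and $\RedSgi$~\cite{Bachmair:Ganzinger:ResolutionTheoremProving:Handbook:2001}. For sufficient completeness of $N_\infty$: since $N_0 \subseteq N^\infty$ and adding clauses preserves sufficient completeness (the sufficient-completeness-preservation lemma stated just after Lemma~\ref{lemma:redsgi-redundancy-criterion}, applied with the $M$-part), $N^\infty$ is sufficiently complete. Moreover every clause of $N^\infty$ is either contained in $N_\infty$ or redundant \wrt\ $N_\infty$, so $N^\infty \setminus N_\infty \subseteq \RClSgi(N_\infty)$, and by monotonicity of $\RClSgi$ (Definition~\ref{dfn:redundancy-criterion}(ii), using $N_\infty \subseteq N^\infty$) also $N^\infty \setminus N_\infty \subseteq \RClSgi(N^\infty)$. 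Applying the same preservation lemma to $N^\infty$ and $N^\infty \setminus N_\infty$ shows that $N_\infty = N^\infty \setminus (N^\infty \setminus N_\infty)$ is sufficiently complete.

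Then I would establish $N_\infty \models_\BAlgs \Box$. By Lemma~\ref{lemma:redsgi-redundancy-criterion}, $\RedSgi$ is a redundancy criterion \wrt\ $\models_\BAlgs$, so Definition~\ref{dfn:redundancy-criterion}(i) gives $N_\infty \models_\BAlgs \RClSgi(N_\infty)$; since $N^\infty \setminus N_\infty \subseteq \RClSgi(N_\infty)$, this yields $N_\infty \models_\BAlgs N^\infty$, hence $N_\infty \models_\BAlgs N_0$, and as $N_0 \models_\BAlgs \Box$, transitivity of $\models_\BAlgs$ gives $N_\infty \models_\BAlgs \Box$. Now $N_\infty$ is sufficiently complete, saturated \wrt\ $\HSP_\Base$ and $\RedSgi$, $(\Sigma_\back,\BAlgs)$ is compact by hypothesis, and $N_\infty \models_\BAlgs \Box$, so Theorem~\ref{thm:hiersup-ref-complete} yields $\Box \in N_\infty$. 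Finally, by the definition of $N_\infty$ there is an index $i$ with $\Box \in N_j$ for every $j \ge i$; hence the derivation eventually generates $\Box$, as claimed.

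The genuinely delicate part is not any single computation but the bookkeeping around the limit: keeping $N_\infty$ and $N^\infty$ carefully apart, checking that the hypotheses of the sufficient-completeness-preservation lemma and of Theorem~\ref{thm:hiersup-ref-complete} are met verbatim (in particular that sufficient completeness and $\BAlgs$-unsatisfiability are only ever invoked for clause sets for which they have actually been derived), and invoking fairness at the right point. All the real mathematical content — static refutational completeness, the characterisation of fair limits as saturated sets, preservation of sufficient completeness, and the redundancy-criterion properties — is imported wholesale from the results above.
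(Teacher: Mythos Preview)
Your proposal is correct and follows exactly the approach the paper indicates: the paper itself does not spell out a proof of this theorem, but explicitly says just before it that ``the limit of a fair derivation is saturated~\cite{Bachmair:Ganzinger:ResolutionTheoremProving:Handbook:2001}; this is the key result that allows us to deduce dynamic refutational completeness from static refutational completeness,'' which is precisely your strategy of showing that $N_\infty$ is saturated, sufficiently complete, and $\BAlgs$-unsatisfiable, and then invoking Theorem~\ref{thm:hiersup-ref-complete}. Your bookkeeping around $N^\infty$ versus $N_\infty$, the use of the sufficient-completeness-preservation lemma, and the appeal to the redundancy-criterion properties are all correct instantiations of the standard machinery the paper is citing.
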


% We do not spell out in detail theorem proving \emph{processes} here,
% because the well-known 
% framework of standard
% resolution~\cite{Bachmair:Ganzinger:ResolutionTheoremProving:Handbook:2001} can be
% readily instantiated with our calculus.

In the rest of the paper, we consider only theorem proving derivations
where each set $N_{i+1}$ results from from $N_i$ by either adding
the conclusions of inferences from $N_i$,
or by deleting clauses that are redundant \wrt~$N_{i+1}$,
or by applying the following generic simplification rule for clause sets:
\begin{equation*}
%%%%%%%%%%%%%%
\infrule[\IR{Simp}]{
%%%%%%%%%%%%%%
N \cup \{ C \}
}{
N \cup \{ D_1,\ldots,D_n \}
} 
\end{equation*}
if $n \ge 0$ and
(i) $D_i$ is weakly abstracted, for all $i=1,\ldots,n$,
(ii) $N \cup \{ C \} \models_\BAlgs D_i$, and
(iii) $C \in \RClSgi(N \cup \{ D_1,\ldots,D_n \})$.

Condition (ii) is needed for soundness, and condition (iii) is needed for completeness.
The \IR{Simp} rule covers the usual simplification rules of the standard
superposition calculus, such as demodulation by unit clauses and deletion of
tautologies and (properly) subsumed clauses. It also covers simplification of
arithmetic terms, \eg, replacing a subterm $(2+3)+\alpha$ by $5+\alpha$ and deleting
an unsatisfiable BG literal $5+\alpha < 4+\alpha$ from a clause. 
Any clause of the form $C \vee \zeta  \not\approx d$ where $d$ is domain element can be
simplified to $C[\zeta \mapsto d]$. 
Notice, though, that impure BG terms or FG terms can in general not be
simplified by BG tautologies. Although, \eg, $\op f(X) + 1 \not\approx y + 1$ is larger than 
$1 + \op f(X) \not\approx y + 1$ (with a LPO), such a ``simplification'' is not justified by the
redundancy criterion. Indeed, in the example it destroys sufficient completeness. 
\bigskip

We have to point out a limitation of the calculus described
above.
The standard superposition calculus $\SSP$ exists in two variants:
either using the \IR{Equality factoring} rule,
or using the \IR{Factoring} and \IR{Merging paramodulation} rules.
Only the first of these variants works together with
weak abstraction.
Consider the following example.
Let $N = \{\,\alpha + \beta \approx \alpha,$
$\op c \not\approx \beta \lor \op c \not\approx 0,$
$\op c \approx \beta \lor \op c \approx 0\,\}$.
All clauses in $N$ are weakly abstracted.
Since the first clause entails $\beta \approx 0$ relative to linear arithmetic,
the second and the third clause are obviously contradictory.
The $\HSP_\Base$ calculus as defined above is able to detect this
by first applying \IR{Equality factoring} to the third clause,
yielding $\op c \approx 0 \lor \beta \not\approx 0$,
followed by two \IR{Negative superposition} steps and
\IR{Close}.
If \IR{Equality factoring} is replaced by
\IR{Factoring} and \IR{Merging paramodulation}, however,
the refutational completeness of $\HSP_\Base$ is lost.
The only inference that remains possible is
a \IR{Negative superposition} inference between the
third and the second clause.
But since the conclusion of this inference is a tautology,
the inference is redundant, so the clause set is saturated.
(Note that the clause $\beta \approx 0$ is entailed by $N$, but it
is not explicitly present, so there is no way to perform a
\IR{Merging paramodulation} inference with the smaller side of
the maximal literal of the third clause.)

\section{Local Sufficient Completeness}
\label{sec:local-sc}

The definition of sufficient completeness \wrt\ simple instances
that was given in Sect.~\ref{sec:hierarchic-tp}
requires that
\emph{every} ground BG-sorted
FG term $s$ is equal to some ground BG term $t$
in every $\Sigma$-model $J$ of $\sgi(N) \cup \GndTh(\BAlgs)$.
It is rather evident, however, that this condition is sometimes stronger than needed.
For instance, if the set of input clauses $N$ is ground,
then we only have to consider the ground BG-sorted
FG terms that actually occur in $N$~\cite{Kruglov:Weidenbach:MACIS:2012}
(analogously to the Nelson--Oppen
combination procedure).
A relaxation of sufficient completeness
that is also useful for non-ground clauses
and that still ensures refutational completeness
was given
by Kruglov~\cite{Kruglov2013PhD}:

% \COMMENT{UW}{21/07/2016}{
%   ``very simple'' $\to$ ``smooth'', following Kruglov.
% }
\begin{definition}[Smooth ground instance]
\label{def:very-simple}
\label{def:smooth}
We say that a substitution $\sigma$ is \emph{smooth}
if for every variable $\zeta \in \dom(\sigma)$
all BG-sorted (proper or non-proper)
subterms of $\zeta\sigma$ are pure BG terms.
If $F\sigma$ is a ground instance of a term or clause $F$
and $\sigma$ is smooth,
$F\sigma$ is called a smooth ground instance.
(Recall that every ground BG term is necessarily pure.)
If $N$ is a set of clauses,
$\smgi(N)$ denotes the set of all smooth ground instances of clauses in $N$.
\qed
\end{definition}

Every smooth substitution is a simple substitution,
but not vice versa.
For instance, if $x$ is a FG-sorted variable and $y$ is
an ordinary BG-sorted variable,
then $\sigma_1 = [x \mapsto \op{cons}(\op{f}(1)+2,\op{empty})]$
and $\sigma_2 = [y \mapsto \op{f}(1)]$ are simple substitutions,
but neither of them is smooth,
since $x\sigma_1$ and $y\sigma_2$ contain the BG-sorted FG subterm $\op{f}(1)$.

\begin{definition}[Local sufficient completeness]
\label{def:local-sufficient-completeness}
Let $N$ be a $\Sigma$-clause set.
We say that $N$ is \emph{locally sufficiently complete \wrt\ smooth instances}
if for every $\Sigma_\back$-interpretation $I \in \BAlgs$,
every $\Sigma$-model $J$ of $\sgi(N) \cup \EADA$,
and every BG-sorted FG term $s$
occurring in $\smgi(N) \setminus \RClFlt(\smgi(N) \cup \EADA)$
there is
a ground BG term $t$ such that $J \models s \approx t$.
(Again, we will from now on omit the phrase ``\wrt~smooth instances'' for brevity.)
\qed
\end{definition}

\begin{example}
The clause set
$N = \{\,X \not\approx \alpha \lor \op f(X) \approx \beta\,\}$
is locally sufficiently complete:
The smooth ground instances
have the form
$s' \not\approx \alpha \lor \op f(s') \approx \beta$,
where $s'$ is a pure BG term.
We have to show that $\op f(s')$ equals some ground BG term $t$
whenever the smooth ground instance is not redundant.
Let $I \in \BAlgs$ be a $\Sigma_\back$-interpretation
and $J$ be a $\Sigma$-model of $\sgi(N) \cup \EADA$.
If $I \models s' \not\approx \alpha$,
then $s' \not\approx \alpha$ follows from some clauses in $\EADA$,
so $s' \not\approx \alpha \lor \op f(s') \approx \beta$
is contained in $\RClFlt(\smgi(N) \cup \EADA)$
and $\op f(s')$ need not be considered.
Otherwise $I \models s' \approx \alpha$,
then $\op f(s')$ occurs in a non-redundant
smooth ground instance of a clause in $N$
and $J \models f(s') \approx \beta$,
so $t := \beta$ has the desired property.
On the other hand,
$N$ is clearly not sufficiently complete,
since there are models of $\sgi(N) \cup \GndTh(\BAlgs)$
in which $\op f(\beta)$ is interpreted by some junk element
that is different from the interpretation of any ground BG term.
\end{example}

The example demonstrates that local sufficient completeness is
significantly more powerful than sufficient completeness,
but this comes at a price.
For instance,
as shown by the next example,
local sufficient completeness is not
preserved by abstraction:
\begin{example}
Suppose that the BG specification is linear integer arithmetic
(including parameters $\alpha$, $\beta$, $\gamma$),
the FG operators are
$\op f : \mathit{int} \to \mathit{int}$,
$\op g : \mathit{int} \to \mathit{data}$,
$\op a : \, \to \mathit{data}$,
the term ordering is an LPO with precedence
$\op g > \op f > \op a > \gamma > \beta > \alpha > 3 > 2 > 1$,
and the clause set $N$ is given by
\[
\begin{array}{@{}l@{\qquad}c@{}}
  \gamma \approx 1
&
  (1)
\\[1ex]
  \beta \approx 2
&
  (2)
\\[1ex]
  \alpha \approx 3
&
  (3)
\\[1ex]
  \op f(2) \approx 2
&
  (4)
\\[1ex]
  \op f(3) \approx 3
&
  (5)
\\[1ex]
  \op g(\op f(\alpha)) \approx \op a \;\lor\; \op g(\op f(\beta)) \approx \op a
&
  (6)
\\[1ex]
  \op g(\op f(\alpha)) \not\approx \op a \;\lor\; \op g(\op f(\beta)) \approx \op a
&
  (7)
\\[1ex]
  \op g(\op f(\gamma)) \approx \op a \;\lor\; \op g(\op f(\beta)) \approx \op a
&
  (8)
\end{array}\]
Since all clauses in $N$ are ground, $\smgi(N) = \sgi(N) = N$.
Clause (8) is redundant \wrt\ $\smgi(N) \cup \EADA$ (for any $I$):
it follows from clauses (6) and (7), and both are smaller than (8).
The BG-sorted FG terms in non-redundant clauses are
$\op f(2)$, $\op f(3)$, $\op f(\alpha)$, and $\op f(\beta)$,
and in any $\Sigma$-model $J$ of
$\sgi(N) \cup \EADA$, these are necessarily equal to the BG terms
$2$ or $3$, respectively,
so $N$ is locally sufficiently complete.

Let $N' = \abstr(N)$,
let $I$ be a BG-model such that
$\EA$ contains $\alpha \approx 3$, $\beta \approx 2$, and $\gamma \approx 1$
(among others),
$\DA$ contains $1 \not\approx 2$, $1 \not\approx 3$, and $2 \not\approx 3$
(among others),
and let $J$ be a $\Sigma$-model of
$\sgi(N') \cup \EADA$
in which $\op f(1)$ is interpreted by some junk element.
The set
$N'$ contains the clause
$\op g(\op f(X)) \approx \op a \;\lor\; \op g(\op f(Y)) \approx \op a
\;\lor\; \gamma \not\approx X \;\lor\; \beta \not\approx Y$
obtained by abstraction of (8).
Its smooth ground instance
$C \ =\ \op g(\op f(1)) \approx \op a \;\lor\; \op g(\op f(2)) \approx \op a
\;\lor\; \gamma \not\approx 1 \;\lor\; \beta \not\approx 2$
is not redundant:
it follows from other clauses in
$\smgi(N') \cup \EADA$,
namely
\[
\begin{array}{@{}l@{\qquad}l@{}}
  \alpha \approx 3
&
  (3)
\\[1ex]
  \op g(\op f(3)) \approx \op a \;\lor\; \op g(\op f(2)) \approx \op a \;\lor\; \alpha \not\approx 3 \;\lor\; \beta \not\approx 2
&
  (6')
\\[1ex]
  \op g(\op f(3)) \not\approx \op a \;\lor\; \op g(\op f(2)) \approx \op a \;\lor\; \alpha \not\approx 3 \;\lor\; \beta \not\approx 2
&
  (7')
\end{array}\]
but the ground instances $(6')$ and $(7')$
that are needed here are larger than $C$.
Since $C$ contains the BG-sorted FG term $\op f(1)$
which is interpreted differently from any BG term in $J$,
$N'$ is not locally sufficiently complete.
\end{example}

Local sufficient completeness
of a clause set suffices to ensure refutational completeness.
Kruglov's proof~\cite{Kruglov2013PhD}
works also if one uses
weak abstraction instead of strong abstraction
and ordinary as well as abstraction variables,
but it relies on an additional restriction on the term ordering.\footnote{%
  In~\cite{Kruglov2013PhD}, it is required that every ground term
  that contains a (proper or improper)
  BG-sorted FG subterm must be larger than
  any (BG or FG) ground term that does not contain such a subterm.}
We give an alternative proof that works without this restriction.

The proof is based on a transformation on $\Sigma$-interpretations.
Let $J$ be an arbitrary $\Sigma$-interpretation.
We transform $J$ into a term-generated $\Sigma$-interpretation
$\nojunk(J)$ without junk in two steps.
In the first step, we define
a $\Sigma$-inter\-pre\-ta\-tion $J'$ as follows:
\begin{itemize}
\item
  For every FG sort $\xi$,
  define $J'_\xi = J_\xi$.
\item
  For every BG sort $\xi$,
  define $J'_\xi = \{\,t^J \mid \text{$t$~is a ground BG term of sort~$\xi$}\,\}$.
\item
  For every $f : \xi_1 \ldots \xi_n \to \xi_0$
  the function $J'_f : J'_{\xi_1} \times \cdots \times J'_{\xi_n} \to J'_{\xi_0}$
  maps $(a_1,\dots,a_n)$ to $J_f(a_1,\dots,a_n)$, if
  $J_f(a_1,\dots,a_n) \in J'_{\xi_0}$,
  and to an arbitrary element of $J'_{\xi_0}$ otherwise.
\end{itemize}
That is, we obtain $J'$ from $J$ be deleting all junk elements from
$J_\xi$ if $\xi$ is a BG sort,
and by redefining the interpretation of $f$ arbitrarily whenever
$J_f(a_1,\dots,a_n)$ is a junk element.

In the second step, we define the $\Sigma$-interpretation
$\nojunk(J) = J''$ as the term-generated subinterpretation of $J'$, that is,
\begin{itemize}
\item
  For every sort $\xi$,
  $J''_\xi = \{\,t^{J'} \mid \text{$t$~is a ground term of sort $\xi$}\,\}$,
\item
  For every $f : \xi_1 \ldots \xi_n \to \xi_0$,
  the function
  $J''_f : J''_{\xi_1} \times \cdots \times J''_{\xi_n} \to J''_{\xi_0}$
  satisfies $J''_f(a_1,\dots,a_n) = J'_f(a_1,\dots,a_n)$.
\end{itemize}

\begin{lemma}
\label{lem:vsterms-keep-int}
Let $J$, $J'$, and $\nojunk(J) = J''$ be given as above.
Then the following properties hold:
\begin{enumerate}[\rm(i)]
\item
  $t^{J''} = t^{J'}$ for every ground term $t$.
\item
  $J''_\xi = J'_\xi$ for every BG sort $\xi$.
\item
  $J''$ is a term-generated $\Sigma$-interpretation
  and $J''\rstr{\Sigma_\back}$ is a term-generated $\Sigma_\back$-inter\-pre\-ta\-tion.
\item
  If $t = f(t_1,\dots,t_n)$ is a ground term,
  $t_i^{J'} = t_i^J$ for all $i$, and $t^J \in J'_\xi$,
  then $t^{J'} = t^J$.
\item
  If $t$ is a ground term such that
  all BG-sorted subterms of $t$ are BG terms, then $t^{J'} = t^J$.
\item
  If $C$ is a ground BG clause, then $J \models C$ if and only if $J'' \models C$
  if and only if $J''\rstr{\Sigma_\back} \models C$.
\end{enumerate}
\end{lemma}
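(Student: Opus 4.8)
The plan is to establish the six parts in the order (iv), (v), (i), (ii), (iii), (vi), an ordering in which no claim depends on a later one. Two observations will be used repeatedly. First, a trivial structural induction shows that $u^{J'} \in J'_\xi$ for every ground term $u$ of sort $\xi$, since $J'_f$ is defined to map into the $J'$-carriers; in particular $J''_{\xi_i} \subseteq J'_{\xi_i}$, so $J'_f$ is applicable to any tuple of arguments from the $J''$-carriers, and for $a_i = s_i^{J'}$ one has $J'_f(a_1,\dots,a_n) = f(s_1,\dots,s_n)^{J'} \in J''_{\xi_0}$, which is the well-definedness of $J''_f$ needed in~(iii). Second, the only way $J'$ can disagree with $J$ is through the ``arbitrary'' branch in the definition of $J'_f$, i.e.\ precisely when $J_f$ lands on a junk element; parts~(iv) and~(v) isolate situations where this does not happen.

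For~(iv), writing $f : \xi_1\ldots\xi_n \to \xi_0$, I would note that $t_i^{J} = t_i^{J'} \in J'_{\xi_i}$ by hypothesis, so $J'_f(t_1^{J},\dots,t_n^{J})$ is defined; since $J_f(t_1^{J},\dots,t_n^{J}) = f(t_1,\dots,t_n)^{J} = t^{J}$ and $t^{J} \in J'_{\xi_0}$ by assumption, the non-junk branch fires and gives $J'_f(t_1^{J},\dots,t_n^{J}) = t^{J}$; hence $t^{J'} = J'_f(t_1^{J'},\dots,t_n^{J'}) = t^{J}$. Part~(v) then follows by structural induction on $t$: for $t = f(t_1,\dots,t_n)$ each $t_i$ inherits the hypothesis (all BG-sorted subterms of a $t_i$ are BG-sorted subterms of $t$), so the induction hypothesis gives $t_i^{J'} = t_i^{J}$, and~(iv) applies because $t^{J} \in J'_{\xi_0}$: automatically if $\xi_0$ is a FG sort since then $J'_{\xi_0} = J_{\xi_0}$, and because $t$, being a BG-sorted subterm of itself, is then a ground BG term so that $t^{J} \in J'_{\xi_0}$ by definition, if $\xi_0$ is a BG sort.

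Part~(i) is another structural induction: for $t = f(t_1,\dots,t_n)$, the induction hypothesis gives $t^{J''} = J''_f(t_1^{J''},\dots,t_n^{J''}) = J''_f(t_1^{J'},\dots,t_n^{J'})$, and since each $t_i^{J'} \in J''_{\xi_i}$ the right-hand side equals $J'_f(t_1^{J'},\dots,t_n^{J'}) = t^{J'}$. For~(ii), $J''_\xi \subseteq J'_\xi$ is the opening observation, and the reverse inclusion for a BG sort $\xi$ holds because every element of $J'_\xi$ is $t^{J}$ for a ground BG term $t$, which by~(v) equals $t^{J'} \in J''_\xi$. Part~(iii) is then immediate: $J''$ is term-generated since $t^{J'} = t^{J''}$ by~(i), and $J''\rstr{\Sigma_\back}$ is term-generated since by~(ii) every element of a BG carrier is $t^{J} = t^{J'} = t^{J''}$ for a ground BG term $t$. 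Finally, for~(vi) a ground BG clause is a disjunction of literals $[\neg](s\approx t)$ with $s,t$ ground BG terms; by~(i) and~(v), $s^{J''} = s^{J'} = s^{J}$ and similarly for $t$, so each literal has the same truth value under $J$, under $J''$, and (since only BG symbols occur) under $J''\rstr{\Sigma_\back}$, whence the claim.

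I expect no real difficulty here: everything reduces to routine structural inductions, with the one genuinely substantive point being in~(iv), where one must be careful that all the arguments $t_i^{J}$ actually lie in the restricted BG-carriers $J'_{\xi_i}$, so that $J'_f$ is defined on them and, because the output $t^{J}$ is not a junk element, coincides there with $J_f$. The other thing to watch is the dependency structure, so that the parts are discharged in an order (iv)$\,\to\,$(v)$\,\to\,$(i)$\,\to\,$(ii)$\,\to\,$(iii),(vi) that avoids any circular appeal.
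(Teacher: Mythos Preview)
Your proposal is correct and follows essentially the same approach as the paper: the paper's proof simply states that (i)--(iv) follow ``directly from the definition of $J'$ and $J''$'', that (v) follows from (iv) by induction over the term structure, and that (vi) follows from (i) and (v). Your write-up is more detailed and makes the dependency ordering (iv)$\,\to\,$(v)$\,\to\,$(i)$\,\to\,$(ii)$\,\to\,$(iii),(vi) explicit, which is a useful clarification, but the underlying argument is the same.
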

\begin{proof}
Properties (i)-(iv) follow directly from the definition of $J'$ and $J''$.
Property (v) follows from (iv) and
the definition of $J'$ by induction over the term structure.
By (i) and (v), every ground BG term is interpreted in the same way in $J$ and $J''$,
moreover it is obvious that every ground BG term is interpreted in the same way
in $J''$ and $J''\rstr{\Sigma_\back}$;
this implies~(vi).
\qed
\end{proof}

\begin{lemma}
If $J$ is a $\Sigma$-interpretation and
$I = \nojunk(J)$,
then for every ground term $s$ there exists a ground term $t$
such that $s^I = t^I$ and all BG-sorted (proper or non-proper)
subterms of $t$ are BG terms.
\end{lemma}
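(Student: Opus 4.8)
The plan is to prove the claim by structural induction on the ground term $s$, strengthening the induction hypothesis so that the term $t$ produced also has the same sort as $s$. Call a ground term \emph{clean} if all of its BG-sorted subterms (proper or not) are BG terms; the goal is then to produce, for every ground $s$, a clean ground term $t$ of the same sort with $s^I = t^I$, where $I = \nojunk(J) = J''$.

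The first case to handle is $s$ being BG-sorted, say of sort $\xi$; this case is closed uniformly, independently of the syntactic shape of $s$. By Lemma~\ref{lem:vsterms-keep-int}(ii) we have $s^I = s^{J''} \in J''_\xi = J'_\xi$, and by the definition of $J'_\xi$ there is a ground BG term $t$ of sort $\xi$ with $t^J = s^{J''}$. Since every subterm of a BG term is again a BG term, $t$ is clean; hence Lemma~\ref{lem:vsterms-keep-int}(v) gives $t^{J'} = t^J$ and Lemma~\ref{lem:vsterms-keep-int}(i) gives $t^{J''} = t^{J'}$. Chaining these, $t^I = t^{J''} = t^{J'} = t^J = s^{J''} = s^I$, and $t$ is a clean ground BG term of the right sort.

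For the inductive step I would take $s$ to be FG-sorted, so $s = f(s_1,\dots,s_n)$ with $f$ a FG operator (a BG operator has a BG codomain sort), including the degenerate case $n = 0$. Applying the induction hypothesis to each proper subterm $s_i$ yields a clean ground term $t_i$ of the same sort as $s_i$ with $s_i^I = t_i^I$. Set $t = f(t_1,\dots,t_n)$, which is well-sorted with the same sort as $s$; then $t^I = I_f(t_1^I,\dots,t_n^I) = I_f(s_1^I,\dots,s_n^I) = s^I$ directly from the definition of the interpretation of a ground term (and $I = J''$ is a genuine $\Sigma$-interpretation by Lemma~\ref{lem:vsterms-keep-int}(iii), so $I_f$ is total on the relevant carriers). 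Finally $t$ is clean: $t$ itself is FG-sorted, and each BG-sorted proper subterm of $t$ either equals some $t_i$ (which is then a BG term) or lies strictly inside some $t_i$ (hence is a BG term by cleanness of $t_i$).

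I do not expect a genuine obstacle here: all the substance has already been distilled into Lemma~\ref{lem:vsterms-keep-int}, which records that $J$, $J'$, and $J''$ agree on clean ground terms and that every BG-sorted carrier element of $J''$ is the value of some ground BG term. The only points that need a little care are (a) noticing that the BG-sorted case terminates the recursion immediately, using only that BG-sorted carrier elements of $J''$ come from ground BG terms, and (b) checking that cleanness is preserved when clean terms are substituted as arguments of a FG operator — which is exactly the place where the ``proper or non-proper'' phrasing of the statement does its work.
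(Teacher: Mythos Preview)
Your proof is correct and follows essentially the same approach as the paper: handle the BG-sorted case via the description of the carrier $J''_\xi = J'_\xi$, then lift to FG-sorted terms by congruence. The only difference is cosmetic: the paper replaces the \emph{maximal} BG-sorted subterms of $s$ in one step, whereas you thread a full structural induction through all direct subterms; both arrive at the same clean term.
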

\begin{proof}
If $s$ has a BG sort $\xi$, then this follows directly from the
fact that $s^I \in I_\xi$ and that
every element of $I_\xi$ equals $t^I$ for some ground BG term $t$ of sort~$\xi$.
If $s$ has a FG sort, let $s_1,\dots,s_n$ be the maximal BG-sorted subterms
of $s = s[s_1,\dots,s_n]$.
Since for every $s_i$ there is a ground BG term $t_i$
with $s_i^I = t_i^I$,
we obtain $s^I = (s[s_1,\dots,s_n])^I = (s[t_1,\dots,t_n])^I$.
Set $t := s[t_1,\dots,t_n]$.
\qed
\end{proof}

\begin{corollary}
\label{cor:subst-equiv-vs-subst}
Let $J$ be a $\Sigma$-interpretation and
$I = \nojunk(J)$.
Let $C\sigma$ by a ground instance of a clause $C$.
Then there is a smooth ground instance $C\tau$ of $C$
such that $(t\sigma)^I = (t\tau)^I$ for every term occurring in $C$
and such that $I \models C\sigma$ if and only if $I \models C\tau$.
\end{corollary}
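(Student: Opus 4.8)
The plan is to build $\tau$ variable by variable, using the preceding lemma to replace each value $\zeta\sigma$ by an ``equivalent'' ground term all of whose BG-sorted subterms are BG terms, and then to propagate the equality of interpretations from the variables up to every term of $C$ by a routine structural induction. Concretely, since $C\sigma$ is ground, $\sigma$ maps every variable of $C$ to a ground term, so it suffices to define $\tau$ on $\vars(C)$; this automatically makes $C\tau$ ground. For each $\zeta \in \vars(C)$ the term $\zeta\sigma$ is ground, so the preceding lemma (applied to $I = \nojunk(J)$) yields a ground term $u_\zeta$, which we may take of the same sort as $\zeta$, such that $(\zeta\sigma)^I = u_\zeta^I$ and all BG-sorted subterms of $u_\zeta$ are BG terms. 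Set $\zeta\tau = u_\zeta$. Because every ground BG term is pure, all BG-sorted subterms of $\zeta\tau$ are pure BG terms, so $\tau$ is smooth in the sense of Def.~\ref{def:smooth}; hence $C\tau$ is a smooth ground instance of $C$.

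Next I would prove $(t\sigma)^I = (t\tau)^I$ for every (sub)term $t$ occurring in $C$, by induction on $t$. If $t = \zeta$ is a variable of $C$, this is exactly the defining property of $u_\zeta$. If $t = f(t_1,\dots,t_n)$, then the induction hypothesis gives $(t_i\sigma)^I = (t_i\tau)^I$ for all $i$, and since $I_f$ is a function we get $(t\sigma)^I = I_f\bigl((t_1\sigma)^I,\dots,(t_n\sigma)^I\bigr) = I_f\bigl((t_1\tau)^I,\dots,(t_n\tau)^I\bigr) = (t\tau)^I$. From this the remaining claim follows immediately: $I$ satisfies a clause exactly when it satisfies one of its literals, and for a literal $[\neg]\,a \approx b$ of $C$ we have $I \models (a \approx b)\sigma$ iff $(a\sigma)^I = (b\sigma)^I$ iff $(a\tau)^I = (b\tau)^I$ iff $I \models (a \approx b)\tau$ (and analogously for negated literals), so $I \models C\sigma$ iff $I \models C\tau$.

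I do not expect a real obstacle here: the corollary is essentially a repackaging of the preceding lemma together with a congruence argument. The only points needing a little care are bookkeeping: that the term $u_\zeta$ delivered by the lemma can be chosen of the same sort as $\zeta$ (so that $\tau$ is sort-respecting), and that its being \emph{ground} forces it to be \emph{pure} (so that $\tau$ is genuinely smooth, not merely simple) — both are immediate from the statement of the lemma and the remark that ground BG terms are pure.
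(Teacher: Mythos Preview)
Your proposal is correct and follows essentially the same approach as the paper's proof: define $\tau$ variable-by-variable via the preceding lemma so that $(\zeta\tau)^I = (\zeta\sigma)^I$ and all BG-sorted subterms of $\zeta\tau$ are BG terms, observe that $\tau$ is smooth, and conclude by structural induction over terms and clauses. The paper's proof is just a two-sentence sketch of exactly this argument; you have merely spelled out the details it leaves implicit.
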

\begin{proof}
Using the previous lemma, we
define $\tau$ such that for every variable $\zeta$ occurring in $C$,
$(\zeta\tau)^I = (\zeta\sigma)^I$ and all BG-sorted (proper or non-proper)
subterms of $\zeta\tau$ are BG terms.
Clearly $\tau$ is smooth.
The other properties follow immediately
by induction over the term or clause structure.
\qed
\end{proof}

\begin{lemma}
\label{lem:interp-equiv-nojunk-interp}
Let $N$ be a set of $\Sigma$-clauses
that is locally sufficiently complete.
Let $I \in \BAlgs$ be a $\Sigma_\back$-interpretation,
let $J$ be a $\Sigma$-model of $\sgi(N) \cup \EADA$,
and let $J'' = \nojunk(J)$.
Let $C \in N$ and
let $C\tau$ by a smooth ground instance in
$\smgi(N) \setminus \RClFlt(\smgi(N) \cup \EADA)$.
Then
$(t\tau)^J = (t\tau)^{J''}$ for every term $t$ occurring in $C$
and $J \models C\tau$ if and only if $J'' \models C\tau$.
\end{lemma}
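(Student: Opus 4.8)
The plan is to prove both assertions simultaneously by establishing the single, slightly stronger claim that
$(s\tau)^{J} = (s\tau)^{J'}$ for every term $s$ occurring in $C$, where $J'$ is the intermediate interpretation from the two-step construction of $\nojunk(J)$. Once this is shown, part~(i) of Lemma~\ref{lem:vsterms-keep-int} gives $(s\tau)^{J''} = (s\tau)^{J'} = (s\tau)^{J}$, which is the first assertion. The equivalence $J \models C\tau$ iff $J'' \models C\tau$ then follows literal by literal: every literal of $C\tau$ is an (in)equation between terms $s\tau,\,t\tau$ with $s,t$ occurring in $C$, and since $J$ and $J''$ assign them equal values, the literal, hence the disjunction $C\tau$, has the same truth value under $J$ and $J''$. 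The claim $(s\tau)^{J} = (s\tau)^{J'}$ is proved by induction on the term structure of $s$; note that the set of terms occurring in $C$ is closed under taking subterms, so the induction hypothesis is available for the direct subterms.

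For the base case let $s$ be a variable $\zeta$ (a constant is handled the same way; for a FG-sorted constant there is nothing to prove since $J'_\xi = J_\xi$ on FG sorts). Because $C\tau$ is a \emph{smooth} ground instance, all BG-sorted subterms of $\zeta\tau$ are pure, hence ground, BG terms, so Lemma~\ref{lem:vsterms-keep-int}(v) applies directly and yields $(\zeta\tau)^{J'} = (\zeta\tau)^{J}$. This is exactly the step that uses smoothness rather than mere simplicity: a simple $\tau$ could map an ordinary BG-sorted variable to something like $\op f(1)$, whose BG-sorted FG subterm need not be interpreted faithfully by $J'$.

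For the inductive step let $s = f(s_1,\dots,s_n)$, so $s\tau = f(s_1\tau,\dots,s_n\tau)$ and by the induction hypothesis $(s_i\tau)^{J'} = (s_i\tau)^{J}$ for all $i$. By Lemma~\ref{lem:vsterms-keep-int}(iv) it then suffices to show $(s\tau)^{J} \in J'_\xi$, where $\xi$ is the sort of $s$. If $\xi$ is a FG sort this is immediate since $J'_\xi = J_\xi$. If $\xi$ is a BG sort there are two subcases. If $s\tau$ is a BG term then $(s\tau)^{J}$ is trivially the interpretation of a ground BG term and so lies in $J'_\xi$ by construction. Otherwise $s\tau$ is a BG-sorted FG term; since $s$ occurs in $C$, the term $s\tau$ occurs in $C\tau$, and by hypothesis $C\tau \in \smgi(N) \setminus \RClFlt(\smgi(N) \cup \EADA)$. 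Applying local sufficient completeness of $N$ to the interpretation $I \in \BAlgs$ and the $\Sigma$-model $J$ of $\sgi(N) \cup \EADA$ supplied by the hypotheses, we obtain a ground BG term $t$ with $J \models s\tau \approx t$, hence $(s\tau)^{J} = t^{J} \in J'_\xi$. In both subcases Lemma~\ref{lem:vsterms-keep-int}(iv) gives $(s\tau)^{J'} = (s\tau)^{J}$, closing the induction.

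The only delicate point is the interface with Definition~\ref{def:local-sufficient-completeness}: one has to be careful that $s\tau$ really occurs in a \emph{non-redundant} smooth ground instance (so that local sufficient completeness is not vacuously inapplicable to it) and that the interpretations $I$ and $J$ at hand are precisely of the kind quantified over in that definition. Everything else is a routine structural induction resting on the already-established parts~(i), (iv), and~(v) of Lemma~\ref{lem:vsterms-keep-int}.
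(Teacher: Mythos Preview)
Your proof is correct and follows essentially the same approach as the paper's: both establish $(s\tau)^{J} = (s\tau)^{J'}$ by structural induction on the terms of $C$, using Lemma~\ref{lem:vsterms-keep-int}(v) for the variable case via smoothness, Lemma~\ref{lem:vsterms-keep-int}(iv) together with the induction hypothesis for the composite case, and local sufficient completeness for BG-sorted FG terms occurring in the non-redundant smooth instance; then Lemma~\ref{lem:vsterms-keep-int}(i) carries the result to $J''$. The only cosmetic difference is that the paper, in the inductive step, short-circuits the case that $t\tau$ is a BG term by invoking Lemma~\ref{lem:vsterms-keep-int}(v) directly rather than going through the induction hypothesis and~(iv), but your route is equally valid.
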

\begin{proof}
Let $J'$ be defined as above, then $(t\tau)^{J'} = (t\tau)^{J''}$
for any term $t$ occurring in $C$
by Lemma~\ref{lem:vsterms-keep-int}-(i).
We prove that $(t\tau)^J = (t\tau)^{J'}$ by induction over the term structure:
If $t$ is a variable, then by smoothness
all BG-sorted subterms of $t\tau$
are BG terms, hence $(t\tau)^{J'} = (t\tau)^J$ by Lemma~\ref{lem:vsterms-keep-int}-(v).
Otherwise let $t = f(t_1,\dots,t_n)$.
If $t\tau$ is a BG term,
then again $(t\tau)^{J'} = (t\tau)^J$ by Lemma~\ref{lem:vsterms-keep-int}-(v).
If $t\tau$ is a FG term of sort $\xi$, then $t$ must be a FG term of sort $\xi$ as well.
By the induction hypothesis,
$(t_i\tau)^J = (t_i\tau)^{J'}$ for every $i$.
If $\xi$ is a FG sort,
then trivially $(t\tau)^J = J_f((t_1\tau)^J,\dots,(t_n\tau)^J)$
is contained in $J'_\xi$, so $(t\tau)^{J'} = (t\tau)^J$
by Lemma~\ref{lem:vsterms-keep-int}-(iv).
Otherwise, $t\tau$ is a
BG-sorted FG term
occurring in $\smgi(N) \setminus \RClFlt(\smgi(N) \cup \EADA)$.
By local sufficient completeness, there exists a ground BG term $s$
such that $s^J = (t\tau)^J$, hence $(t\tau)^J \in J'_\xi$.
Again, Lemma~\ref{lem:vsterms-keep-int}-(iv) yields $(t\tau)^{J'} = (t\tau)^J$.

Since all left and right-hand sides of equations in $C\tau$
are evaluated in the same way in $J''$ and $J$,
it follows that $J \models C\tau$ if and only if $J'' \models C\tau$.
\qed
\end{proof}

\begin{lemma}
\label{lem:hspbase-ref-compl-lsc-aux}%
Let $N$ be a set of $\Sigma$-clauses
that is locally sufficiently complete.
Let $I \in \BAlgs$ be a $\Sigma_\back$-interpretation,
let $J$ be a $\Sigma$-model of $\sgi(N) \cup \EADA$,
and let $J'' = \nojunk(J)$.
Then $J''$ is a model of $N$.
\end{lemma}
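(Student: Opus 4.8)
The plan is to show that $J''$ satisfies every ground instance of every clause of $N$; since $J''$ is term-generated by construction (Lemma~\ref{lem:vsterms-keep-int}-(iii)), this is exactly what it means for $J''$ to be a model of $N$. The first step is a reduction to smooth instances: by Corollary~\ref{cor:subst-equiv-vs-subst}, applied with the interpretation $\nojunk(J) = J''$, for every ground instance $C\sigma$ of a clause $C \in N$ there is a smooth ground instance $C\tau \in \smgi(N)$ with $J'' \models C\sigma$ if and only if $J'' \models C\tau$. Hence it suffices to prove that $J'' \models D$ for every $D \in \smgi(N)$.

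I would prove this by well-founded induction on $D$ with respect to the clause ordering $\succ$. Fix $D = C\tau \in \smgi(N)$ and assume $J'' \models D'$ for all $D' \in \smgi(N)$ with $D' \prec D$. Split on whether $D$ is redundant. If $D \in \RClFlt(\smgi(N) \cup \EADA)$, then by definition there are clauses $D_1,\dots,D_n \in \smgi(N) \cup \EADA$, each smaller than $D$, with $D_1,\dots,D_n \models D$; every such $D_i$ that lies in $\smgi(N)$ is satisfied by $J''$ by the induction hypothesis, and every $D_i \in \EADA$ is a ground BG clause true in $J$ (as $J \models \EADA$) and hence true in $J''$ by Lemma~\ref{lem:vsterms-keep-int}-(vi); so $J'' \models D$. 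If instead $D \in \smgi(N) \setminus \RClFlt(\smgi(N) \cup \EADA)$, then Lemma~\ref{lem:interp-equiv-nojunk-interp} (which applies with the present $N$, $I$, $J$) gives $J \models D$ if and only if $J'' \models D$; since a smooth substitution is simple we have $D \in \sgi(N)$, so $J \models D$ and therefore $J'' \models D$. This closes the induction.

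To finish, I would combine the two parts: given an arbitrary $C \in N$ and an arbitrary ground substitution $\sigma$, Corollary~\ref{cor:subst-equiv-vs-subst} yields a smooth instance $C\tau \in \smgi(N)$ with $J'' \models C\sigma$ iff $J'' \models C\tau$, and the induction gives $J'' \models C\tau$, hence $J'' \models C\sigma$. As $\sigma$ was arbitrary and $J''$ is term-generated, $J'' \models C$ for all $C \in N$, i.e., $J'' \models N$.

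The delicate point — and the step I expect to be the main obstacle — is keeping the redundancy bookkeeping straight so that the induction actually applies: one must observe that the clauses witnessing $D \in \RClFlt(\smgi(N)\cup\EADA)$ come only from $\smgi(N)$ (covered by the induction hypothesis) or from $\EADA$ (covered by $J'' \models \EADA$), and, crucially, that Lemma~\ref{lem:interp-equiv-nojunk-interp} is invoked precisely in the non-redundant case, since that is the only case in which local sufficient completeness guarantees the BG-sorted FG subterms of $D$ are not evaluated to junk by $J$, so that $J$ and $J''$ agree on $D$.
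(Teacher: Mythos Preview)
Your proof is correct and follows essentially the same route as the paper's: reduce arbitrary ground instances to smooth ones via Corollary~\ref{cor:subst-equiv-vs-subst}, handle the non-redundant smooth instances via Lemma~\ref{lem:interp-equiv-nojunk-interp}, and handle the redundant ones using $J'' \models \EADA$. The only stylistic difference is that where you run an explicit well-founded induction on $\succ$ over $\smgi(N)$, the paper instead invokes property~(i) of the redundancy criterion (Def.~\ref{dfn:redundancy-criterion}) for $\RedFlt$ applied to $\smgi(N)\cup\EADA$, which packages the same induction as a black box; your version is arguably more self-contained, theirs slightly shorter.
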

\begin{proof}
The proof proceeds in three steps.
In the first step
we show that
$J''$ is a model of $\smgi(N) \setminus \RClFlt(\smgi(N) \cup \EADA)$:
Let $C \in N$
and let
$C\tau$ be a smooth ground instance
in $\smgi(N) \setminus \RClFlt(\smgi(N) \cup \EADA)$.
Since every smooth ground instance is a simple ground instance
and $J$ is a $\Sigma$-model of $\sgi(N)$,
we know that $J \models C\tau$.
By Lemma~\ref{lem:interp-equiv-nojunk-interp},
this implies $J'' \models C\tau$.

In the second step we show
that $J''$ is a model of $\smgi(N)$.
Since we already know that
$J''$ is a model of $\smgi(N) \setminus \RClFlt(\smgi(N) \cup \EADA)$,
it is clearly sufficient to show that
$J''$ is a model of $\RClFlt(\smgi(N) \cup \EADA)$:
First we observe that
by Lemma~\ref{lem:vsterms-keep-int}
$J'' \models \EADA$.
Using the result of the first step,
this implies that
$J''$ is a model of
$(\smgi(N) \setminus \RClFlt(\smgi(N) \cup \EADA)) \cup \EADA$,
and since
this set is a superset of
$(\smgi(N) \cup \EADA) \setminus \RClFlt(\smgi(N) \cup \EADA)$,
$J''$ is also a model of the latter.
By Def.~\ref{dfn:redundancy-criterion}-(i),
$(\smgi(N) \cup \EADA) \setminus \RClFlt(\smgi(N) \cup \EADA)
\models \RClFlt(\smgi(N) \cup \EADA)$.
So $J''$ is a model
of $\RClFlt(\smgi(N) \cup \EADA)$.

We can now show the main result of the lemma:
We know that $J''$ is a term-generated $\Sigma$-interpretation,
so $J'' \models N$ holds if and only if $J''$ is a model of all ground instances of
clauses in $N$.
Let $C\sigma$ be an arbitrary
ground instance of $C \in N$.
By Cor.~\ref{cor:subst-equiv-vs-subst},
there is a smooth ground instance $C\tau$ of $C$
such that $J'' \models C\sigma$ if and only if $J'' \models C\tau$.
As the latter has been shown in the second step,
\looseness=-1
the result follows.
\qed
\end{proof}

\begin{theorem}
\label{thm:hspbase-ref-compl-lsc}%
If the BG specification $(\Sigma_\back,\BAlgs)$ is compact and
if the clause set $N$ is locally sufficiently complete,
then $\HSP_\Base$ and $\RedSgi$
are
dynamically refutationally complete for $\abstr(N)$,
\ie,
if $N \models_\BAlgs \Box$,
then every fair derivation starting from $\abstr(N)$
eventually generates $\Box$.
\end{theorem}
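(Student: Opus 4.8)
The strategy is to reuse the proof of Theorem~\ref{thm:hiersup-dyn-ref-complete} verbatim up to the point where a $\Sigma$-model of the saturated clause set has been constructed via standard superposition, and then, instead of appealing to sufficient completeness, to apply the junk-elimination construction $\nojunk$ of Lemma~\ref{lem:hspbase-ref-compl-lsc-aux}. The one genuinely new complication is that local sufficient completeness (Definition~\ref{def:local-sufficient-completeness}) is a property of $N$ itself and is in general \emph{not} preserved by abstraction, nor by the inferences applied during a derivation; consequently the model that is eventually fed into Lemma~\ref{lem:hspbase-ref-compl-lsc-aux} must be shown to satisfy $\sgi(N)$ directly, not merely $\sgi$ of the saturated limit.

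In detail, I would argue by contradiction: assume $N \models_\BAlgs \Box$ but some fair derivation $N_0 = \abstr(N), N_1, \dots$ never produces $\Box$, and let $N_\infty$ be its limit. Then $N_\infty$ is a set of weakly abstracted clauses with $\Box \notin N_\infty$, and it is saturated with respect to $\HSP_\Base$ and $\RedSgi$. Since \IR{Close} is therefore not applicable to $N_\infty$, compactness of $(\Sigma_\back,\BAlgs)$ yields — exactly as in the proof of Theorem~\ref{thm:hiersup-ref-complete} — a term-generated $I \in \BAlgs$ satisfying every BG clause in $\sgi(N_\infty)$. By Theorem~\ref{thm:na-saturation}, $(N_\infty)_I$ is then saturated with respect to $\SSP$ and $\RedFlt$; since $\Box \notin (N_\infty)_I$, Theorem~\ref{thm:flat-superp-calc} provides a $\Sigma$-model $J$ of $(N_\infty)_I$.

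The central step is to verify that $J$ is a $\Sigma$-model of $\sgi(N) \cup \EADA$. That $J \models \EADA$ is immediate since $\EADA \subseteq (N_\infty)_I$, whence $J \models \GndTh(\BAlgs)$ by Corollary~\ref{cor:eada-gndth}. For $C \in N_\infty$ and a simple ground instance $C\sigma$, let $\rho$ send each $\zeta \in \vars(C)$ to the $\EA$-normal form of $\zeta\sigma$; then $\rho$ is simple and reduced with respect to~$\EA$, so $C\rho \in (N_\infty)_I$ and $J \models C\rho$, and since $C\sigma$ and $C\rho$ are equal modulo $\EA$ and $J \models \EA$, also $J \models C\sigma$; hence $J \models \sgi(N_\infty)$. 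Now every clause of the union of the derivation is either contained in $N_\infty$ or redundant with respect to it, so for each $C \in N$ the clause $\abstr(C)$ is in $N_\infty$, or $\sgi(\abstr(C)) \subseteq \RClFlt(\sgi(N_\infty)\cup\GndTh(\BAlgs)) \cup \GndTh(\BAlgs)$; in both cases $J \models \sgi(\abstr(C))$, using $J \models \sgi(N_\infty)\cup\GndTh(\BAlgs)$ and soundness of $\RClFlt$. By Proposition~\ref{prop:wab-equivalence-transformation}, $\sgi(C)$ and $\sgi(\abstr(C))$ are equivalent, so $J \models \sgi(N)$.

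It remains to close the argument. Since $N$ is locally sufficiently complete and $J \models \sgi(N) \cup \EADA$, Lemma~\ref{lem:hspbase-ref-compl-lsc-aux} shows that $J'' = \nojunk(J)$ is a model of $N$. By Lemma~\ref{lem:vsterms-keep-int}, $J''\rstr{\Sigma_\back}$ is a term-generated $\Sigma_\back$-interpretation satisfying $\EADA$, and — exactly as in the proof of Theorem~\ref{thm:hiersup-ref-complete}, with junk excluded by the construction of $J''$ and confusion by $\EADA$ — it is isomorphic to $I$, hence lies in $\BAlgs$. Thus $J''$ is a $\BAlgs$-interpretation modelling $N$, contradicting $N \models_\BAlgs \Box$; therefore $\Box$ is generated at some stage of the derivation. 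I expect the main obstacle to be the third paragraph: the model $J$ is manufactured from the saturated limit $N_\infty$, whereas local sufficient completeness constrains only $N$, so the bookkeeping that transports satisfaction from $\sgi(N_\infty)$ — together with redundancy and $\GndTh(\BAlgs)$ — back to $\sgi(N)$ is precisely what makes Lemma~\ref{lem:hspbase-ref-compl-lsc-aux} applicable; everything else is a routine adaptation of the proofs of Theorems~\ref{thm:na-saturation}, \ref{thm:hiersup-ref-complete}, and~\ref{thm:hiersup-dyn-ref-complete}.
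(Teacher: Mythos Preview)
Your proposal is correct and follows essentially the same route as the paper's proof: construct $I$ by compactness, lift to a $\Sigma$-model $J$ of $(N_\infty)_I$ via Theorem~\ref{thm:na-saturation} and standard superposition completeness, transport satisfaction from $\sgi(N_\infty)$ back to $\sgi(N)$ using redundancy and Proposition~\ref{prop:wab-equivalence-transformation}, and then apply $\nojunk$ together with Lemmas~\ref{lem:vsterms-keep-int} and~\ref{lem:hspbase-ref-compl-lsc-aux}. Your third paragraph even spells out the passage from $J \models (N_\infty)_I$ to $J \models \sgi(N_\infty)$ via $\EA$-normalized substitutions more explicitly than the paper does; this is the only point where you add detail, and it is welcome rather than a deviation.
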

\begin{proof}
Let $\calD = (N_i)_{i\ge 0}$ be a fair derivation starting from $N_0 = \abstr(N)$,
and let $N_\infty$ be the limit of $\calD$.
By fairness, $N_\infty$ is saturated \wrt~$\HSP_\Base$ and $\RedSgi$.
If $\Box \notin N_\infty$,
then the \IR{Close} rule is not applicable to $N_\infty$.
Since $(\Sigma_\back,\BAlgs)$ is compact,
this means that the set of all $\Sigma_\back$-clauses
in $\sgi(N_\infty)$ is satisfied by some
term-generated $\Sigma_\back$-interpretation $I \in \BAlgs$.
By Thm.~\ref{thm:na-saturation},
$(N_\infty)_I$ is saturated with respect to the standard superposition calculus.
Since $\Box \notin (N_\infty)_I$, the refutational completeness of standard superposition
implies that there is a $\Sigma$-model $J$ of $(N_\infty)_I$,
and since $\EADA \subseteq (N_\infty)_I$,
$J$ is also a $\Sigma$-model of $\sgi(N_\infty) \cup \EADA$.
Since every clause in $N_0$ is either contained
in $N_\infty$ or redundant \wrt~$N_\infty$,
every simple ground instance of a clause in $N_0$
is a simple ground instance of a clause in $N_\infty$
or contained in $\GndTh(\BAlgs)$
or redundant \wrt~$\sgi(N_\infty)\cup\GndTh(\BAlgs)$.
We conclude that $J$ is a $\Sigma$-model of $\sgi(N_0)$,
and since $\sgi(N_0)$ and $\sgi(N)$ are equivalent,
$J$ is a $\Sigma$-model of $\sgi(N)$.
Now define $J'' = \nojunk(J)$.
By Lemma~\ref{lem:vsterms-keep-int},
$J''$ is a term-generated $\Sigma$-interpretation,
$J''\rstr{\Sigma_\back}$ is a term-generated $\Sigma_\back$-interpretation,
and $J''\rstr{\Sigma_\back}$ satisfies $\EADA$.
Consequently, $J''\rstr{\Sigma_\back}$ is isomorphic to $I$
and thus contained in $\BAlgs$.
Finally,
$J''$ is a model of $N$
by Lemma~\ref{lem:hspbase-ref-compl-lsc-aux}.
\qed
\end{proof}

If all BG-sorted FG terms in a set $N$ of clauses are ground,
local sufficient completeness can be established automatically
by adding a ``definition'' of the form $t \approx \alpha$,
where $t$ is a ground BG-sorted FG term and $\alpha$ is a parameter.
The following section explains this idea in a more general way.

\section[Local Sufficient Completeness by {\sf Define}]{Local Sufficient Completeness by Define}
% \IR{\bfseries\normalsize Define}
\label{sec:define}
The $\HSP_\Base$ inference system will derive a contradiction if the
input clause set is inconsistent and (locally) sufficiently complete (cf.\ Sect.~\ref{sec:local-sc}). 
In this section we extend this functionality by adding an inference rule, \IR{Define},  which can
turn input clause sets that are not sufficiently complete into locally sufficiently
complete ones.
Technically, the \IR{Define} rule
derives ``definitions'' of the form $t \approx \alpha$, where $t$ is a ground BG-sorted
FG term and $\alpha$ is a parameter of the proper sort.
For economy of 
reasoning, definitions are introduced only on a by-need basis, when $t$ appears in a
current clause, and $t \approx \alpha$ is used to simplify that clause immediately. 

We need one more preliminary definition before introducing \IR{Define} formally.
\begin{definition}[Unabstracted clause]
A clause is \emph{unabstracted} if it does not contain any disequation $\zeta \not\approx t$
between a variable $\zeta$ and a term $t$ unless 
$t \neq \zeta$ and $\zeta \in \vars(t)$.  
\qed
\end{definition}
Any clause can be unabstracted by repeatedly replacing $C \vee \zeta \not\approx t$ by
${C[\zeta \mapsto t]}$ whenever $t = \zeta$ or $\zeta \notin \vars(t)$.
Let \emph{$\unabstr(C)$} denote an unabstracted version
of $C$ obtained this way.
If $t = t[\zeta_1,\ldots,\zeta_n]$ is a term in $C$ and $\zeta_i$ is
finally instantiated to $t_i$, we denote its unabstracted version $t[t_1,\ldots,t_n]$ by
$\unabstr(t[\zeta_1,\ldots,\zeta_n], C)$.
For a clause set $N$ let $\unabstr(N) = \{\, \unabstr(C) \mid C \in N \,\}$.

The \emph{full inference system $\HSP$}
of the hierarchic superposition calculus
consists of the inference rules of $\HSP_\Base$ and the
following \IR{Define} inference rule. As for the other inference rules we suppose
that all premises are weakly abstracted.
\COMMENT{PB}{4/3/2019}{
Applied abstr-operator to the set of the  conclusion clauses, in the inference rule
definition and condition (v).  Before we had the set of the abstr-operator
applied to these two clauses separately. A bit shorter and avoids an overfull line in condition (v)
}
\begin{equation*}
%%%%%%%%%%%%%%
\infrule[\IR{Define}]{
%%%%%%%%%%%%%%
N \cup \{ L[t[\zeta_1,\ldots,\zeta_n]] \vee D \}
}{
N \cup \abstr(\{ t[t_1,\ldots,t_n] \approx \alpha_{t[t_1,\ldots,t_n]},\ L[\alpha_{t[t_1,\ldots,t_n]}] \vee D\}
} 
\end{equation*}
if
\begin{enumerate}[\rm(i)]
\item $t[\zeta_1,\ldots,\zeta_n]$ is a minimal BG-sorted non-variable term with a
toplevel FG operator, 
\item $t[t_1,\ldots,t_n] = \unabstr(\{t[\zeta_1,\ldots,\zeta_n], L[t[\zeta_1,\ldots,\zeta_n]] \vee D\})$, 
\item $t[t_1,\ldots,t_n]$ is ground, 
\item $\alpha_{t[t_1,\ldots,t_n]}$ is a parameter, uniquely determined by  $t[t_1,\ldots,t_n]$, and
\item $L[t[\zeta_1,\ldots,\zeta_n]] \vee D \in \RClSgi(N \cup \abstr(\{ t[t_1,\ldots,t_n] \approx \alpha_{t[t_1,\ldots,t_n]},\ L[\alpha_{t[t_1,\ldots,t_n]}] \vee D\}))$.
\end{enumerate}
\bigskip

In (i), by minimality we mean that no proper subterm of $t[\zeta_1,\ldots,\zeta_n]$ is a 
BG-sorted non-variable term with a toplevel FG operator. 
In effect,
the \IR{Define} rule eliminates such terms inside-out. 
Conditions (iii) and (iv) are needed for soundness. Condition (v) 
is needed to
guarantee that \IR{Define} is a simplifying inference rule, much like the \IR{Simp} rule
in Sect.~\ref{sec:refutational-completeness}.\footnote{Condition (i) of \IR{Simp} is
  obviously satisfied and condition (iii) there is condition (v) of \IR{Define}. Instead of condition
  (ii), \IR{Define} inferences are only $\BAlgs$-satisfiability preserving, which however
  does not endanger soundness.} In particular, it makes sure that \IR{Define} cannot be applied to
definitions themselves.

\begin{theorem}
\label{thm:hsp-sat-preserving}%
The inference rules of $\HSP$
are satisfiability-preserving \wrt\ $\models_\BAlgs$,
\ie, for every inference with premise $N$ and conclusion $N'$
we have
$N \models_\BAlgs \Box$ if and only if $N' \models_\BAlgs \Box$.
Moreover,
$N' \models_\BAlgs N$.
\end{theorem}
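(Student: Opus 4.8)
The plan is to split the argument according to which inference of $\HSP$ produced the step $N \leadsto N'$: either one of the five rules of $\HSP_\Base$, in which case $N' = N \cup \{C\}$ for the conclusion clause $C$, or the \IR{Define} rule, in which case the premise clause $L[t[\zeta_1,\ldots,\zeta_n]] \vee D$ is replaced by the two abstracted clauses $\abstr(t[t_1,\ldots,t_n]\approx\alpha)$ and $\abstr(L[\alpha]\vee D)$, where $\alpha$ abbreviates the parameter $\alpha_{t[t_1,\ldots,t_n]}$. In every case I aim to establish the two facts (a)~$N' \models_\BAlgs N$ and (b)~every $\BAlgs$-interpretation that models $N$ can be turned into a $\BAlgs$-interpretation that models $N'$. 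Once (a) and (b) are available the claimed equivalence is immediate: (a) says that the class of $\BAlgs$-models of $N'$ is contained in the class of $\BAlgs$-models of $N$, so $N \models_\BAlgs \Box$ forces $N' \models_\BAlgs \Box$, and (b) gives the converse. The ``moreover'' clause $N' \models_\BAlgs N$ is exactly (a).

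For an $\HSP_\Base$ inference both facts are cheap. Theorem~\ref{thm:hsp-base-sound} gives $N \models_\BAlgs C$ for the conclusion $C$ (this also covers \IR{Close}, where $C = \Box$), hence $N \models_\BAlgs N'$; thus every $\BAlgs$-interpretation modelling $N$ already models $N'$ without any modification, which is (b), and (a) holds trivially because $N \subseteq N'$.

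The interesting case is \IR{Define}. Write $t$ for the ground term $t[t_1,\ldots,t_n]$, so the step reads $M \cup \{L[t] \vee D\} \leadsto M \cup \abstr(\{t \approx \alpha,\, L[\alpha] \vee D\})$, where by conditions (i)--(iii) of the rule $t$ is ground, BG-sorted, and has a toplevel FG operator. Since $\alpha$ is the parameter dedicated to $t$, we may assume it does not occur in $M \cup \{L[t]\vee D\}$. First I would record the auxiliary fact that $\{C\} \models_\BAlgs \abstr(C)$ and $\{\abstr(C)\} \models_\BAlgs C$ hold for every clause $C$: this follows from Proposition~\ref{prop:wab-equivalence-transformation} (equivalence of $\sgi(C)$ and $\sgi(\abstr(C))$) together with Lemma~\ref{lemma:modelssgi-implies-modelsbalgs}. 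For (a), take a $\BAlgs$-interpretation $I$ modelling the conclusion set; by the auxiliary fact it models $t \approx \alpha$ and $L[\alpha] \vee D$, so $t^I = \alpha^I$, and by congruence it models $L[t] \vee D$; together with $I \models M$ this yields $I \models N$. For (b), take a $\BAlgs$-interpretation $I$ modelling $M \cup \{L[t] \vee D\}$ and let $I'$ be obtained from $I$ by reinterpreting $\alpha$ as $t^I$; this is legitimate because $t$ is ground and BG-sorted, so $t^I$ lies in the carrier set $(I\rstr{\Sigma_\back})_\xi$, where $\xi$ is the BG sort of $t$, and a parameter may be interpreted by an arbitrary element of that carrier without leaving $\BAlgs$ (the restriction stays term-generated). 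Since $\alpha$ does not occur in $M \cup \{L[t] \vee D\}$, $I'$ still models those clauses; by construction $I' \models t \approx \alpha$, and since $\alpha$ and $t$ have the same value under $I'$, congruence turns $I' \models L[t]\vee D$ into $I' \models L[\alpha]\vee D$; finally the auxiliary fact promotes this to $I' \models \abstr(\{t\approx\alpha,\, L[\alpha]\vee D\})$, so $I'$ models the conclusion set.

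I expect the main obstacle to be the soundness of introducing the parameter in part (b) of the \IR{Define} case: one has to be sure that $t^I$ is a legitimate value for $\alpha$ and that reinterpreting $\alpha$ keeps the $\Sigma_\back$-restriction inside $\BAlgs$. This is precisely where the ``parameters are interpreted freely'' feature of the BG specification (cases (2) and (3) of Sect.~\ref{sec:hierarchic-tp}) is used, and the rule genuinely relies on it. The remaining difficulty is purely clerical --- keeping track of the abstraction operators around the two new clauses, which is entirely absorbed by Proposition~\ref{prop:wab-equivalence-transformation} --- and the rest is routine congruence reasoning.
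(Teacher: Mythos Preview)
Your proposal is correct and follows the same route as the paper's proof: reduce the $\HSP_\Base$ case to Theorem~\ref{thm:hsp-base-sound}, and for \IR{Define} argue the two directions by congruence (for $N' \models_\BAlgs N$) and by reinterpreting the fresh parameter $\alpha$ as $t^I$ (for the converse).  There is only one small slip in your write-up of the \IR{Define} case. You describe the premise as $M \cup \{L[t]\vee D\}$ with $t = t[t_1,\dots,t_n]$ ground, but the actual premise is $M \cup \{L[t[\zeta_1,\dots,\zeta_n]]\vee D\}$ with the variables $\zeta_i$ still present; the ground term $t[t_1,\dots,t_n]$ only arises via unabstraction (condition~(ii)). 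The paper bridges this gap in one line: condition~(ii) makes $L[t[\zeta_1,\dots,\zeta_n]]\vee D$ and $L[t[t_1,\dots,t_n]]\vee D$ logically equivalent, since the disequations retained in $D$ force $\zeta_i = t_i$ in every satisfying instance. Once you insert that observation, your parts~(a) and~(b) go through verbatim, and your identification of the parameter-reinterpretation step as the only nontrivial point is exactly what the paper also relies on (and, as you note, tacitly assumes that $\BAlgs$ is closed under free reinterpretation of parameters).
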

\begin{proof}
For the inference rules of $\HSP_\Base$,
the result follows from Thm.~\ref{thm:hsp-base-sound}.

For \IR{Define}, we observe first that condition (ii) implies that
$L[t[\zeta_1,\ldots,\zeta_n]] \vee D$
and
$L[t[t_1,\ldots,t_n]] \vee D$
are equivalent.
If $N \cup \{ L[t[t_1,\ldots,t_n]] \vee D \}$
is $\BAlgs$-satisfiable,
let $I$ be a $\Sigma$-model of all ground instances of
$N \cup \{ L[t[t_1,\ldots,t_n]] \vee D \}$ such that
$I\rstr{\Sigma_\back}$ is in $\BAlgs$.
By condition (iii), $t[t_1,\ldots,t_n]$ is ground.
Let $J$ be the $\Sigma$-interpretation obtained from $J$
by redefining the interpretation of $\alpha_{t[t_1,\ldots,t_n]}$
in such a way that
$\alpha_{t[t_1,\ldots,t_n]}^J = t[t_1,\ldots,t_n]^I$,
then $J$ is a $\Sigma$-model of every ground instance of $N$,
$t[t_1,\ldots,t_n] \approx \alpha_{t[t_1,\ldots,t_n]}$
and $L[\alpha_{t[t_1,\ldots,t_n]}] \vee D$,
and hence also a model of the abstractions of these clauses.
Conversely,
every model of
$t[t_1,\ldots,t_n] \approx \alpha_{t[t_1,\ldots,t_n]}$
and
$L[\alpha_{t[t_1,\ldots,t_n]}] \vee D$
is a model of
$L[t[t_1,\ldots,t_n]] \vee D$.
\qed
\end{proof}

% \COMMENT{UW}{10/12/2014}{
%   There should also be some condition that ensures that we don't apply
%   Define to a definition previously derived.
%   Even applying Define to a clause $\abstr(t[t_1,\ldots,t_n] = s)$,
%   where $s$ is a ground BG term, but not a parameter, would be pointless.
% \COMMENT{PB}{19/03/2015}{
% This has been taken care of in the definition of $N^\pre$ below already, so that these
% pointless applications are impossible. Added anyway a comment, see below after example.
% \COMMENT{PB}{31/03/2015}{
% It is not that simple. Added condition (v) above to make \IR{Define} simplifying.
% }
% }
% }

\begin{example}
  \label{ex:define-1}
Let $C = \op g(\op f(x, y)+1, x, y) \approx 1 \vee x \not\approx 1 + \beta \vee y \not\approx \op c$ be the premise of a
$\IR{Define}$ inference. We get 
$\unabstr(C) =  \op g(\op f(1+\beta, \op c)+1, 1+\beta, \op c) \approx 1$. The (unabstracted)
conclusions are the definition $\op f(1+\beta, \op c) \approx \alpha_{\op f(1+\beta, \op c)}$ and the
clause $\op g(\alpha_{\op f(1+\beta, \op c)}+1, x, y) \approx 1 \vee x \not\approx 1 + \beta \vee y \not\approx \op c$.
Abstraction yields
$\op f(X, \op c) \approx \alpha_{\op f(1+\beta, \op c)}
\vee
X \not\approx 1+\beta$
and
$\op g(Z, x, y) \approx 1 \vee x \not\approx 1 + \beta \vee y \not\approx \op c
\vee
Z \not\approx \alpha_{\op f(1+\beta, \op c)}+1$.

One might be tempted to first unabstract the premise $C$ before applying $\IR{Define}$.
However, unabstraction may eliminate FG terms ($\op c$ in the example)
which is not undone by abstraction. This may lead to incompleteness.\qed
\end{example}

\begin{example}
  \label{ex:define-2}
  The following clause set demonstrates the need for condition (v) in \IR{Define}.
  Let $N = \{ \op f(\op c) \approx 1\}$ and suppose condition (v) is absent. Then we obtain 
$N' = \{ \op f(\op c) \approx \alpha_{\op f(\op c)},\ \alpha_{\op f(\op c)} \approx 1\}$. By demodulating
the first clause with the second clause we get 
$N'' = \{ \op f(\op c) \approx 1,\ \alpha_{\op  f(\op c)} \approx 1\}$. 
Now we can continue with $N''$ as with $N$. The problem is, of course, that the new definition
$\op f(\op c) \approx \alpha_{\op f(\op c)}$ is greater \wrt\ the term ordering than the parent
clause, in violation of condition~(v).
\qed
\end{example}

\begin{example} 
\label{ex:define-3}
Consider the weakly abstracted clauses $\op P(0)$, $\op f(x)>0 \lor  \lnot \op P(x)$,    
$\op Q(\op f(x))$, $\lnot \op Q(x) \lor  0>x$.
  Suppose $\neg \op P(x)$ is maximal in the second clause. 
By superposition between the first two clauses we derive $\op f(0)>0$.
With \IR{Define} we obtain $\op f(0)\approx \alpha_{\op f(0)}$ and $\alpha_{\op f(0)}>0$, the latter
replacing $\op 
f(0)>0$. From the third clause and $\op f(0)\approx \alpha_{\op f(0)}$ we obtain $\op Q(\alpha_{\op f(0)})$, and with
the fourth clause $0>\alpha_{\op f(0)}$. Finally we apply \IR{Close} to 
$\{ \alpha_{\op f(0)}>0,\ 0>\alpha_{\op f(0)}\}$. \qed
\end{example}

\COMMENT{PB}{28/09/2019}{
Removed, as not important:

It can happen that one unabstracted version of a clause $C$
has a BG-sorted
ground term, thus permitting \IR{Define}, and another one does not,
say, if $C = X \not\approx Y+1 \lor X \not\approx 6 \lor f(X) \approx 1$.
In practice, one could use a heuristics
that prefers those unabstractions that 
enable \IR{Define}. However, in our main application, the ground BG-sorted term
fragment in Sect.~\ref{sec:gbt},
unabstraction is always unique and so this problem does not arise.
}
It is easy to generalize Thm.~\ref{thm:hspbase-ref-compl-lsc}
to the case that local sufficient completeness does not hold initially,
but is only established with the help of \IR{Define} inferences:

\begin{theorem}
\label{thm:hspbase-ref-compl-lsc-define}%
Let $\calD = (N_i)_{i\ge 0}$ be a fair $\HSP$ derivation starting from
$N_0 = \abstr(N)$, let $k \ge 0$,
such that $N_k = \abstr(N')$ and $N'$ is
locally sufficiently complete.
%% for $\relev(N')$.
If the BG specification $(\Sigma_\back,\BAlgs)$ is compact,
then the limit of $\calD$ contains $\Box$ if and only if $N$ is
$\BAlgs$-unsatisfiable.
\end{theorem}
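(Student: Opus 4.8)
The plan is to reduce this statement to Theorem~\ref{thm:hspbase-ref-compl-lsc} by observing that the derivation $\calD$ can be viewed, from step $k$ onwards, as a fair derivation starting from $\abstr(N')$. First I would note the ``only if'' direction: if $N$ is $\BAlgs$-satisfiable, then by Theorem~\ref{thm:hsp-sat-preserving} (applied repeatedly along the prefix $N_0,\dots,N_k$, since \IR{Define} and all $\HSP_\Base$ rules preserve $\BAlgs$-satisfiability), every $N_i$ is $\BAlgs$-satisfiable, hence no $N_i$ contains $\Box$, and neither does the limit. Conversely, suppose $N$ is $\BAlgs$-unsatisfiable. By the same theorem, $N_k = \abstr(N')$ is $\BAlgs$-unsatisfiable, and since abstraction preserves satisfiability (Prop.~\ref{prop:wab-equivalence-transformation} together with the relationship between $\sgi$-equivalence and $\models_\BAlgs$), $N'$ itself satisfies $N' \models_\BAlgs \Box$.

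Next I would argue that the tail $\calD' = (N_{k+i})_{i \ge 0}$ is a fair $\HSP$ derivation (in fact we only need the $\HSP_\Base$ rules plus \IR{Simp}/\IR{Define} as simplifications) starting from $\abstr(N')$. Fairness of $\calD$ means every $\HSP_\Base$ inference with non-redundant premises in $N_\infty$ becomes redundant at some point; since $N_\infty$ is also the limit of $\calD'$ (the two sequences share the same eventual behaviour), $\calD'$ is fair as well. The one subtlety is that $\calD$ may contain further \IR{Define} steps after position $k$; these are handled by the generic \IR{Simp}-style machinery (a \IR{Define} step replaces a clause by clauses it entails relative to $\BAlgs$, with the replaced clause redundant w.r.t.\ the result, by condition~(v)), so they fit the ``adding conclusions / deleting redundant clauses / simplifying'' format that the theorem-proving-derivation definition and the limit-is-saturated result require. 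Hence $\calD'$ meets the hypotheses of Theorem~\ref{thm:hspbase-ref-compl-lsc} applied to the locally sufficiently complete set $N'$: since $N' \models_\BAlgs \Box$ and $(\Sigma_\back,\BAlgs)$ is compact, the derivation $\calD'$ eventually generates $\Box$, so $\Box$ lies in the limit of $\calD'$, which is the limit of $\calD$.

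The main obstacle I anticipate is the bookkeeping around \IR{Define} steps occurring \emph{after} position $k$: one must check that such steps, which introduce fresh parameters $\alpha_{t}$ and fresh abstraction variables, do not disturb local sufficient completeness of the underlying clause set, and that they are legitimate simplification steps in the sense needed for ``limit of a fair derivation is saturated''. The first point follows because a \IR{Define} step only adds a definition $t \approx \alpha_t$ (which can only help sufficient completeness) together with a clause equivalent to a subclause of an existing one; the second point is exactly what condition~(v) of \IR{Define} guarantees, as remarked in the text just before Theorem~\ref{thm:hsp-sat-preserving}. A second, more minor point is ensuring the simple-ground-instance correspondence ``$\sgi(N_0)$ and $\sgi(N)$ equivalent'' used inside the proof of Theorem~\ref{thm:hspbase-ref-compl-lsc} still goes through here for $N_k$ and $N'$, which is immediate since $N_k = \abstr(N')$ and Prop.~\ref{prop:wab-equivalence-transformation} applies verbatim. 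Everything else is a routine unwinding of the already-established machinery.
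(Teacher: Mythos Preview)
Your proposal is correct and follows essentially the same route as the paper: both reduce to Theorem~\ref{thm:hspbase-ref-compl-lsc} applied to the tail derivation $(N_i)_{i\ge k}$, with the paper phrasing the ``if'' direction contrapositively (assume $\Box \notin N_\infty$, build a model of $N'$ as in that proof, then chain $N' \models_\BAlgs N_k \models_\BAlgs N_0 \models_\BAlgs N$ via Theorem~\ref{thm:hsp-sat-preserving} and redundancy of deleted clauses) while you push unsatisfiability forward via equisatisfiability. Your concern about \IR{Define} steps after position $k$ disturbing local sufficient completeness is moot: the proof of Theorem~\ref{thm:hspbase-ref-compl-lsc} only invokes local sufficient completeness of $N'$ itself (in the final appeal to Lemma~\ref{lem:hspbase-ref-compl-lsc-aux}), never of any later $N_i$, so nothing needs to be tracked beyond step $k$.
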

\begin{proof}
Since every derivation step in an $\HSP$ derivation is
satisfiability-preserving, the ``only if'' part is
again obvious.

For the ``if'' part, we assume that
$N_\infty$, the limit of $\calD$, does not contain $\Box$.
By fairness, $N_\infty$ is saturated \wrt~$\HSP$ and $\RedSgi$.
We start by considering the subderivation
$(N_i)_{i\ge k}$ starting with $N_k = \abstr(N')$.
Like in the proof of Thm.~\ref{thm:hspbase-ref-compl-lsc}, we
can show that $N'$ is $\BAlgs$-satisfiable,
that is, there exists a model $J$ of $N'$
that is a term-generated $\Sigma$-interpretation,
and whose restriction $J\rstr{\Sigma_\back}$ is contained
in $\BAlgs$.
From Lemma~\ref{lemma:modelssgi-implies-modelsbalgs}
and Prop.~\ref{prop:wab-equivalence-transformation}
we see that $N' \models_\BAlgs N_k$,
and similarly $N_0 \models_\BAlgs N$.
Furthermore, since every clause in $N_0 \setminus N_k$
must be redundant \wrt~$N_k$,
we have $N_k \models_\BAlgs N_0$.
Combining these three entailments,
we conclude that $N' \models_\BAlgs N$, so
$N$ is $\BAlgs$-satisfiable
and $J$ is a model of $N$.
\qed
\end{proof}

% \COMMENT{UW}{14/12/2014}{
%   There is one problem that I haven't thought about before:
%   Unabstraction may replace a variable by a FG term (!).
%   This replacement is not undone in the following abstraction step.
%   Therefore, the operation has some similarity to definition
%   elimination $x \not\approx t \lor C[x]~~\to~~C[t]$,
%   which is unfortunately in general incompatible with redundancy
%   (at least in theory; the question whether it is compatible with
%   practically implemented redundancy techniques is open).
%   But that means that it's no longer clear whether we have
%   refutational completeness for
%   derivations with infinitely many Define-steps.
% \COMMENT{PB}{31/03/2015}{
% That seems to be fixed with the return of the old Define rule and the added condition (v)
% that makes Define inferences simplifying.
% \COMMENT{UW}{22/08/2016}{
% OK.
% }
% }
% }

Condition (v) of the \IR{Define} rule
requires that the clause that is deleted during a \IR{Define} inference
must be redundant with respect to the remaining clauses. This condition is needed to
preserve refutational completeness. There are cases, however, where condition (v) prevents
us from introducing a definition for a subterm.
Consider the clause set $N =
\{ C \}$ where $C  = \op f (\op c) \approx 1 \vee \op c \approx \op d$,
the constants $\op c$ and $\op d$ are FG-sorted,
$\op f$ is a BG-sorted FG operator,
and $\op c \succ \op d \succ 1$.
The literal $\op f (\op c) \approx 1$ is maximal in $C$.
The clause set $N = \abstr(N)$ is not locally sufficient complete
(the BG-sorted FG-term $\op f (\op c)$ may be interpreted differently
from all BG terms in a $\Sigma$-model).
Moreover, it cannot be made locally sufficient complete
using the \IR{Define} rule,
since the definition $\op f (\op c) \approx \alpha_{\op f (\op c)}$
is larger \wrt\ the clause ordering than $C$, in violation of condition (v)
of \IR{Define}.

However,
at the beginning of a derivation, we may be a bit more permissive.
Let us
define the \emph{reckless \IR{Define}} inference rule in the same way
as the \IR{Define} rule except that the applicability condition (v) is dropped.
Clearly, in the example above,
the reckless \IR{Define} rule allows us to derive the locally
sufficiently complete clause  set
$N' = \{ \alpha_{\op f  (\op c)} \approx 1 \vee \op c \approx \op d,\ \op f (\op c) \approx \alpha_{\op f (\op c)}\}$ as
desired.
In fact, we can show that this is always possible
if $N$ is a finite clause set in which all BG-sorted FG terms are ground.

\begin{definition}[Pre-derivation]
Let $N_0$ be a weakly abstracted clause set. A
\emph{pre-derivation (of a clause set $N^\pre$)} is
a derivation of the form $N_0,N_1,$\,\linebreak[2]$\ldots,(N_k = N^\pre)$,
for some $k \ge 0$, with the
inference rule reckless \IR{Define} only, and such that each clause $C \in N^\pre$ either does not contain 
any BG-sorted FG operator or $C = \abstr(C')$ and $C'$ is a \emph{definition}, \ie, a ground positive unit 
clause of the form $\op f(t_1,\ldots,t_n) \approx t$ where $\op f$ is a BG-sorted
FG operator, $t_1,\ldots,t_n$ do not contain BG-sorted FG operators, and $t$ is a
background term. \qed
\end{definition}

\begin{lemma}
\label{lemma:prederiv-yields-lsc}
Let $N$ be a finite clause set in which all BG-sorted FG terms are ground.
Then there is a pre-derivation starting from $N_0 = \abstr(N)$
such that $N^\pre$ is locally sufficiently complete.
\end{lemma}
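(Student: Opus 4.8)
The plan is to construct a pre-derivation that exhaustively eliminates BG-sorted FG operators from the non-definition clauses by repeated applications of reckless \IR{Define}, working inside-out, and to argue that the process terminates and yields a locally sufficiently complete clause set. The key observation is that since all BG-sorted FG terms in $N$ are ground, every BG-sorted non-variable term with a toplevel FG operator that can arise is ground, so condition (iii) of \IR{Define} is automatically satisfied, and unabstraction of such a term is unique. First I would make precise what the \IR{Define} rule does to a clause containing a minimal BG-sorted FG subterm $t[\zeta_1,\dots,\zeta_n]$: it replaces that clause by a definition $\abstr(t[t_1,\dots,t_n] \approx \alpha_{t[t_1,\dots,t_n]})$ together with $\abstr(L[\alpha_{t[t_1,\dots,t_n]}] \vee D)$, where the latter has one fewer occurrence of a BG-sorted FG subterm (the abstraction variable that re-replaces $\alpha_{t[t_1,\dots,t_n]}$ is ordinary or abstraction depending on purity, but since $\alpha$ is a parameter it is pure, so it becomes an abstraction variable and does not reintroduce an FG subterm).

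\textbf{Termination and form of $N^\pre$.} Next I would define a termination measure: for each clause, count the number of occurrences of BG-sorted FG operators (i.e., FG operators $\op f$ whose result sort is a BG sort) that occur at positions that are not "already a definition." Applying reckless \IR{Define} to a minimal such subterm strictly decreases this count for the clause being processed: the minimal BG-sorted non-variable term $t[\zeta_1,\dots,\zeta_n]$ with a toplevel BG-sorted FG operator has its toplevel occurrence removed from $L[\cdot]\vee D$ (replaced by the parameter $\alpha$), and the newly created clauses are the definition (which by construction is of the permitted form) and $\abstr(L[\alpha]\vee D)$ which has one fewer such occurrence. Because the subterm chosen is \emph{minimal}, the arguments $t_1,\dots,t_n$ contain no BG-sorted FG operators, so the definition clause $\op f(t_1,\dots,t_n)\approx\alpha$ indeed satisfies the side condition in the definition of pre-derivation. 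Iterating over all clauses, the multiset of these counts decreases in the multiset ordering, so the process terminates at some $N^\pre = N_k$, and by construction every clause in $N^\pre$ either contains no BG-sorted FG operator or is the abstracted version of a definition.

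\textbf{Local sufficient completeness of $N^\pre$.} Finally I would verify that $N^\pre$ is locally sufficiently complete in the sense of Def.~\ref{def:local-sufficient-completeness}. Take any $I \in \BAlgs$, any $\Sigma$-model $J$ of $\sgi(N^\pre)\cup\EADA$, and any BG-sorted FG term $s$ occurring in $\smgi(N^\pre) \setminus \RClFlt(\smgi(N^\pre)\cup\EADA)$. Such an $s$ must occur in some non-redundant smooth ground instance $C\tau$ of a clause $C \in N^\pre$. If $C$ is the abstraction of a definition $\op f(t_1,\dots,t_n)\approx\alpha$, then $s$ is $\op f(t_1,\dots,t_n)$ (ground, since the $t_i$ are ground and contain no BG-sorted FG operators, hence no abstraction variables from the abstraction of $t$'s impure arguments — actually the $t_i$ may still be instantiated, but being BG-sorted-FG-operator-free their only BG-sorted subterms are pure BG terms in a smooth instance) and $J \models s \approx \alpha$ since $J$ satisfies the definition; so $t:=\alpha$ works. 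If $C$ contains no BG-sorted FG operator at all, then — here is the crux — $s$ cannot occur in $C\tau$ at a position coming from $C$ itself (there are no FG operators with BG result sort in $C$), nor from $\tau$: in a \emph{smooth} ground instance every BG-sorted subterm of every $\zeta\tau$ is a pure BG term, hence contains no FG operator, so $s$ would have to be a pure BG term and not an FG term — contradiction. Thus every BG-sorted FG term in a non-redundant smooth instance comes from a definition clause and is handled, so $N^\pre$ is locally sufficiently complete.

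\textbf{Main obstacle.} I expect the delicate point to be the second case above — showing that smoothness of the substitution genuinely forbids a BG-sorted FG term from "sneaking in" through an instantiated variable — together with making sure the abstraction applied to the definition and the residual clause in the \IR{Define} conclusion does not itself create a fresh BG-sorted FG subterm (it does not, because abstraction only introduces disequations $\zeta \not\approx q$ with $\zeta$ a variable, and re-inserting the parameter $\alpha$ contributes a BG constant, not an FG term). The bookkeeping to confirm that the chosen minimality makes the definition clauses syntactically admissible, and that the termination measure is genuinely well-founded under the abstraction steps interleaved with the \IR{Define} rewrites, is routine but needs to be stated carefully.
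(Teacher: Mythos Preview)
Your approach is essentially the same as the paper's: exhaust reckless \IR{Define} inside-out to obtain $N^\pre$ consisting of abstracted definitions plus clauses free of BG-sorted FG operators, then verify local sufficient completeness by a case split on the shape of $C$. Your termination argument and your analysis of the ``no BG-sorted FG operator'' case are in fact more detailed than the paper's, which simply asserts both.

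There is, however, a genuine gap in your treatment of the definition case. You write that if $C = \abstr(\op f(t_1,\ldots,t_n) \approx \alpha)$ then the BG-sorted FG term $s$ in the smooth instance $C\tau$ ``is $\op f(t_1,\ldots,t_n)$'' and ``$J \models s \approx \alpha$ since $J$ satisfies the definition''. Both steps are unjustified. After abstraction, $C$ has the form $\op f(u_1,\ldots,u_n) \approx \alpha \lor E$, where $E$ is a disjunction of BG disequations coming from abstracting out BG subterms of the $t_i$. In a smooth ground instance $C\tau$, the abstraction variables in the $u_i$ may be instantiated with \emph{arbitrary} pure BG terms, so $s = \op f(u_1\tau,\ldots,u_n\tau)$ need not equal $\op f(t_1,\ldots,t_n)$; and knowing that $J$ satisfies the ground clause $\op f(t_1,\ldots,t_n) \approx \alpha$ (or even all of $\sgi(C)$) tells you only that $J$ satisfies the \emph{disjunction} $\op f(u_1\tau,\ldots,u_n\tau) \approx \alpha \lor E\tau$, not its first disjunct.

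The missing ingredient is precisely the non-redundancy hypothesis you set up but never use. Since $E\tau$ is a disjunction of ground BG literals and $J \models \EADA$, any literal of $E\tau$ that is true in $J$ already follows from clauses in $\EADA$ that are smaller than it (Lemma~\ref{lemma:eada}), hence smaller than $C\tau$; this would make $C\tau \in \RClFlt(\smgi(N^\pre) \cup \EADA)$, contradicting the assumption. Therefore every literal of $E\tau$ is false in $J$, and since $J \models C\tau$, we get $J \models \op f(u_1\tau,\ldots,u_n\tau) \approx \alpha$ with $\alpha$ a ground BG term, as required. You flagged the smoothness argument in the other case as the ``delicate point'', but in fact this use of non-redundancy in the definition case is the step that actually needs care.
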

\begin{proof}
Since every term headed by a BG-sorted FG operator in
$\unabstr(N_0)$ is ground, we can incrementally eliminate
all occurrences of terms headed by BG-sorted FG operators from $N_0$,
except those in abstractions of definitions.
Let $N_0,N_1,\ldots,(N_k = N^\pre)$ be the sequence of sets of clauses
obtained in this way.
We will show that $N^\pre$ is locally sufficiently complete.

Let $I \in \BAlgs$ be a $\Sigma_\back$-interpretation,
let $J$ be a $\Sigma$-model of $\sgi(N^\pre) \cup \EADA$
and let $C\theta$ be a smooth ground instance
in $\smgi(N) \setminus \RClFlt(\smgi(N) \cup \EADA)$.
We have to show that
for every BG-sorted FG term $s$ occurring in $C\theta$
there is a ground BG term $t$ such that $J \models s \approx t$.

If $C$ does not contain any BG-sorted FG operator, then
there are no BG-sorted FG terms in $C\theta$,
so the property is vacuously true.
Otherwise $C = \abstr(C')$ and $C'$ is a definition
$\op f(t_1,\ldots,t_n) \approx t$ where $\op f$ is a BG-sorted
FG operator, $t_1,\ldots,t_n$ do not contain BG-sorted FG operators, and $t$ is a
background term.
In this case,
$C$ must have the form
$\op f(u_1,\ldots,u_n) \approx u \lor E$, such that
$E$ is a BG clause,
$u_1,\ldots,u_n$ do not contain BG-sorted FG operators, and $u$ is a
BG term.
The only BG-sorted FG term in the smooth instance $C\theta$ is
therefore
$\op f(u_1\theta,\ldots,u_n\theta)$.
If any literal of $E\theta$ were true in $J$, then it would
follow from $\EADA$, therefore
$C\theta \in \RClFlt(\smgi(N) \cup \EADA)$,
contradicting the assumption.
Hence $J \models \op f(u_1\theta,\ldots,u_n\theta) \approx u\theta$,
and since $u\theta$ is a ground BG term,
the requirement is satisfied.
\qed
\end{proof}
Lemma~\ref{lemma:prederiv-yields-lsc} will be needed to prove a completeness result for
the fragment defined in the next section.

\section{The Ground BG-sorted Term Fragment}
\label{sec:gbt}

According to Thm.~\ref{thm:hspbase-ref-compl-lsc},
the $\HSP_\Base$ calculus is refutationally complete
provided that the clause set
is locally sufficiently complete
and the BG specification is compact.
We have seen in the previous section that
the (reckless) \IR{Define} rule can help to
establish local sufficient completeness
by introducing new parameters.
In fact,
finite clause sets in which all BG-sorted FG terms are ground
can always be converted into
locally sufficiently complete clause sets (cf.\ Lemma~\ref{lemma:prederiv-yields-lsc}).
On the other hand,
as noticed in Sect.~\ref{sec:hierarchic-tp},
the introduction of
parameters can destroy the compactness of the BG specification.
In this and the following section,
we will identify two cases
where we can not only establish local sufficient completeness,
but where we can also guarantee that compactness poses no problems.
The \emph{ground BG-sorted term fragment (GBT fragment)} is one
such case:

%% In practice, it is interesting  to
%% identify conditions under which (local) sufficient completeness can be established by means of
%% \IR{Define} \emph{and} compactness poses no problems, so that a complete calculus results.
%% The \emph{ground BG-sorted term fragment (GBT fragment)} discussed below is one
%% such case.

% \COMMENT{UW}{05/09/2017}{
%   added ``finite''.
% }
\begin{definition}[GBT fragment]
A clause $C$ is a GBT clause if
all BG-sorted terms in $C$ are ground.
A finite clause set $N$ belongs to the GBT fragment if
all clauses in $N$ are GBT clauses.
\qed
\end{definition}

Clearly, by Lemma~\ref{lemma:prederiv-yields-lsc}
for every clause set $N$ that
belongs to the GBT fragment
there is a pre-derivation that converts $\abstr(N)$ into a
locally sufficiently complete clause set.
Moreover, pre-derivations also preserve the GBT property:

%% Below we propose a specific derivation strategy for clause sets $N$ belonging to the
%% GBT fragment. It consists in applying exhaustively \IR{Define} to $\abstr(N)$ until a
%% locally sufficiently complete clause set $N^\pre$ results. The derivation proper then
%% will start from $N^\pre$ and not use the \IR{Define} rule.  However, to obtain the
%% desired completeness result we need to slightly modify the (full) $\HSP$ calculus in
%% two ways.
%% 
%% First, we need to relax the applicability condition of the \IR{Define} rule. 
%% Define the \emph{reckless \IR{Define}} inference rule to be the same as the \IR{Define}
%% rule in Sect.~\ref{sec:define} except that the applicability condition (v) is dropped.
%% 
%% To illustrate the need for the reckless \IR{Define} rule consider the clause set $N =
%% \{ C \}$ where $C  = \op f (\op c) \approx 1 \vee P$,  $P$ is an atom and
%% the term ordering is such that $\op f (\op c) \approx 1$ is maximal in $C$. The clause set
%% $N$ is not locally sufficient complete (for the only relevant term $\op f (\op c)$)
%% and cannot be turned into one using the \IR{Define} rule of
%% Sect.~\ref{sec:define}. The problem is that the definition $\op f (\op c) \approx \alpha_{\op f
%%   (\op c)}$ is larger \wrt\ the term ordering than $C$, in violation of condition (v)
%% of \IR{Define}.
%% 
%% On the other hand, the reckless \IR{Define} rule allows us to derive the locally
%% sufficiently complete clause 
%% $N^\pre = \{ \alpha_{\op f  (\op c)} \approx 1 \vee P,\ \op f (\op c) \approx \alpha_{\op f  (\op c)}\}$ as
%% desired.

\begin{lemma}
\label{lemma:define-preserves-GBT}
 If $\unabstr(N)$ belongs
to the GBT fragment and $N'$ is obtained from $N$ by a reckless \IR{Define} inference, then
$\unabstr(N')$ also belongs to the GBT fragment. 
\end{lemma}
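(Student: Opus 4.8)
The plan is to look at $N'$ clause by clause and to check that each of its unabstracted versions is a GBT clause, i.e.\ has all BG-sorted terms ground. Write the premise as $N = N_0 \cup \{C\}$, with $C = L[t[\zeta_1,\dots,\zeta_n]] \vee D$ the clause to which the reckless \IR{Define} step is applied, and abbreviate $t' = t[t_1,\dots,t_n]$ and $\alpha = \alpha_{t[t_1,\dots,t_n]}$, so that $N' = N_0 \cup \abstr(\{\, t' \approx \alpha,\ L[\alpha] \vee D \,\})$. For the clauses in $N_0$ there is nothing to prove: their unabstracted versions lie in $\unabstr(N)$, which is a GBT clause set by hypothesis. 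For the new definition clause, the point is that $t' \approx \alpha$ is \emph{ground} (condition (iii) of \IR{Define}, plus $\alpha$ being a constant); hence weak abstraction introduces only disequations of the form $X \not\approx q$ with $X$ fresh and $q$ a ground BG term, and unabstraction removes all of them again, so $\unabstr(\abstr(t' \approx \alpha))$ is ground and therefore trivially GBT. (In fact it is again $t' \approx \alpha$.)

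The interesting case is $\unabstr(\abstr(L[\alpha] \vee D))$. First I would observe that applying $\abstr$ and then $\unabstr$ brings us back to an unabstracted version of $L[\alpha] \vee D$: weak abstraction only inserts disequations between a fresh variable and a subterm, and each of those is removed again by unabstraction, so $\unabstr(\abstr(L[\alpha] \vee D))$ and $\unabstr(L[\alpha] \vee D)$ agree up to the (immaterial) choice of elimination order. Then the main work is to show that $\unabstr(L[\alpha] \vee D)$ is a GBT clause. Here I would exploit that $L[\alpha] \vee D$ is syntactically $C$ with the displayed occurrence of $t[\zeta_1,\dots,\zeta_n]$ replaced by the constant $\alpha$. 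By condition (ii) of \IR{Define} there is a sequence of unabstraction steps on $C$ during which $\zeta_1,\dots,\zeta_n$ get instantiated to $t_1,\dots,t_n$, turning that subterm into $t'$ inside $\unabstr(C)$. The same sequence of steps can be performed on $L[\alpha] \vee D$ — the two clauses differ only in that the variable-free $\alpha$ sits where $t[\zeta_1,\dots,\zeta_n]$ sits — and it produces $\unabstr(C)$ with $t'$ replaced by $\alpha$, except that a disequation $\zeta_i \not\approx s_i$ whose variable $\zeta_i$ occurred in $C$ only inside $t[\zeta_1,\dots,\zeta_n]$ is now vacuous and simply dropped. Thus $\unabstr(L[\alpha]\vee D)$ arises from the GBT clause $\unabstr(C) \in \unabstr(N)$ by deleting some literals and by replacing an occurrence of the ground BG term $t'$ by the ground BG term $\alpha$; neither operation can create a non-ground BG-sorted term, so the result is again a GBT clause.

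Combining the three cases, every clause in $\unabstr(N')$ is a GBT clause, which is the claim. The only genuinely delicate point is the ``parallel unabstraction'' argument in the second paragraph: one has to make precise that the unabstraction of $C$ and of $L[\alpha] \vee D$ can be run in lock-step, so that the latter differs from the former only by the ground-for-ground subterm replacement and by discarding the disequations that have become vacuous. Once that is established, preservation of the GBT property is purely structural, and the routine remark $\unabstr(\abstr(\cdot)) \!=\! \unabstr(\cdot)$ causes no trouble.
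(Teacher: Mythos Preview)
Your proof is correct and follows essentially the same approach as the paper: check the three kinds of clauses in $N'$ separately, use that $\unabstr\circ\abstr$ collapses back to an unabstracted version, and observe that the resulting clauses differ from the GBT clause $\unabstr(C)$ only by a ground-for-ground subterm replacement (and possibly dropped literals). The paper's proof makes the same moves but glosses over your ``parallel unabstraction'' step with a ``trivially'' and a ``fairly easy to see'', whereas you spell out why $\unabstr(L[\alpha]\lor D)$ is related to $\unabstr(C)$ in the required way; so if anything your argument is slightly more explicit on the one point that deserves care.
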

\COMMENT{PB}{1/03/2019}{
Proof of this lemma retained only for long version. I think it is safe to omit it.
}
  \begin{proof}
    Suppose that $\unabstr(N)$ belongs to the GBT fragment and let\linebreak[4]
    $L[t[\zeta_1,\ldots,\zeta_n]] \vee C$ be a clause in $N$ to which reckless \IR{Define} is applied.
    The reckless \IR{Define} inference results in
    % constructs from its unabstracted version $L[t[t_1,\ldots,t_n]] \vee D$
    the two conclusion clauses
    $C_1 = \abstr(t[t_1,\ldots,t_n] \approx \alpha_{t[t_1,\ldots,t_n]})$ and
    $C_2 = \abstr(L[\alpha_{t[t_1,\ldots,t_n]}] \vee D)$.  It suffices to show that
    $\unabstr(C_1)$ and $\unabstr(C_2)$ are GBT clauses.

    Because $\unabstr(N)$ belongs to the GBT fragment, every BG-sorted subterm in the
    unabstracted version $L[t[t_1,\ldots,t_n]] \vee D$ of
    $L[t[\zeta_1,\ldots,\zeta_n]] \vee C$ is ground.  It follows trivially that both
    $t[t_1,\ldots,t_n] \approx \alpha_{t[t_1,\ldots,t_n]}$ and
    $L[\alpha_{t[t_1,\ldots,t_n]}] \vee D$ belong to the GBT fragment as well. It is fairly easy to see
    that for GBT clauses unabstraction reverses abstraction. Therefore
    $\unabstr(C_1) = t[t_1,\ldots,t_n] \approx \alpha_{t[t_1,\ldots,t_n]}$ and
    $\unabstr(C_2) = L[\alpha_{t[t_1,\ldots,t_n]}] \vee D$. As just shown, both clauses belong to the
    GBT fragment.  \qed
  \end{proof}
%% As said, we need to equip the $\HSP$ calculus with a specific strategy. 
%% \begin{definition}[Pre-derivation]
%% Let $N$ be a weakly abstracted clause set. A
%% \emph{pre-derivation (of a clause set $N^\pre$)} is
%% a derivation of the form $(N_0 = N),N_1,$\,\linebreak[2]$\ldots,(N_k = N^\pre)$,
%% for some $k \ge 0$, with the
%% inference rule reckless \IR{Define} only, and such that each clause $C \in N^\pre$ either does not contain 
%% any BG-sorted FG operator or $\unabstr(C)$ is a \emph{definition}, \ie, a ground positive unit 
%% clause of the form $\op f(t_1,\ldots,t_n) \approx t$ where $\op f$ is a BG-sorted
%% FG operator, $t_1,\ldots,t_n$ do not contain BG-sorted FG operators, and $t$ is a
%% background term. \qed
%% \end{definition}
%% 
%% If a clause set $N$ belongs to the GBT fragment, thanks to the effect of the reckless
%% \IR{Define} rule and Lemma~\ref{lemma:define-preserves-GBT}, all offending
%% occurrences of BG-sorted FG terms in $N_0 = \abstr(N)$ can stepwisely be eliminated
%% in a pre-derivation to obtain the clause set $N^\pre$. By not applying reckless
%% \IR{Define} to definitions themselves the termination of this process is obvious.

As we have seen, $N^\pre$ is locally sufficiently complete.
At this stage this suggests to exploit the completeness result for
locally sufficiently complete clause sets,
Thm.~\ref{thm:hspbase-ref-compl-lsc}. However, Thm.~\ref{thm:hspbase-ref-compl-lsc}
requires compact BG specifications, and the question is if we can avoid this. We can indeed
get a complete calculus under rather mild assumptions on the \IR{Simp} rule:
\begin{definition}[Suitable \IR{Simp} inference]
\label{def:suitable-simp}
  Let  $\succ_{\mathrm{fin}}$ be a strict partial term ordering such that for every ground 
BG term $s$ only finitely many ground BG terms $t$ with $s
\succ_{\mathrm{fin}} t$ exist.\footnote{%
  A KBO with appropriate weights can be used for $\succ_{\mathrm{fin}}$.}
We say that a \IR{Simp} inference with premise  $N \cup \{ C\}$ and conclusion $N \cup \{ D\}$
is \emph{suitable (for the GBT fragment)} if 
\begin{enumerate}[\rm(i)]
\item
  for every BG term $t$
  occurring in  $\unabstr(D)$ there is a BG term $s \in \unabstr(C)$ such
  that $s \succeq_{\mathrm{fin}} t$,  
\item
  every occurrence of a BG-sorted FG
  operator $\op f$ in $\unabstr(D)$ is of the form $\op f(t_1,\ldots,t_n) \approx t$
  where $t$ is a ground BG term, 
\item
  every BG term in $D$ is pure, and 
\item
  if every BG term in $\unabstr(C)$ is ground then
  every BG term in $\unabstr(D)$ is ground.
\end{enumerate}

We say the \IR{Simp} inference rule is \emph{suitable} if
every \IR{Simp} inference is.
\qed
\end{definition}
Expected simplification techniques like demodulation, subsumption deletion and
evaluation of BG subterms are all covered as suitable \IR{Simp} inferences. 
Also, evaluation of BG subterms is possible, because simplifications are not only decreasing
\wrt\ $\succ$ but \emph{additionally} also decreasing \wrt\ $\succeq_{\mathrm{fin}}$,
as expressed
in condition (i). Without it, \eg,\ the clause $\op P(1+1,0)$ would admit
infinitely many simplified versions $\op P(2,0)$, $\op P(2,0+0)$, $\op P(2,0+(0+0))$, etc.

The $\HSP_\Base$ inferences do in general not preserve the shape of the clauses in
$\unabstr(N^\pre)$; they do preserve a somewhat weaker property -- cleanness --  which is sufficient for
our purposes. 
\begin{definition}[Clean clause]
\label{def:clean-clause}
A weakly abstracted clause $C$ is \emph{clean} if 
\begin{enumerate}[\rm(i)]
\item
  every BG term in $C$ is pure, 
\item
  every BG term in $\unabstr(C)$ is ground, and 
\item
  every occurrence of a BG-sorted FG
  operator $\op f$ in $\unabstr(C)$ is in a positive literal of the form $\op f(t_1,\ldots,t_n) \approx t$
  where $t$ is a ground BG term. 
\end{enumerate}
\end{definition}
For example,  if $\op c$ is FG-sorted, then $\op P(\op f(\op c) + 1)$ is not clean, 
while $\op f(x) \approx 1+\alpha \vee \op P(x)$ is. A clause set is called \emph{clean} if every clause in $N$
is. Notice that $N^\pre$ is clean. 

\begin{lemma}
\label{lemma:HSP-cleanness-preserved}
Let $C_1,\ldots,C_n$ be  clean clauses.  Assume a $\HSP_\Base$
inference  with premises  $C_1,\ldots,C_n$ and conclusion $C$. 
Then the following holds:
\begin{enumerate}[\rm(1)]
\item
  $C$ is clean.
\item
  Every BG term occurring in
  $\unabstr(C)$ also occurs in some clause $\unabstr(C_1)$, \ldots, $\unabstr(C_n)$. 
\end{enumerate}
\end{lemma}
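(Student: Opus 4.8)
The plan is to argue by a case analysis over the five inference rules of $\HSP_\Base$, reducing everything to a single structural observation about what weak abstraction and simple mgus can do to clean clauses. First I would establish the key sublemma behind part~(2): if $\sigma$ is the pivotal (simple, and by Prop.~\ref{prop:restricted-substitutions} restricted) mgu of the inference, then for every premise $C_i$ and every variable $\zeta \in \vars(C_i)$, the term $\zeta\sigma$ either is a variable, or a domain element, or — crucially — contains no \emph{new} BG subterms beyond those already present in $\unabstr(C_1),\dots,\unabstr(C_n)$. For FG-sorted variables this is immediate (a restricted $\sigma$ binds abstraction variables only to abstraction variables or domain elements, and an FG-sorted ordinary variable gets bound only to subterms taken from the other premise, which are themselves parts of clean, hence unabstracted-to-ground, clauses). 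For BG-sorted variables: an abstraction variable is bound to an abstraction variable or a domain element (restrictedness), contributing no BG term that was not already a domain element; an ordinary BG-sorted variable in a clean clause, after unabstraction, is instantiated only to a ground BG term that already occurs in the unifying partner's unabstracted form, by cleanness conditions~(i)–(ii). Since the conclusion before the final $\abstr$ step is built from subterms of the (instantiated) premises, and $\abstr$ only replaces subterms by fresh variables plus disequation literals $\zeta\not\approx q$ whose $q$ is one of those very subterms, every BG term in $\unabstr(C)$ traces back to a BG term in some $\unabstr(C_i)$; and by the termination/commutation facts in Lemma~\ref{lemma:no-new-targets-by-abstraction} (parts (3)–(4)), unabstracting $\abstr(C_0\sigma)$ reproduces exactly $C_0\sigma$ up to the trivial literals, so no BG terms are lost or gained in the unabstraction round-trip. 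This gives part~(2).

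Next I would derive part~(1) from part~(2) together with the shape constraints. Cleanness condition~(i) — all BG terms in $C$ pure — follows because a simple mgu maps pure BG terms to pure BG terms and, after weak abstraction, every remaining impure BG term would be a target term (contradiction: $\abstr(C_0\sigma)$ has no target terms); more carefully, cleanness of the premises means no impure BG terms occur there at all (condition~(i) of Def.~\ref{def:clean-clause} is \emph{pure}, not merely weakly abstracted), so $\sigma$ has nothing impure to propagate, and the fresh abstraction variables introduced by $\abstr$ are pure by construction. Condition~(ii) — every BG term in $\unabstr(C)$ ground — is immediate from part~(2): every such BG term occurs in some $\unabstr(C_i)$ and is therefore ground by cleanness of $C_i$. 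Condition~(iii) — every occurrence of a BG-sorted FG operator $\op f$ in $\unabstr(C)$ sits in a positive literal $\op f(t_1,\dots,t_n)\approx t$ with $t$ a ground BG term — requires checking that the inference rules never create a BG-sorted FG operator occurrence in a forbidden position. This is where I would be most careful: a superposition step rewrites $s[u]$ to $s[r]$, and one must verify that if the rewritten literal is of the clean form $\op f(\dots)\approx t$ with $t$ ground BG, then $r\sigma$ (which, by $r\sigma\not\succeq l\sigma$ and the ordering conditions (ii)–(iii) on $\succ$, is $\preceq$ a ground BG term when $l\sigma$ is) stays a ground BG term, so the literal remains $\op f(\dots)\approx t'$ with $t'$ ground BG; and superposition \emph{into} the subterm $\op f(\dots)$ itself is impossible because $l\approx r$ would have to have $l$ unifying with a subterm of $\op f(t_1,\dots,t_n)$, but those $t_i$ contain no BG-sorted FG operators (clean condition~(iii)) and $l\sigma$ is not a pure BG term (rule condition~(iii)), forcing the overlap to be at a position that does not disturb the clean shape. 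Equality resolution and equality factoring only delete or copy literals and substitute via $\sigma$, preserving shape by the same propagation argument.

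The main obstacle I anticipate is the superposition case — specifically, ruling out an inference that superposes into the argument position $t_i$ of a clean occurrence $\op f(t_1,\dots,t_n)\approx t$ in a way that reintroduces a non-ground or impure BG term, or that superposes \emph{using} such a clean literal as the equation $l\approx r = \op f(\dots)\approx t$ and thereby plants $\op f(\dots)$ somewhere it shouldn't be in the conclusion. For the former, since $t_i$ is BG-sorted and contains no BG-sorted FG operator and (being inside a clean clause whose unabstraction is ground) unabstracts to a ground BG term, while the overlap literal $l\approx r$ has $l\sigma$ not pure BG, the mgu simply cannot identify $l$ with a non-variable subterm of $t_i$; the only remaining overlaps are into FG subterms or at variable positions, handled by the standard argument. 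For the latter, a clean $l\approx r$ with $l = \op f(\dots)$, $r = t$ ground BG, when used to rewrite $s[u]$ with $u\sigma = l\sigma$, produces $s[t]$: if the rewritten literal was itself of clean shape the result stays clean by the ordering argument above, and if $s[u]$ was an FG subterm then $s[t]$ is an FG subterm containing the ground BG term $t$, which is fine. Once these overlap patterns are enumerated and dispatched, parts~(1) and~(2) both follow, and I would close each of the five rule-cases with the same two-line refrain referencing part~(2) plus Lemma~\ref{lemma:no-new-targets-by-abstraction}.
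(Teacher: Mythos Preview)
Your plan is essentially correct and uses the same backbone as the paper: a case analysis over the inference rules, with Prop.~\ref{prop:restricted-substitutions} (restrictedness of the pivotal mgu) doing the real work. There is one key simplification in the paper's proof that you miss, however, and it streamlines everything considerably.

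Because the premises are clean, they contain \emph{no} ordinary BG-sorted variables at all (condition~(i) forces every BG term, including every BG-sorted variable, to be pure). Combined with restrictedness of $\sigma$, this means that applying $\sigma$ to the premises maps every abstraction variable to an abstraction variable or a domain element and leaves all BG terms pure. The paper then observes that the pre-abstraction conclusion $C_0\sigma$ therefore has \emph{no target terms at all}, so $\abstr(C_0\sigma) = C_0\sigma$ and $C = C'$. This eliminates the abstraction/unabstraction round-trip entirely: $\unabstr(C)$ is just $\unabstr(C_0\sigma)$, and every BG variable in $C_0\sigma$ is an abstraction variable inherited from one of the premises, hence already tied by an abstraction disequation in some $C_i$ to a term that unabstracts to a ground BG term in $\unabstr(C_i)$. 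Parts (1)(ii) and (2) then fall out in one line.

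Your route instead keeps the possibility of a nontrivial $\abstr$ step alive and appeals to Lemma~\ref{lemma:no-new-targets-by-abstraction} parts~(3)--(4) for the round-trip. That lemma is about ground instances of the abstraction process, not about the $\unabstr$ operator; the fact you actually need, namely $\unabstr(\abstr(D)) = \unabstr(D)$, is more elementary and does not require that citation. Similarly, your case analysis for condition~(iii) is more elaborate than necessary: the paper dispatches it in three short cases, noting that any superposition into a proper subterm of $\op f(s_1,\ldots,s_n)$ in a clean literal must be at an FG-sorted position (BG-sorted subterms there are pure BG, contradicting rule condition~(iii) on $l\sigma$), and superposition at the top position replaces $\op f(\ldots)$ by the ground BG term $r$ coming from the other premise's clean shape. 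Nothing in your plan is wrong, but recognising $C' = C$ up front would collapse most of it.
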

% Says nothing, really:
% \begin{shortversion}
% The proof of Lemma~\ref{lemma:HSP-cleanness-preserved} analyzes each inference rule of the
% $\HSP_\Base$ calculus.
% \end{shortversion}

  \begin{proof}
    Let $C'$ be the conclusion of the inference before weak abstraction, i.e.,
    $C = \abstr(C')$.  Regarding (1), property (i) of cleanness for $C$, we are given that
    all BG terms in all premise clauses are pure. Unification does not introduce general
    variables, and hence every BG term in $C'$ is pure.  Weak abstraction never introduces
    ordinary variables unless the given clause has ordinary variables. Thus all BG terms
    in $C$ are pure as well.

    The remaining properties (ii) and (iii) of cleanness of $C$ and property (2) can be
    seen from inspection of the $\HSP_\Base$ inference rules. We show it for superposition
    inferences, the other rules are similar. We distinguish three cases.

    Case 1: the inference is a superposition inference into a subterm $u$ of $s_i$ of a
    literal $s \approx t$ of the right premise, where
    $s = \op f(s_1,\ldots,s_i[u],\ldots,s_n)$ and $\op f$ is a BG-sorted foreground operator. With
    properties (i) and (iii) of cleanness holding for the premise clauses it follows that
    $u$ must be FG-sorted. In the conclusion of the inference $u$ is replaced by a
    FG-sorted term $r\sigma$ where $\sigma$ is the pivotal substitution.  By
    Prop.~\ref{prop:restricted-substitutions} the substitution $\sigma$ is restricted. This
    means that for every BG variable $X$ occurring in $C_1$ or $C_2$, if $X\sigma \neq X$ then
    $X\sigma$ is an (abstraction) variable occurring in $C_1$ or in $C_2$, or $X\sigma$ is a domain
    element.  Because neither variables nor domain elements are abstracted out, it follows
    that $C'$ does not require further abstraction, i.e.,\ $C = C'$.

    By item (ii) of cleanness, every BG term in $\unabstr(C_1)$ and in $\unabstr(C_2)$ is
    ground. In other words, every BG variable occurring in $C_1$ or $C_2$ is replaced by a
    ground term by unabstraction. This holds in particular for $X\sigma$ if $X\sigma$ is a variable,
    as opposed to a domain element. It follows that every BG term in
    $\unabstr(C') = \unabstr(C)$ is ground and that every BG term in $\unabstr(C)$ also
    occurs in $\unabstr(C_1)$ or $\unabstr(C_2)$, i.e., property (2) of the lemma
    claim. Finally observe that property (iii) of cleanness holds trivially for $C$, which
    completes the proof of property (1).

    Case 2: the inference is a superposition inference into any other FG-sorted subterm.
    This is proved in essentially the same way as in case 1.

    Case 3: the inference is a superposition inference into the top position of the left
    side of $\op f(s_1,\ldots s_n) \approx t$, where $\op f$ is a BG-sorted foreground operator. From
    item (iii) of cleanness it follows this term is replaced by a ground BG term $t'$,
    which does not need abstraction. The remaining argumentation is analogous to case 1
    and is omitted.  \qed
  \end{proof}
% {\color{green}\bf Proof refers to restrictedness
% and Prop.~\ref{prop:restricted-substitutions},
% which are missing in the short version! \\
% Peter 1/3/2019: Proof is only in long version now}

Thanks to conditions (ii)--(iv) in Def.~\ref{def:suitable-simp}, suitable \IR{Simp} inferences preserves cleanness:
\begin{lemma}
\label{lemma:Simp-cleanness-preserved}
Let $N \cup \{C\}$ be a set of clean clauses.  If $N \cup \{D\}$ is obtained 
from $N \cup \{C\}$  by a suitable \IR{Simp} inference then $D$ is clean.
\end{lemma}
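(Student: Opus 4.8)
The plan is to verify, one by one, the three defining conditions of cleanness (Def.~\ref{def:clean-clause}) for the conclusion clause $D$, reading them off from the corresponding requirements of a \emph{suitable} \IR{Simp} inference (Def.~\ref{def:suitable-simp}) together with the assumed cleanness of the premise $C$. First I would record that $D$ is weakly abstracted: this is condition (i) of the generic \IR{Simp} rule from Sect.~\ref{sec:refutational-completeness}, so ``clean'' is indeed applicable to $D$.

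Condition (i) of cleanness for $D$ -- every BG term in $D$ is pure -- requires nothing beyond Def.~\ref{def:suitable-simp}\,(iii), which states exactly this. For condition (ii) of cleanness -- every BG term in $\unabstr(D)$ is ground -- I would use the hypothesis that $C$ is clean: by Def.~\ref{def:clean-clause}\,(ii) every BG term in $\unabstr(C)$ is ground, and this is precisely the antecedent of the conditional in Def.~\ref{def:suitable-simp}\,(iv), whose conclusion then gives that every BG term in $\unabstr(D)$ is ground. For condition (iii) of cleanness -- every occurrence of a BG-sorted FG operator $\op{f}$ in $\unabstr(D)$ lies in a positive literal $\op{f}(t_1,\dots,t_n)\approx t$ with $t$ a ground BG term -- I would invoke Def.~\ref{def:suitable-simp}\,(ii), observing that an occurrence ``of the form $\op{f}(t_1,\dots,t_n)\approx t$'' is, because the equality (not disequality) symbol is used, an occurrence at the head of the left-hand side of a \emph{positive} literal, so the two phrasings coincide.

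I do not expect a genuine obstacle here: conditions (ii)--(iv) of Def.~\ref{def:suitable-simp} were introduced precisely so that this lemma holds, and condition (i) of that definition (the $\succeq_{\mathrm{fin}}$-decrease), which plays no role in this argument, serves the separate purpose of bounding the number of successive suitable \IR{Simp} steps. The only points that need a moment's care are (a) noting that $D$ is weakly abstracted, via condition (i) of the \IR{Simp} rule, so that cleanness is even defined for it; (b) feeding the groundness half of the cleanness of $C$ into the conditionally stated Def.~\ref{def:suitable-simp}\,(iv); and (c) the trivial remark identifying ``$\approx$'' in Def.~\ref{def:suitable-simp}\,(ii) with a positive literal in Def.~\ref{def:clean-clause}\,(iii). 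Everything else is a direct substitution of one definition into another, so the proof will be short.
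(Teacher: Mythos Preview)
Your proposal is correct and follows essentially the same approach as the paper: verify the three conditions of Def.~\ref{def:clean-clause} for $D$ by reading them off from Def.~\ref{def:suitable-simp}\,(iii), (iv) (together with cleanness of $C$), and (ii), respectively. The extra remarks you make---that $D$ is weakly abstracted by condition~(i) of \IR{Simp}, and that the $\approx$ in Def.~\ref{def:suitable-simp}\,(ii) yields a positive literal---are sound and slightly more explicit than the paper's own proof, which dispatches the lemma in three sentences.
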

\begin{proof}
Suppose $N \cup \{D\}$ is obtained from $N \cup \{C\}$  by a suitable \IR{Simp} inference.
  We need to show properties (i)--(iii) of cleanness for $D$. That every BG term in
  $D$ is pure follows from Def.~\ref{def:suitable-simp}-(iii).
That every BG term in $\unabstr(D)$ is ground follows from
Def.~\ref{def:suitable-simp}-(iv) and cleanness of $C$.
Finally, property (iii) follows from Def.~\ref{def:suitable-simp}-(ii).
\qed
\end{proof}

With the above lemmas we can prove our main result:
\begin{theorem}
\label{thm:hsp-gbt-complete}%
The $\HSP$ calculus with a suitable \IR{Simp} inference rule is dynamically refutationally
complete for the ground BG-sorted term fragment.  
More precisely, let $N$ be a finite set of GBT clauses  and $\calD = (N_i)_{i\ge 0}$ a fair
$\HSP$ derivation such that reckless \IR{Define} is is applied only in a
pre-derivation $(N_0 = \abstr(N)),\ldots,(N_k = N^\pre)$, for 
some $k \ge 0$. Then the limit of $\calD$ contains $\Box$ if and only if $N$ is
$\BAlgs$-unsatisfiable.
\end{theorem}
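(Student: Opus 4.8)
The plan is to reduce Theorem~\ref{thm:hsp-gbt-complete} to the local-sufficient-completeness completeness result Thm.~\ref{thm:hspbase-ref-compl-lsc-define}, but \emph{without} assuming compactness of $(\Sigma_\back,\BAlgs)$. The ``only if'' direction is immediate from Thm.~\ref{thm:hsp-sat-preserving}, since every $\HSP$ inference (including reckless \IR{Define} in the pre-derivation, which is satisfiability-preserving) keeps $\BAlgs$-(un)satisfiability. So assume $N$ is $\BAlgs$-unsatisfiable and, for contradiction, that the limit $N_\infty$ of $\calD$ does not contain $\Box$; by fairness $N_\infty$ is saturated \wrt~$\HSP$ and $\RedSgi$, and in particular \IR{Close} does not apply, so the set of $\Sigma_\back$-clauses in $\sgi(N_\infty)$ is $\BAlgs$-satisfiable. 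The point where compactness was used before was exactly here: to pass from ``every finite subset of the BG clauses in $\sgi(N_\infty)$ is $\BAlgs$-satisfiable'' to ``the whole set is''. The key observation is that cleanness bounds the BG vocabulary enough to make this step unnecessary.

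First I would establish the structural invariant. By Lemma~\ref{lemma:prederiv-yields-lsc} the pre-derivation produces $N^\pre = N_k = \abstr(N')$ with $N'$ locally sufficiently complete, and $N^\pre$ is clean (as remarked after Def.~\ref{def:clean-clause}); by Lemma~\ref{lemma:define-preserves-GBT} the GBT property is preserved through the pre-derivation. Then, from stage $k$ onward, only $\HSP_\Base$ inferences and suitable \IR{Simp} inferences are used, so by Lemma~\ref{lemma:HSP-cleanness-preserved} and Lemma~\ref{lemma:Simp-cleanness-preserved} \emph{every} clause in every $N_i$ with $i \ge k$, and hence every clause in $N_\infty$ and in $N^\infty = \bigcup_i N_i$, is clean. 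Moreover, both lemmas give the stronger fact that every BG term occurring in $\unabstr(C)$ for $C$ derived at stage $i{+}1$ already occurs in $\unabstr$ of some premise at stage $i$, and suitable \IR{Simp} only produces BG terms $\succeq_{\mathrm{fin}}$ some BG term of the simplified clause. Combining this with the defining property of $\succ_{\mathrm{fin}}$ (only finitely many ground BG terms below any given one) and the fact that $N^\pre$ is finite with only finitely many ground BG terms in $\unabstr(N^\pre)$, I would argue by induction on the derivation that only finitely many ground BG terms occur, over the whole union $N^\infty$, in the unabstracted clauses --- and since every BG term in a clean clause is pure, after abstraction only those finitely many ground BG terms plus finitely many abstraction variables and domain elements appear. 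Hence the ground BG clauses in $\sgi(N_\infty)$ are built from a \emph{finite} set of ground BG atoms.

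With that finiteness in hand, compactness is automatic: a set of clauses over finitely many ground atoms that is finitely $\BAlgs$-satisfiable is $\BAlgs$-satisfiable, because up to logical equivalence there are only finitely many such clauses, so the set \emph{is} (equivalent to) one of its finite subsets. Thus there is a term-generated $I \in \BAlgs$ satisfying all BG clauses in $\sgi(N_\infty)$. Now I can run the argument of Thm.~\ref{thm:hspbase-ref-compl-lsc-define} verbatim on the subderivation $(N_i)_{i \ge k}$: by Thm.~\ref{thm:na-saturation} the set $(N_\infty)_I$ is saturated \wrt~$\SSP$ and $\RedFlt$; since $\Box \notin (N_\infty)_I$, standard superposition gives a $\Sigma$-model $J$ of $(N_\infty)_I \supseteq \EADA$, hence of $\sgi(N_\infty) \cup \EADA$; using $N_k \models_\BAlgs N_0 \models_\BAlgs N$ and $N' \models_\BAlgs N_k$ (Lemma~\ref{lemma:modelssgi-implies-modelsbalgs}, Prop.~\ref{prop:wab-equivalence-transformation}, and redundancy of deleted clauses) together with Lemma~\ref{lem:hspbase-ref-compl-lsc-aux} applied to the locally sufficiently complete set $N'$, $\nojunk(J)$ is a model of $N'$ whose restriction to $\Sigma_\back$ lies in $\BAlgs$, contradicting $\BAlgs$-unsatisfiability of $N$.

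The main obstacle is the middle step: proving rigorously that only finitely many ground BG terms ever arise in $\unabstr(N^\infty)$. This needs the two strengthened conclusions of Lemmas~\ref{lemma:HSP-cleanness-preserved} and~\ref{lemma:Simp-cleanness-preserved} to be chained through an induction on the derivation, being careful that (a) $\HSP_\Base$ inferences introduce no new ground BG terms at all (part (2) of Lemma~\ref{lemma:HSP-cleanness-preserved}), and (b) suitable \IR{Simp} can introduce new ones but only $\succ_{\mathrm{fin}}$-below existing ones, so the set of ground BG terms occurring is contained in the finite downward $\succ_{\mathrm{fin}}$-closure of the finitely many ground BG terms of $\unabstr(N^\pre)$. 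Everything after that --- the cheap compactness argument on finitely many atoms and the reuse of the earlier completeness proof --- is routine.
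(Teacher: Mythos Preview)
Your approach is the same as the paper's: establish the cleanness invariant from $N^\pre$ onward via Lemmas~\ref{lemma:HSP-cleanness-preserved} and~\ref{lemma:Simp-cleanness-preserved}, use it together with the $\succ_{\mathrm{fin}}$-boundedness of suitable \IR{Simp} to show that only finitely many ground BG terms occur in $\unabstr(N^\infty)$, and then bypass compactness by a finiteness argument before replaying the proof of Thm.~\ref{thm:hspbase-ref-compl-lsc-define}.

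There is one slip in your finiteness step that you should fix. You conclude that ``the ground BG clauses in $\sgi(N_\infty)$ are built from a finite set of ground BG atoms.'' That is not what follows, and it is false in general: a clean BG clause such as $X < 5 \lor X \not\approx \alpha$ has the ground unabstraction $\alpha < 5$, but its simple ground instances $t < 5 \lor t \not\approx \alpha$ range over all ground pure BG terms $t$, hence involve infinitely many atoms. What the finiteness of ground BG terms in $\unabstr(N^\infty)$ actually gives you is that the BG clauses \emph{in $N_\infty$ itself} are, up to logical equivalence, only finitely many (each is equivalent to its ground unabstraction, and there are only finitely many ground clauses over finitely many terms). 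Since \IR{Close} takes its premises from $N_\infty$, not from $\sgi(N_\infty)$, non-applicability of \IR{Close} to any finite subset then means non-applicability to a complete set of representatives, so that finite set is $\BAlgs$-satisfiable by some $I \in \BAlgs$; and because $I$ satisfies each BG clause in $N_\infty$, it satisfies all of their simple ground instances. This is exactly how the paper argues, and with this correction your argument goes through.
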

Notice that Thm.~\ref{thm:hsp-gbt-complete} does not appeal to compactness of 
BG specifications.

\begin{proof}
Our goal is to apply Thm.~\ref{thm:hspbase-ref-compl-lsc-define} and its proof, in a
slightly modified way. For that, we first need to know 
that $N^\pre = \abstr(N')$ for some clause set $N'$ that is locally sufficiently
complete.

We are given that $N$ is a set of GBT clauses. Recall that weak abstraction (recursively)
extracts BG subterms by substituting fresh variables and adding
disequations. Unabstraction reverses this process (and possibly eliminates additional
disequations).  It follows that with $N$ being a set of GBT clauses, so is
$\unabstr(\abstr(N)) = \unabstr(N_0)$.
From Lemma~\ref{lemma:define-preserves-GBT} it follows that $\unabstr(N^\pre)$ is also a
GBT clause set. 

Now chose $N'$ as the clause set that is obtained
from $N^\pre$ by replacing every clause $C \in N^\pre$ such that  $\unabstr(C)$
is a definition by $\unabstr(C)$. 
By construction of definitions, unabstraction reverses weak abstraction of definitions.
It follows $N^\pre = \abstr(N')$. By definition of pre-derivations,  all BG-sorted FG
terms occurring in $\unabstr(N^\pre)$ 
occur in definitions. Hence, with $\unabstr(N^\pre)$ being a set
of GBT clauses so is $N'$. It follows easily that $N'$ is locally sufficiently
complete, as desired.

We cannot apply Thm.~\ref{thm:hspbase-ref-compl-lsc-define} directly now because
it requires compactness of the BG specification, which cannot be assumed. However, we can
use the following argumentation instead.

Let $N^\infty = \bigcup_{i\ge 0} N_i$ be the union of $\calD$. %the derivation.
We next show that
$\unabstr(N^\infty)$ contains only finitely many different BG terms and each of them is
ground. Recall that $\unabstr(N^\pre)$ is a GBT clause set, and so every BG term in 
$\unabstr(N^\pre)$ is ground. Because \IR{Define} is disabled in $\calD$, only
$\HSP_\Base$ and (suitable) \IR{Simp} inferences need to be analysed. Notice that
$N^\pre$ is clean and both the $\HSP_\Base$ and \IR{Simp} inferences preserve
cleanness, as per Lemmas~\ref{lemma:HSP-cleanness-preserved}-(1)
\looseness=-1
and~\ref{lemma:Simp-cleanness-preserved}, respectively.  

With respect to $\HSP_\Base$ inferences, together with
Def.~\ref{def:clean-clause}-(ii) it follows that every BG term $t$ in the
unabstracted version 
$\unabstr(C)$ of the inference conclusion $C$ is ground. Moreover,
$t$ also occurs in the unabstracted version of some premise clause by
Lemma~\ref{lemma:HSP-cleanness-preserved}-(2). In other words, $\HSP_\Base$ 
inferences do not grow the set of BG terms \wrt\ unabstracted premises and conclusions

With respect to \IR{Simp} inferences, 
$\unabstr(N^\pre)$ provide an upper bound \wrt\ the term ordering $\succ_{\mathrm{fin}}$ for all BG
terms generated in \IR{Simp} inferences. There can be only finitely many such terms,
and each of them is ground, which follows from
Def.~\ref{def:suitable-simp}-(i). 

Because every BG term occurring in $\unabstr(N^\infty)$ is ground, every BG clause in
$\unabstr(N^\infty)$ is a multiset of literals of the form $s \approx t$ or $s \not\approx t$, where $s$
and $t$ are ground BG terms. With only finitely many BG terms available, there are
only finitely many BG clauses in $\unabstr(N^\infty)$, modulo equivalence. Because
unabstraction is an equivalence transformation, there are only finitely many BG
clauses in $N^\infty$ as well, modulo equivalence. 

Let $N_\infty = \bigcup_{i\ge 0} \bigcap_{j\ge i} N_j$ be the limit clause set of the derivation $\calD$, which is saturated
\wrt\ the hierarchic superposition calculus and $\RedSgi$. Because $\calD$ is not a
refutation, it does not contain $\Box$. Consequently the \IR{Close} rule is not applicable to $N_\infty$.
The set $N^\infty$, and hence also $N_\infty \subseteq N^\infty$, contains only finitely many BG clauses,
modulo equivalence. This entails that the set of all $\Sigma_\back$-clauses in $\sgi(N_\infty)$
is satisfied by some term-generated $\Sigma_\back$-interpretation $I \in \BAlgs$.  Now, the
rest of the proof is literally the same as in the proof of
Thm.~\ref{thm:hspbase-ref-compl-lsc-define}. 
\qed
\end{proof}

Because unabstraction can also be applied to fully abstracted clauses, it is possible to
equip the hierarchic superposition calculus
of~\cite{Bachmair:Ganzinger:Waldmann:TheoremProvingHierarchical:AAECC:94}
with a correspondingly modified \IR{Define} rule and get Thm.~\ref{thm:hsp-gbt-complete}
in that context as well.

Kruglov and Weidenbach~\cite{Kruglov:Weidenbach:MACIS:2012}
have shown how to use hierarchic
superposition as a decision procedure for ground clause sets (and for Horn
clause sets with constants and variables as the only FG terms). Their method
preprocesses the given clause set by
``basification'', a process that removes BG-sorted FG terms similarly
to our reckless \IR{Define} rule. The resulting clauses then are fully abstracted and 
hierarchic superposition is applied. Some modifications of the inference rules
make sure derivations always terminate. Simplification is restricted to subsumption
deletion. The fragment of~\cite{Kruglov:Weidenbach:MACIS:2012} is a further restriction of the GBT fragment.
\looseness=-1
We expect we can get decidability results for that fragment with similar techniques.

\section{Linear Arithmetic}
\label{sec:linear-arithmetic}
%% Peter 15/03/19: introduced abbreviation "LIA"
For the special cases of linear integer arithmetic (LIA) and linear rational arithmetic
as BG specifications, the result of the previous section can be extended
significantly: In addition to ground BG-sorted terms, we can also permit
BG-sorted variables and, in certain positions, even variables with offsets.

Recall that we have assumed that equality is the only predicate
symbol in our language, so that a non-equational atom, say
$s < t$, is to be taken as a shorthand for the equation $(s < t) \approx \mathit{true}$.
We refer to the terms that result from this encoding of atoms
as \emph{atom terms};
other terms are called \emph{proper terms}.

\begin{theorem}
\label{thm:fo-vs-int}%
Let $N$ be a set of clauses over the signature of linear integer arithmetic
(with parameters $\alpha$, $\beta$, etc.),
such that
every proper term in these clauses is either
(i)~ground, or (ii)~a variable, or (iii)~a sum $\zeta + k$
of a variable $\zeta $ and a number $k \geq 0$ that occurs on the
right-hand side of a positive literal $s < \zeta + k$.
If the set of ground terms occurring in $N$ is finite,
then $N$ is satisfiable in LIA over $\mathbb{Z}$ if and only if
$N$ is satisfiable \wrt~the first-order theory of LIA.
\end{theorem}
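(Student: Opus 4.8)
The plan is as follows. The ``only if'' direction is immediate, since $\mathbb{Z}$, with any interpretation of the parameters, is a model of the first-order theory of LIA. For the converse I would proceed in two stages. First, a general reduction. The first-order theory of LIA is Presburger arithmetic, hence complete, so all its models are elementarily equivalent to $\mathbb{Z}$. Treating the parameters as fresh constants, $N$ is satisfiable \wrt~this theory iff the theory $\mathrm{Th}(\mathbb{Z})$ together with the universal closures of the clauses of $N$ has a model, and, by compactness of first-order logic, this holds iff $\mathrm{Th}(\mathbb{Z})$ together with any finite $N_0\subseteq N$ has a model. For a \emph{finite} $N_0$, ``$N_0$ has a model'' is captured by the single sentence $\exists\bar\alpha\,\forall\bar x\,\bigwedge_{C\in N_0}C$ over the parameter-free language; by completeness of Presburger arithmetic this sentence follows from $\mathrm{Th}(\mathbb{Z})$ iff it is true in $\mathbb{Z}$, so $\mathrm{Th}(\mathbb{Z})\cup N_0$ is satisfiable iff $N_0$ is satisfiable over $\mathbb{Z}$. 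Thus the theorem reduces to the following \emph{$\mathbb{Z}$-compactness} statement: if every finite subset of $N$ is satisfiable over $\mathbb{Z}$, then so is $N$ (the reverse implication being trivial).

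For the second stage, since only finitely many ground terms occur in $N$, only finitely many parameters $\alpha_1,\dots,\alpha_p$ and only finitely many offset constants $k$ occur. For $C\in N$ set $S_C=\{\,\bar b\in\mathbb{Z}^p \mid \mathbb{Z}\models\forall\bar x\,C[\bar b]\,\}$; then $N$ is satisfiable over $\mathbb{Z}$ iff $\bigcap_{C\in N}S_C\neq\emptyset$. In our fragment every variable occurs in a literal with coefficient $\pm1$, and the only admissible proper terms are ground terms, bare variables, and sums $\zeta+k$ on the right-hand side of a positive $<$-literal; therefore, eliminating the universally quantified variables of $C$ by integer Fourier--Motzkin elimination rewrites $S_C$ as a Boolean combination of atomic constraints $g(\bar b)-g'(\bar b)\bowtie c$, where $g,g'$ range over the finitely many ground terms of $N$, $\bowtie\in\{<,\leq,=,\neq\}$, and $c\in\mathbb{Z}$. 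Since no coefficient other than $\pm1$ is ever produced, no divisibility predicates appear.

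The key point is then that, even though across all clauses of $N$ infinitely many such atomic constraints (with unboundedly large $c$) may occur, only finitely many of them are logically essential for $\bigcap_{C\in N}S_C$. There are finitely many affine forms $g-g'$. For a fixed such form, a constraint stronger than those already forced by the finitely many ground terms and offsets of $N$ could only arise from an elimination chain through literals $\zeta_1<\zeta_2+k_1,\ \zeta_2<\zeta_3+k_2,\dots$ accumulating the offsets $k_i\geq0$; but because offset subterms are syntactically confined to lower-bound position, such accumulation \emph{weakens} the induced constraint on $g-g'$ rather than tightening it. Hence for each form and each relation symbol only finitely many atomic constraints matter, and $\bigcap_{C\in N}S_C=\bigcap_{C\in N^\ast}S_C$ for a finite $N^\ast\subseteq N$. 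By the first stage $N^\ast$ is satisfiable over $\mathbb{Z}$, so this intersection is nonempty, and any $\bar b$ in it yields a $\mathbb{Z}$-model of $N$.

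The main obstacle will be to make the third paragraph rigorous: one has to carry out the integer Fourier--Motzkin elimination carefully (tracking strict versus non-strict inequalities, the $\pm1$ shifts, and the disequations $\neq$, which make $S_C$ a genuine Boolean combination rather than a conjunction), and, above all, prove that in this syntactic fragment no single clause can impose an ``over-tight'' constraint on a difference of parameters, i.e.\ that arbitrarily long offset chains only ever produce constraints that are weaker than ones already present. The remaining steps -- the compactness/completeness reduction and the final appeal to the finite case -- are routine.
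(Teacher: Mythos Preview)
Your first-stage reduction (via completeness of Presburger arithmetic and compactness of first-order logic, reducing the theorem to $\mathbb{Z}$-compactness of $N$) is correct and is a pleasant alternative to the paper's bare assertion that ``equivalent to a finite set'' already suffices. The Fourier--Motzkin step is also essentially what the paper does; your observation that no divisibility atoms appear because all variable coefficients are $\pm 1$ is right.

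The genuine gap is in your third paragraph, and you have diagnosed it yourself but underestimated it. The statement ``only finitely many atomic constraints matter'' is at the wrong granularity: each $S_C$ is a \emph{disjunction} of such atoms, and $\bigcap_C S_C$ is an intersection of disjunctions, so knowing that individual atoms $g-g'<c$ come in a well-quasi-ordered family does not by itself let you discard all but finitely many \emph{clauses}. Concretely, your ``offset accumulation weakens'' observation says that a literal $s <_k t$ implies $s <_n t$ whenever $k\le n$; this is exactly the monotonicity the paper exploits, but it gives a partial order on index \emph{tuples} of clauses of a fixed shape, not on single atoms, and you still need a well-quasi-order argument to conclude that the set of minimal clauses is finite.

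The paper fills this gap in two moves you do not mention. First, it normalises each post-FM clause so that any pair of ground terms $s,t$ is related by at most one literal; this bounds clause length by ${m+1\choose 2}$ and leaves only finitely many clause ``shapes''. Second, within each shape it writes a clause as $C\lor\bigvee_{l}s_l<_{k_l}t_l$ and orders these by the componentwise order on the index tuples $(k_1,\dots,k_n)\in\mathbb{N}^n$; by Dickson's lemma there are finitely many minimal tuples, and minimal tuples give the strongest clauses, so each shape-class is equivalent to a finite subset. Neither the normalisation nor the appeal to Dickson's lemma appears in your sketch, and without them the argument does not close: ``the constants $c$ can only grow'' is necessary for Dickson to bite, but not sufficient on its own.
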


\begin{proof}
Let $N$ be a set of clauses with the required properties,
and let $T$ be the finite set of ground terms occurring in $N$.
We will show that $N$ is equivalent to some \emph{finite}
set of clauses over the signature of linear integer arithmetic,
which implies that it is satisfiable
in the integer numbers if and only
if it is satisfiable in the first-order theory of LIA.

In a first step,
we replace every negative ordering literal $\neg s < t$ or $\neg s \leq t$
by the equivalent positive ordering literal $t \leq s$ or $t < s$.
All literals of clauses in the resulting set $N_0$ have the form
$s \approx t$, $s \not\approx t$, $s < t$, $s \leq t$,
or $s < \zeta + k$, where $s$ and $t$ are either variables or elements of $T$
and $k \in \mathbb{N}$.
Note that the number of variables in clauses in $N_0$ may be unbounded.

In order to handle the various inequality literals in a more
uniform way, we introduce new binary relation symbols
$<_k$ (for $k \in \mathbb{N}$) that are defined by $a <_k b$ if and
only if $a < b + k$.
Observe that $s <_k t$ entails $s <_n t$ whenever $k \leq n$.
Obviously, we may replace
every literal $s < t$ by $s <_0 t$,
every literal $s \leq t$ by $s <_1 t$,
and every literal $s < \zeta + k$ by $s <_k \zeta $.
Let $N_1$ be the resulting clause set.

We will now transform $N_1$ into an equivalent
set $N_2$ of ground clauses.
We start by eliminating all equality literals
that contain variables
by exhaustively applying the following transformation rules:
\[\begin{array}{@{}l@{\quad}c@{\quad}l@{\qquad}l@{}}
  N \cup \{\,C \lor \zeta \not\approx \zeta \,\}
& \to
& N \cup \{\,C\,\}
\\[0.5ex]
  N \cup \{\,C \lor \zeta \not\approx t\,\}
& \to
& N \cup \{\,C[\zeta \mapsto t]\,\}
& \text{if~} t \neq \zeta 
\\[0.5ex]
  N \cup \{\,C \lor \zeta \approx \zeta \,\}
& \to
& N
\\[0.5ex]
  N \cup \{\,C \lor \zeta \approx t\,\}
& \to
& N \cup \{\,C \lor \zeta <_1 t,\, C \lor t <_1 \zeta \,\}
& \text{if~} t \neq \zeta 
\end{array}\]
All variables in inequality literals
are then eliminated in a Fourier-Motzkin-like
manner by exhaustively applying the transformation rule
\[\begin{array}{@{}l@{\quad}c@{\quad}l@{}}
  N \cup \{\,C \lor \bigvee\limits_{i \in I} \zeta <_{k_i} s_i
               \lor \bigvee\limits_{j \in J} t_j <_{n_j} \zeta \,\}
& \to
& N \cup \{\,C \lor \bigvee\limits_{i \in I}\bigvee\limits_{j \in J} t_j <_{k_i+n_j} s_i\,\}
\end{array}\]
where $\zeta $ does not occur in $C$ and
one of the index sets $I$ and $J$ may be empty.

The clauses in $N_2$ are constructed over the finite set $T$ of
proper ground terms,
but the length of the clauses in $N_2$ is potentially unbounded.
In the next step,
we will transform the clauses in such a way that any pair of terms
$s,t$ from $T$ is related by at most one literal in any clause:
We apply one of the following transformation rules
as long as two terms $s$ and $t$ occur in more than one literal:
\[\begin{array}{@{}l@{\quad}c@{\quad}l@{\qquad}l@{}}
  N \cup \{\,C \lor s <_k t \lor s \approx t\,\}
& \to
& N \cup \{\,C \lor s <_k t\,\}
& \text{if~} k \geq 1
\\[0.5ex]
  N \cup \{\,C \lor s <_0 t \lor s \approx t\,\}
& \to
& N \cup \{\,C \lor s <_1 t\,\}
\\[0.5ex]
  N \cup \{\,C \lor s <_k t \lor s \not\approx t\,\}
& \to
& N
& \text{if~} k \geq 1
\\[0.5ex]
  N \cup \{\,C \lor s <_0 t \lor s \not\approx t\,\}
& \to
& N \cup \{\,C \lor s \not\approx t\,\}
\\[0.5ex]
% \end{array}\]
% \[\begin{array}{@{}l@{\quad}c@{\quad}l@{\qquad}l@{}}
  N \cup \{\,C \lor s <_k t \lor s <_n t\,\}
& \to
& N \cup \{\,C \lor s <_n t\,\}
& \text{if~} k \leq n
\\[0.5ex]
  N \cup \{\,C \lor s <_k t \lor t <_n s\,\}
& \to
& N
& \text{if~} k + n \geq 1
\\[0.5ex]
  N \cup \{\,C \lor s <_0 t \lor t <_0 s\,\}
& \to
& N \cup \{\,C \lor s \not\approx t\,\}
\\[0.5ex]
  N \cup \{\,C \lor L \lor L\,\}
& \to
& N \cup \{\,C \lor L\,\}
& \text{for any literal~} L
\\[0.5ex]
  N \cup \{\,C \lor s \approx t \lor s \not\approx t\,\}
& \to
& N
\end{array}\]
The length of the clauses in the resulting set $N_3$ is now bounded
by $\frac{1}{2}m(m+1)$, where $m$ is the cardinality of $T$.
Still, due to the indices of the relation symbols $<_k$,
$N_3$ may be infinite.
We introduce an equivalence relation $\sim$ on clauses in $N_3$
as follows:
Define $C \sim C'$ if for all $s,t \in T$
(i) $s \approx t \in C$ if and only if $s \approx t \in C'$,
(ii) $s \not\approx t \in C$ if and only if $s \not\approx t \in C'$,
and
(iii) $s <_k t \in C$ for some $k$
if and only if $s <_n t \in C'$ for some $n$.
This relation splits $N_3$ into at most
$(\frac{1}{2}m(m+1))^5$ equivalence classes.\footnote{%
  Any pair of terms $s,t$ is related in all clauses of an equivalence class
  by either a literal $s \approx t$, or $s \not\approx t$,
  or $s <_n t$ for some $n$,
  or $t <_n s$ for some $n$, or no literal at all,
  so there are five possibilities per unordered pair of terms.}

We will now show that each equivalence class 
is logically equivalent to a finite subset of itself.
Let $M$ be some equivalence class.
Since any two clauses from $M$
differ at most in the indices of their $<_k$-literals,
we can write every clause $C_i \in M$
in the form
\[\textstyle{
C_i ~~=~~ C \lor \bigvee\limits_{1 \leq l \leq n} s_l <_{k_{il}} t_l
}\]
where $C$ and the $s_l$ and $t_l$ are the same for all clauses in $M$.
As we have mentioned above, 
$s_l <_{k_{il}} t_l$ entails $s_l <_{k_{jl}} t_l$ whenever $k_{il} \leq k_{jl}$;
so a clause $C_i \in M$ entails $C_j \in M$
whenever the $n$-tuple
$(k_{i1},\dots,k_{in})$
is pointwise smaller or equal to the $n$-tuple
$(k_{j1},\dots,k_{jn})$
(that is, $k_{il} \leq k_{jl}$ for all $1 \leq l \leq n$).

Let $Q$ be the set of $n$-tuples of natural numbers corresponding
to the clauses in $M$.
By Dickson's lemma~\cite{Dickson1913},
for every set of tuples in $\mathbb{N}^n$
the subset of minimal tuples (\wrt~the pointwise
extension of $\leq$ to tuples) is finite.
Let $Q'$ be the subset of minimal tuples in $Q$,
and let $M'$ be the set of clauses in $M$ that correspond
to the tuples in $Q'$.
Since for every tuple in $Q \setminus Q'$ there is a smaller tuple in $Q'$,
we know that every clause in $M \setminus M'$ is entailed by some
clause in $M'$.
So the equivalence class $M$ is logically equivalent to its finite subset $M'$.
Since the number of equivalence classes is also finite
and all transformation rules are sound,
this proves our claim.
\qed
\end{proof}

In order to apply this theorem to hierarchic superposition,
we must again impose some restrictions on the calculus.
Most important, we have to change the definition of weak abstraction
slightly:
We drop the requirement that target terms
are not domain elements
from Def.~\ref{def:weak-abs},
\ie, we abstract out a non-variable BG term $q$
occurring in a clause  $C[f(s_1,\dots,q,\dots,s_n)]$,
where $f$ is a FG operator or at least one of the $s_i$
is a FG or impure BG term,
even if $q$ is a domain element.
As we mentioned already in Sect.~\ref{sec:weak-abstraction},
all results obtained so far hold also for the
modified definition of weak abstraction.
In addition, we must again restrict to \emph{suitable} \IR{Simp} inferences (Def.~\ref{def:suitable-simp}).
% \begin{definition}[Suitable \IR{Simp} inference]
% \label{def:suitable-simp}
%   Let  $\succ_{\mathrm{fin}}$ be a strict partial term ordering such that for every ground 
% BG term $s$ only finitely many ground BG terms $t$ with $s
% \succ_{\mathrm{fin}} t$ exist.
% We say that a \IR{Simp} inference with premise  $N \cup \{ C\}$ and conclusion $N \cup \{ D\}$
% is \emph{suitable (for linear arithmetic)} if 
% \begin{enumerate}[\rm(i)]
% \item
%   for every BG term $t$
%   occurring in  $\unabstr(D)$ there is a BG term $s \in \unabstr(C)$ such
%   that $s \succeq_{\mathrm{fin}} t$,  
% \item
%   every occurrence of a BG-sorted FG
%   operator $\op f$ in $\unabstr(D)$ is of the form $\op f(t_1,\ldots,t_n) \approx t$
%   where $t$ is a ground BG term, 
% \item
%   every BG term in $D$ is pure, and 
% \item
%   if every BG term in $\unabstr(C)$ is ground then
%   every BG term in $\unabstr(D)$ is ground.
% \end{enumerate}
% 
% We say the \IR{Simp} inference rule is \emph{suitable} if
% every \IR{Simp} inference is.
% \qed
% \end{definition}
% \looseness=-1
With these restrictions, we can prove our main result:

\begin{theorem}
The hierarchic superposition calculus is dynamically refutationally complete
\wrt~LIA over $\mathbb{Z}$
for finite sets of $\Sigma$-clauses
in which every proper BG-sorted term is either
(i)~ground, or (ii)~a variable, or (iii)~a sum $\zeta + k$
of a variable $\zeta $ and a number $k \geq 0$ that occurs on the
right-hand side of a positive literal $s < \zeta + k$.
% \footnote{%
% Note that in the counterexample above $\zeta + 1$ occurs
% on the \emph{left-hand} side of the literal $\zeta + 1 < y$.}
\end{theorem}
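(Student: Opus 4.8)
The plan is to run the refutational-completeness argument of Sect.~\ref{sec:gbt} but to replace the appeal to compactness of the background specification by the reduction provided by Theorem~\ref{thm:fo-vs-int}. Since every proper BG-sorted term of $N$ is ground, a variable, or a sum $\zeta + k$ (and the latter has a background operator on top), every BG-sorted FG term of $N$ is ground; hence, exactly as in the proof of Theorem~\ref{thm:hsp-gbt-complete}, the initial pre-derivation reaches a clause set $N^\pre = \abstr(N')$ with $N'$ locally sufficiently complete (Lemma~\ref{lemma:prederiv-yields-lsc}), and the remaining derivation uses only $\HSP_\Base$ and suitable \IR{Simp} inferences.

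Next I would use an invariant in the spirit of cleanness (Def.~\ref{def:clean-clause}): call a weakly abstracted clause \emph{LIA-clean} if $\unabstr(C)$ belongs to the fragment of Theorem~\ref{thm:fo-vs-int} -- every proper BG-sorted term of $\unabstr(C)$ is ground, a variable, or a sum $\zeta + k$ occurring on the right-hand side of a positive literal $s < \zeta + k$; every BG-sorted FG operator occurs in $\unabstr(C)$ only in a ground positive literal $\op f(\dots) \approx t$ with $t$ a ground BG term; and every BG term of $C$ is pure. I would then establish: (a) $N^\pre$ is LIA-clean -- the pre-derivation removes all BG-sorted FG terms except inside abstractions of definitions, and the modified weak abstraction (which now also pulls domain elements out of FG-operator arguments, e.g.\ the $2$ in $\op f(2)$) together with $\unabstr$ stays inside the fragment; (b) $\HSP_\Base$ inferences preserve LIA-cleanness and, as in Lemma~\ref{lemma:HSP-cleanness-preserved}-(2), add no BG term to $\unabstr$ of the conclusion that does not already occur in $\unabstr$ of a premise -- here one uses Prop.~\ref{prop:restricted-substitutions}, that one never superposes into a variable position, and condition (iii) of the superposition rules (which blocks inferences with a pure-BG pivotal term, in particular any rewriting of a sum $\zeta + k$); (c) suitable \IR{Simp} inferences preserve LIA-cleanness and, by condition (i) of Def.~\ref{def:suitable-simp}, produce only BG terms that are $\succeq_{\mathrm{fin}}$-below a BG term of the premise. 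Combining (a)--(c): every clause in the union $N^\infty$ (hence in the limit $N_\infty$) is LIA-clean, and every ground BG term in $\unabstr(N^\infty)$ is $\succeq_{\mathrm{fin}}$-below one of the finitely many ground BG terms of $\unabstr(N^\pre)$; since $\succ_{\mathrm{fin}}$ admits only finitely many terms below any fixed term, only finitely many ground BG terms occur in $\unabstr(N^\infty)$.

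Now suppose, towards a contradiction with $N \models_\BAlgs \Box$, that $\Box \notin N_\infty$. By fairness $N_\infty$ is saturated and \IR{Close} is not applicable to it. Let $B$ be the set of BG clauses of $N_\infty$: no finite subset of $B$ is $\BAlgs$-unsatisfiable, i.e.\ unsatisfiable over $\mathbb{Z}$, so every finite subset of $B$ is satisfiable with respect to the first-order theory of LIA; that theory is compact, so $B$ is satisfiable with respect to first-order LIA. By LIA-cleanness $\unabstr(B)$ lies in the fragment of Theorem~\ref{thm:fo-vs-int} and, by the previous paragraph, uses only finitely many ground terms, and $B$ is equivalent to $\unabstr(B)$; so Theorem~\ref{thm:fo-vs-int} yields a term-generated $I \in \BAlgs$ with $I \models B$, and hence $I$ satisfies all BG clauses in $\sgi(N_\infty)$ (which are simple ground instances of BG clauses in $N_\infty$). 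From here the argument copies the proof of Theorem~\ref{thm:hspbase-ref-compl-lsc-define}: by Theorem~\ref{thm:na-saturation}, $(N_\infty)_I$ is saturated with respect to standard superposition and does not contain $\Box$, hence has a model $J$; then $J \models \sgi(N_\infty) \cup \EADA$, and since every clause of $N_0$ is in $N_\infty$ or redundant with respect to it and $\sgi(\abstr(N))$ is equivalent to $\sgi(N)$ (Prop.~\ref{prop:wab-equivalence-transformation}), $J \models \sgi(N)$; setting $J'' = \nojunk(J)$, Lemma~\ref{lem:vsterms-keep-int} makes $J''\rstr{\Sigma_\back}$ a term-generated $\Sigma_\back$-interpretation satisfying $\EADA$, hence isomorphic to $I$ and in $\BAlgs$, while Lemma~\ref{lem:hspbase-ref-compl-lsc-aux} (using local sufficient completeness of $N'$) gives $J'' \models N'$; since $N' \models_\BAlgs N$, $J''$ is a $\BAlgs$-model of $N$, contradicting $N \models_\BAlgs \Box$.

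The main obstacle is (b)/(c): a careful case analysis, modelled on the proofs of Lemmas~\ref{lemma:HSP-cleanness-preserved} and~\ref{lemma:Simp-cleanness-preserved}, showing that after the abstraction built into the conclusions no inference can create a proper BG term outside the three admissible shapes -- in particular that nested sums, sums of two variables, and sums $\zeta + k$ outside the right-hand side of a positive $<$-literal never arise -- and that the ground BG terms remain finitely many. This is precisely what forces the two extra hypotheses of the theorem: the modified definition of weak abstraction and the restriction to suitable \IR{Simp} inferences.
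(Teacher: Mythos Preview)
Your proposal is correct and follows essentially the same approach as the paper's (very terse) proof: the paper too reduces to Thm.~\ref{thm:fo-vs-int} by asserting that the unabstracted BG clauses generated during the run lie in the required fragment with a finite set of ground terms, and then invokes dynamic refutational completeness \wrt~the (compact) first-order theory of LIA to conclude. Your explicit LIA-clean invariant, its preservation under $\HSP_\Base$ and suitable \IR{Simp} inferences, the $\succ_{\mathrm{fin}}$-bound on ground BG terms, and the unfolding of the model construction are precisely the details the paper leaves implicit.
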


\begin{proof}
Let $N$ be a finite set of $\Sigma$-clauses with the required properties.
By Lemma~\ref{lemma:prederiv-yields-lsc},
a pre-derivation starting from $N_0 = \abstr(N)$
yields a locally sufficiently complete
finite set $N_0$ of abstracted clauses.

Now we run the hierarchic superposition calculus on $N_0$
(with the same restrictions on simplifications as
in Sect.~\ref{sec:gbt}).
Let $N_1$ be the
(possibly infinite) set of BG clauses generated during the run.
By unabstracting these clauses, we obtain an equivalent set $N_2$ of
clauses that satisfy the conditions of Thm.~\ref{thm:fo-vs-int},
so $N_2$
is satisfiable in LIA over $\mathbb{Z}$ if and only if
$N$ is satisfiable \wrt~the first-order theory of LIA.
Since the hierarchic superposition calculus is dynamically refutationally complete
\wrt~the first-order theory of LIA, the result follows.
\qed
\end{proof}

Analogous results hold for linear rational arithmetic.
Let $n$ be the least common divisor of all numerical
constants in the original clause set; then we define
$a <_{2i} b$ by $a < b + \frac{i}{n}$
and
$a <_{2i+1} b$ by $a \leq b + \frac{i}{n}$
for $i \in \mathbb{N}$
and express every inequation literal in terms of $<_k$.
The Fourier-Motzkin transformation rule is replaced by
\[\begin{array}{@{}l@{\quad}c@{\quad}l@{}}
  N \cup \{\,C \lor \bigvee\limits_{i \in I} \zeta <_{k_i} s_i
               \lor \bigvee\limits_{j \in J} t_j <_{n_j} \zeta \,\}
& \to
& N \cup \{\,C \lor \bigvee\limits_{i \in I}\bigvee\limits_{j \in J} t_j <_{k_i \bullet n_j} s_i\,\}
\end{array}\]
where $\zeta $ does not occur in $C$,
one of the index sets $I$ and $J$ may be empty,
and $k \bullet n$ is defined as $k + n - 1$ if both $k$ and $n$ are odd,
and $k + n$ otherwise.
The rest of the proof proceeds in the same way as before.

The restriction to sums $\zeta + k$ occurring on \emph{right-hand} sides
of positive literals $s < \zeta + k$ is crucial.
Consider the following example:
Suppose that we have a unary FG predicate symbol $\op P$
and the $\Sigma$-clause set $N$
\begin{align*}
& \op P(0), \\[-0.4ex]
& \neg \op P(X) \:\lor\: X < \alpha, \\[-0.4ex]
& \neg \op P(X) \:\lor\: X + 1 < Y \:\lor\: \op P(Y)
\end{align*}
over linear integer arithmetic.
Note that $X + 1$ occurs on the left-hand side of the literal.

Starting with these clauses, hierarchic superposition produces
the following set $N_1$ of BG clauses
\begin{align*}
& 0 < \alpha, \\[-0.4ex]
& 0 + 1 < Y_1 \:\lor\: Y_1 < \alpha, \\[-0.4ex]
& 0 + 1 < Y_1 \:\lor\: Y_1 + 1 < Y_2 \:\lor\: Y_2 < \alpha, \\[-0.4ex]
& 0 + 1 < Y_1 \:\lor\: Y_1 + 1 < Y_2 \:\lor\: Y_2 + 1 < Y_3 \:\lor\: Y_3 < \alpha, \\[-0.7ex]
& \ldots
\end{align*}
After removing the universally quantified variables by quantifier
elimination, $N_1$ turns out to be equivalent
to $\{{0 < \alpha,}$ ${1 < \alpha,}$ ${2 < \alpha,}$ ${3 < \alpha,}$ $\ldots\}$.
Each finite subset of $N_1$
is satisfiable in $\mathbb{Z}$, and hence
in the first-order theory of LIA.
By compactness of first-order logic, $N_1$ itself is also
satisfiable in the first-order theory of LIA,
for instance in the non-standard model $\mathbb{Q} \times \mathbb{Z}$
with $0 := (0,0)$, $1 := (0,1)$, $\alpha := (1,0)$,
${(x,y) + (x',y') := (x+x',y+y')}$,
and a lexicographic ordering.
On the other hand
$N_1$, and hence $N$, is clearly unsatisfiable over $\mathbb{Z}$,
but hierarchic superposition is unable to detect this.

\section{Experiments}
\label{sec:experiments}
We implemented the $\HSP$ calculus in the theorem prover
\beagle.\footnote{\beagle is available at
  \url{https://bitbucket.org/peba123/beagle}.
  The distribution includes the (Scala) source code and a ready-to-run Java jar-file.}
\beagle is a testbed for rapidly trying out theoretical ideas but it is not
a high-performance prover (in particular it lacks indexing of any form).
The perhaps most significant calculus feature not yet
implemented is the improvement for linear 
integer and rational arithmetic of Sect.~\ref{sec:linear-arithmetic}.

\beagle's proof procedure and background reasoning, in particular for linear integer
arithmetic, and experimental results have been described
in~\cite{Baumgartner:Bax:Waldmann:Beagle:CADE:2015}. Here we only provide an
update on the experiments and report on complementary aspects not discussed
in~\cite{Baumgartner:Bax:Waldmann:Beagle:CADE:2015}. More specifically, our new
experiments are based on a more recent version of the TPTP problem
library~\cite{Sutcliffe:TPTP:2017} (by four years), and we discuss in more detail the
impact of the various calculus variants introduced in this paper. We also compare
\beagle's performance to that of other provers.
% on the bases of the problems selected in
% the recent 2018 edition of the annual CASC automated theorem prover competition.

We tested \beagle on the first-order problems from the TPTP
library, version 7.2.0,\footnote{\url{http://tptp.org}} that involve some form of arithmetic, including non-linear,
rational and real arithmetics.  The problems in the TPTP are organized in categories, and
the results for some of them are quickly dealt with: none of the HWV-problems in the TPTP library
was solvable within the time limit and we ignore these below. We ignore also
  the SYN category as its sole problem is merely a syntax test, and the GEG category as
  all problems are zero-rated and easily solved by \beagle.

The experiments were run on a MacBook Pro with a 2.3 GHz Intel i7 processor and 16 GB
main memory.  The CPU time limit was 120 seconds (a higher time limit does not help
much solving more problems). Tables~\ref{tab:results-theorem} and \ref{tab:results-rating}
summarize the results for the problems with a known ``theorem'' or ``unsatisfiable''
status with non-zero rating.
\beagle can also solve some satisfiable problems,
  but most of them are rather easy and can be solved by the BG solver
  alone. Unfortunately, the TPTP does not contain reasonably difficult satisfiable
  problems from the GBT-fragment, which would be interesting for exploiting the
  completeness result of Sect.~\ref{sec:gbt}.

\newcommand{\mlbox}[2][t]{
{\renewcommand{\arraystretch}{1.0}%
\begin{tabular}[#1]{@{}r@{}}#2\end{tabular}}}

\begin{table}[htb]
\caption{Number of TPTP version 7.2.0 problems solved, of all non-zero rated ``theorem'' or ``unsatisfiable''
  problems involving any form of arithmetic. The flag settings giving the best result are
  in typeset in bold. The CPU time limit was 120 seconds. The column ``Any'' is the number of problems solved in the union of the
  four setting to its left. For the ``Auto'' column see the description of auto-mode in the
  main text further below. For auto-mode only, the CPU time limit was increased to 300 seconds.
}
\label{tab:results-theorem}
  \centering
  \begin{tabular}{lr@{\qquad}rrrrrr}
 & & \multicolumn{2}{c}{Ordinary variables} & \multicolumn{2}{c}{Abstraction variables} \\[1ex]
    Category   & \#Problems  & \mlbox[b]{BG simp\\  cautious} 
 & \mlbox[b]{BG simp \\  aggressive} & \mlbox[b]{BG simp \\ cautious} & \mlbox[b]{BG
                                                                        simp \\
    aggressive} & \quad Any & \quad Auto \\[1ex]\hline
\rule{0pt}{3ex}%
%% Category:
ARI & 
%% Nr problems:
444 &  
%% Ordinary variables, BG simp cautious
356  &  
%% Ordinary variables, BG simp aggressive
\bfseries 357 &
%% Abstraction variables, BG simp cautious
353  &  
%% Abstraction variables, BG simp aggressive
355  &
%% Any
362  &
%% Auto
355 \\
%% Category:
DAT & 
%% Nr problems:
23 &  
%% Ordinary variables, BG simp cautious
9  &  
%% Ordinary variables, BG simp aggressive
\bfseries 12 &
%% Abstraction variables, BG simp cautious
 6  &  
%% Abstraction variables, BG simp aggressive
 7 &
%% Any
 13 &
%% Auto
12
    \\
%% Category:
%% Too simple - only has 0.00 rating problems
%     GEG & 
% %% Nr problems:
% 5 &  
% %% Abstraction variables, BG simp cautious
% 5   &  
% %% Abstraction variables, BG simp aggressive
% 5  &  
% %% Ordinary variables, BG simp cautious
% 5  &  
% %% Ordinary variables, BG simp aggressive
% 5 \\
%% Category:
MSC & 
%% Nr problems:
3 &  
%% Abstraction variables, BG simp cautious
3   &  
%% Abstraction variables, BG simp aggressive
3 &  
%% Ordinary variables, BG simp cautious
 3 &  
%% Ordinary variables, BG simp aggressive
3 &
%% Any
 3 &
%% Auto
3 \\
%% Category:
NUM & 
%% Nr problems:
36 &  
%% Ordinary variables, BG simp cautious
 30 &  
%% Ordinary variables, BG simp aggressive
 29 &
%% Abstraction variables, BG simp cautious
\bfseries  34 &  
%% Abstraction variables, BG simp aggressive
 \bfseries 34 &
%% Any
 34 &
%% Auto
34 \\
%% Category:
PUZ & 
%% Nr problems:
1 &  
%% Ordinary variables, BG simp cautious
1 &  
%% Ordinary variables, BG simp aggressive
1 &
%% Abstraction variables, BG simp cautious
  1 &  
%% Abstraction variables, BG simp aggressive
 1 &
%% Any
 1 &
%% Auto
1 \\
%% Category:
SEV & 
%% Nr problems:
2 &  
%% Ordinary variables, BG simp cautious
0  &  
%% Ordinary variables, BG simp aggressive
0 &
%% Abstraction variables, BG simp cautious
0   &  
%% Abstraction variables, BG simp aggressive
0  &
%% Any
0  &
%% Auto
0 \\
%% Category:
SWV & 
%% Nr problems:
1 &  
%% Ordinary variables, BG simp cautious
1 &  
%% Ordinary variables, BG simp aggressive
1 & 
%% Abstraction variables, BG simp cautious
1  &  
%% Abstraction variables, BG simp aggressive
1 &
%% Any
1 &
%% Auto
1 \\
%% Category:
SWW & 
%% Nr problems:
244 &  
%% Ordinary variables, BG simp cautious
91 &  
%% Ordinary variables, BG simp aggressive
88 & 
%% Abstraction variables, BG simp cautious
\bfseries 92 &  
%% Abstraction variables, BG simp aggressive
 89 &
%% Any
 97 &
%% Auto
95 \\
%% Category:
SYO & 
%% Nr problems:
1 &  
%% Abstraction variables, BG simp cautious
  0 &  
%% Abstraction variables, BG simp aggressive
 0 &  
%% Ordinary variables, BG simp cautious
 0 &  
%% Ordinary variables, BG simp aggressive
0 &
%% Any
 0 &
%% Auto
0 \\[0.5ex]
\itshape Total & 755 & 419 & 471 &  \bfseries 490 &  \bfseries 490 & 511 &  501 \\\hline
 \end{tabular}
% \medskip
\end{table}
Table~\ref{tab:results-theorem} is a breakdown of \beagle's performance by TPTP
problem categories and four flag settings. \beagle features a host of flags for
controlling its search, but in Table~\ref{tab:results-theorem}  we varied only the two most influential
ones: one that controls whether input arithmetic variables are taken as ordinary
variables or as abstraction variables. (Sect.~\ref{sec:weak-abstraction} discusses the
trade-off between these two kinds of variables.) The other controls whether 
simplification of BG terms is done cautiously or aggressively.

To explain, the cautious simplification rules comprise evaluation of arithmetic terms,
\eg~$3\cdot 5$, $3 < 5$, $\alpha+1 < \alpha+1$ (equal lhs and rhs terms in inequations), and rules
for TPTP-operators, \eg, $\op{to\_rat}(5)$, $\op{is\_int}(3.5)$. For aggressive
simplification, integer sorted subterms are brought into a polynomial-like form and
are evaluated as much as possible.
For example, the term $5\cdot\alpha + \op f(3+6, \alpha\cdot 4) - \alpha\cdot 3$ becomes $2\cdot\alpha + \op
f(9, 4\cdot\alpha)$.  These conversions exploit the associativity and
commutativity laws for $+$ and $\cdot$. We refer the reader to~\cite{Baumgartner:Bax:Waldmann:Beagle:CADE:2015}
for additional aggressive simplification rules, but we note here that
aggressive simplification does not always preserve sufficient completeness. For
example, in the clause set $N = \{ \op P(1+(2+\op f(X))),\ \neg \op P(1+(X+\op f(X)))\}$
the first clause is aggressively simplified, giving $N' = \{ \op P(3+\op f(X)),\ \neg \op P(1+(X+\op f(X)))\}$. 
Both $N$ and $N'$ are LIA-unsatisfiable,
$\sgi(N) \cup \GndTh(\mathrm{LIA})$ is unsatisfiable, but 
$\sgi(N') \cup \GndTh(\mathrm{LIA})$ is satisfiable. Thus, $N$ is (trivially) sufficiently
complete while $N'$ is not. 

These two flag settings, in four combinations in total, span a range from
``most complete but larger search space'' by using
ordinary variables and cautious simplification, to ``most incomplete but smaller
search space'' by using abstraction variables and aggressive simplification.
As the results in Table~\ref{tab:results-theorem} show, the flag setting ``abstraction
variables'' solves more problems than ``ordinary variables'', but not uniformly so.
Indeed, as indicated by the ``Any'' column in Table~\ref{tab:results-theorem}, there are
problems that are solved only with either ordinary or abstraction variables.

Some more specific comments, by problem categories:

\paragraph{ARI} Of the 362 solved problems, 14 are not solved in every setting. Of these,
four problems require cautious simplification, and five problems require aggressive
simplification. This is independent from whether abstraction or ordinary variables are
used.

\paragraph{DAT} The DAT category benefits significantly from using ordinary
variables. There is only one problem, DAT075=1.p, that is 
not solved with ordinary variables.
Two problems, DAT072=1.p and DAT086=1.p are solvable only with ordinary
variables and aggressive simplification. 

Many problems in the DAT category, including DAT086=1.p, state \emph{existentially
  quantified} theorems about data structures such as arrays and lists. If they are of
an arithmetic sort, these existentially quantified variables must be taken as
ordinary variables. This way, they
can be unified with BG-sorted FG terms such as  
$\op{head}(\op{cons}(x, y))$ (which appear in the list axioms)  
which might be necessary for getting a refutation at all. 

A trivial example for this phenomenon is the entailment $\{\op P(\op f(1))\} \models \exists x\
\op P(x)$, where $\op f$ is BG-sorted, which is provable only with ordinary variables.

\paragraph{NUM} This category requires abstraction variables. With it, four
of the problems can be solved in the NUM category (NUM859=1.p, NUM860=1.p, NUM861=1.p,
NUM862=1.p), as the search space with ordinary variables is too big.

\paragraph{SWW} By and large, cautious BG simplification fares slightly better on the SWW
problems. Of the 97 problems solved, 16 are not solved in every
setting, and the settings that do solve it do not follow an obvious pattern. 

\newcommand{\circnr}[1]{\FPeval\result{clip(171+#1)}\ding{\result}}

We were also interested in \beagle's performance, on the same problems, broken
down by the calculus features introduced in this paper.
Table~\ref{tab:results-rating}
summarizes our findings for five configurations \circnr{1}-\circnr{5} obtained by
progressively enabling these features. In order to assess the usefulness of the features we filtered the
results by problem rating. The column ``$\geq 0.75$'', for instance, lists the number of
solved problems, of all 80 known ``theorem'' or ``unsatisfiable'' problems with a rating 0.75 or higher
and that involve some form of arithmetic.
\begin{table}[htb]
\caption{Number of ``theorem'' or ``unsatisfiable'' problems solved, by calculus features and problem rating, excluding the HWV-problems.}
\label{tab:results-rating}
  \centering
  \begin{tabular}{c@{\qquad}lr@{\qquad}r@{\qquad}r@{\qquad}r@{\qquad}r}
& & & \multicolumn{4}{c}{Rating, \# Problems}\\\cline{4-7}
& \rule{0pt}{3ex} &  &  $\ge 0.1$ & $\ge 0.5$ & $\ge 0.75$ & $\ge 0.88$ \\
& Abstraction & Feature & 756 & 187 & 80 & 55 \\[1ex]\hline
\circnr{1} & \rule{0pt}{3ex} Standard &  N/A & 355 & 30 & 5 & 1 \\
\circnr{2} &              & + \IR{Define} & 493 & 38 & 5 & 1 \\[1ex]\hline
\circnr{3} & \rule{0pt}{3ex} Weak & + \IR{Define} & 490 & 40& 5 & 1 \\[1ex]
\circnr{4} & \rule{0pt}{3ex}  & \mlbox[b]{+ \IR{Define} \\ + Ordinary vars} & 500 & 44 & 5 & 1  \\[1ex]
\circnr{5} & \rule{0pt}{3ex}               & \mlbox[b]{+ \IR{Define} \\ + Ordinary vars \\
    + BG  simp aggressive } & 511 & 45 & 5 & 1 \\
\hline
  \end{tabular}
% \medskip
\end{table}

The predecessor calculus
of~\cite{Bachmair:Ganzinger:Waldmann:TheoremProvingHierarchical:AAECC:94} uses an 
exhaustive abstraction mechanism that turns every side of an equation into either a pure
BG or pure FG term. All BG variables are always abstraction variables.
Configuration \circnr{1} implements this calculus, with the only deviation of an added 
splitting rule. The splitting rule~\cite{DBLP:conf/cade/WeidenbachDFKSW09}  breaks apart a
clause into variable-disjoint parts and 
leads to a branching search space for finding corresponding sub-proofs. See
again~\cite{Baumgartner:Bax:Waldmann:Beagle:CADE:2015} for more details.

In our experiments splitting is always enabled, in particular also for configuration
\circnr{1} for better comparability of result.
Cautious BG simplification is enabled for configuration \circnr{1} and the subsequent
configurations \circnr{2}-\circnr{4}. 

Configuration \circnr{2} differs from configuration \circnr{1} only by an additional
\IR{Define} rule. (As said earlier, the \IR{Define} rule can be added without
problems to the previous calculus.)
By comparing the results for \circnr{1} and
\circnr{2} it becomes obvious that adding \IR{Define} improves performance
dramatically. This applies to the new calculus as well.  The \IR{Define} rule 
stands out and should always be enabled.

Configuration \circnr{3} replaces the standard abstraction mechanism
of~\cite{Bachmair:Ganzinger:Waldmann:TheoremProvingHierarchical:AAECC:94} by the new
weak abstraction mechanism of Sect.~\ref{sec:weak-abstraction}. Weak abstraction seems more
effective than standard abstraction for problems with a higher rating, but the data
set supporting this conclusion is very small.

There are five problems, all from the SWW category\footnote{SWW583=2.p, SWW594=2.p,
  SWW607=2.p, SWW626=2.p, SWW653=2.p and SWW657=2.p} that re solved \emph{only} with
configuration \circnr{2}, and there is one problem, SWW607=2.p, that is solved only by
configurations \circnr{1} and \circnr{2}.

There are four solvable problems with rating 0.75. These are ARI595=1.p -- ARI598=1.p,
which are ``simple'' problems involving a 
free predicate symbols over the integer background theory. The problem ARI595=1.p, for
instance, is to prove the validity of the formula
$(\forall\, z{\,:\,}\mathds{Z}\ a \le z \land z \le a+2 \rightarrow p(z)) \rightarrow \exists\, x{\,:\,}\mathds{Z}\ p(3\cdot x)$.\footnote{At the time of this writing, there are only four provers (including \beagle) registered
  with the TPTP web infrastructure that can solve these problems. Hence the rating 0.75.}
The calculus and implementation techniques needed for solving such problems are
rather different to those needed for solving combinatory problems involving trivial
arithmetics only, like, e.g., the HWV-problems.

Configuration \circnr{4} is the same as \circnr{3} except that it includes the
results for general variables instead of abstraction variables. 
Similarly, configuration \circnr{5} is the same as \circnr{4} except that it includes the
results for aggressive BG simplification. It is the union of all results in
Table~\ref{tab:results-theorem}. 

For comparison with other implemented theorem provers for first-order logic with
arithmetics, we ran \beagle on the problem set used in  the 2018 edition of the CADE ATP system competition
(CASC-J9).\footnote{\url{http://tptp.cs.miami.edu/~tptp/CASC/J9/}}.  The competing
systems were CVC4~\cite{barrett2011cvc4},
Princess~\cite{Ruemmer:SequentCalculusLIA:LPAR:2008}, and
two versions of Vampire~\cite{DBLP:conf/cav/KovacsV13}. 

In the competition, the systems were given 200 problems from the TPTP problem library, 125
problems over the 
integers as the background theory (TFI category), and 75 over the reals (TFE category).
The system that solves the most problems in the union of the TFI and TFE categories within
a CPU time limit of 300 sec wins. We applied \beagle in an ``auto'' mode, which
time-slices (at most) three parameter settings. These differ mainly in their use of
abstraction variables or ordinary variables, and the addition of certain arithmetic
lemmas. 

\begin{table}[htb]
\caption{CADE ATP system competition results 2018 and \beagle's performance on the
  same problem sets.}
\label{tab:CASC}
  \centering
  \begin{tabular}{l@{\quad}r@{\quad}r@{\quad}r@{\quad}r@{\quad}r@{\quad}r}
    & 
\mlbox{Vampire\\ \footnotesize 4.3} &
\mlbox{Vampire\\ \footnotesize 4.1} &
\mlbox{CVC4\\ \footnotesize 1.6pre} & 
\mlbox{Princess\\ \footnotesize 170717} & 
\mlbox{Beagle\\ \footnotesize 0.9.51} \\[1ex]\hline
\rule{0pt}{3ex}    \#Solved TFI (of 125) & 93 & 98 & 85 & 62 & 36 \\
\rule{0pt}{3ex}    \#Solved TFE (of 75) &  70 & 64  & 72  & 43  & 44 \\
    \rule{0pt}{3ex}    \#Solved TFA (of 200) &  163 & 162  & 157  & 105  & 70 \\
    \hline
  \end{tabular}
\end{table}
The results are summarized in Table~\ref{tab:CASC}. We note that \beagle was run on
different hardware but the same  timeout of 300 seconds. The
results are thus only indicative of \beagle's performance, but we do not expect
significantly different result  had it participated.
In the TFI category, of the 36 problems solved, 5 require the use of ordinary variables.
In the TFE category, 16 problems involve the ceiling or floor function, which is currently
not implemented, and hence cannot be attempted.

In general, many problems used in the competition
are rather large in size or search space and would require a more sophisticated implementation
of \beagle. 

\section{Conclusions}
\label{sec:conclusions}
The main theoretical contribution of this paper is an improved variant of the
hierarchic superposition calculus. One improvement
over its
predecessor~\cite{Bachmair:Ganzinger:Waldmann:TheoremProvingHierarchical:AAECC:94}  
is a different form of ``abstracted'' clauses, the clauses the calculus
works with internally. Because of that, a modified completeness proof is required.
We have argued informally for the
benefits over the old calculus
in~\cite{Bachmair:Ganzinger:Waldmann:TheoremProvingHierarchical:AAECC:94}. They
concern 
making the calculus ``more complete'' in practice. It is hard to quantify that exactly
in a general way, as completeness is impossible to achieve in presence of
background-sorted foreground function symbols (\eg, ``head'' of integer-sorted
lists). To compensate for that to some degree, we have reported on initial experiments with a
prototypical implementation on the TPTP problem library. 
These experiments clearly
indicate the benefits of our concepts, in particular the 
definition rule and the use of ordinary variables. There is no problem that is solved only
by the old calculus only. Certainly
more experimentation and an improved implementation is needed to also solve
bigger-sized problems with a larger combinatorial search space.

We have also obtained two new completeness results for certain clause logic fragments
that do not require compactness of the background specification, cf.\
Sect.~\ref{sec:gbt} and Sect.~\ref{sec:linear-arithmetic}. The former is loosely related 
to the decidability results in~\cite{Kruglov:Weidenbach:MACIS:2012}, as discussed in
Sect.~\ref{sec:define}. It is also loosely related to results in SMT-based theorem
proving. For instance, the method in~\cite{Ge:DeMoura:CompleteInstantiation:CAV:2009} 
deals with the case that variables appear only as arguments of, in our words,
foreground operators. It works by ground-instantiating all variables in order to being
able to use an SMT-solver for the quantifier-free fragment. Under certain conditions,
finite ground instantiation is possible and the method is complete, otherwise it is
complete only modulo compactness of the background theory (as expected).
Treating different fragments, the theoretical results are mutually non-subsuming with
ours. Yet, on the fragment they consider we could adopt their technique of finite ground
instantiation before applying Thm.~\ref{thm:hsp-gbt-complete} (when it applies).
However, according to Thm.~\ref{thm:hsp-gbt-complete} our calculus needs
instantiation of \emph{background-sorted variables only}, this way keeping reasoning
with foreground-sorted terms on the first-order level, as usual with superposition.

\bibliographystyle{myabbrv}
\bibliography{bibliography}
%% \bibliography{peter,logic}

\end{document}